\newtheorem{theorem}{Theorem}
\newtheorem{lemma}[theorem]{Lemma}
\newtheorem{corollary}[theorem]{Corollary}
\newtheorem{definition}[theorem]{Definition}
\newcommand{\specialcell}[2][c]{%
  \begin{tabular}[#1]{@{}c@{}}#2\end{tabular}}
\newcommand{\R}{\mathbb{R}}
\newcommand{\E}{\mathbb{E}}
\let\Pr\relax
\DeclareMathOperator*{\Pr}{\mathbb{P}}
\newcommand{\variance}{\mathrm{Var}}
\newcommand{\norm}[1]{\|#1\|}
\newcommand{\normTwo}[1]{\norm{#1}_{2}}
\newcommand{\normFull}[1]{\left\|#1\right\|}
\newcommand{\normFro}[1]{\norm{#1}_{\mathrm{F}}}
\newcommand{\abs}[1]{\lvert#1\rvert}
\newcommand{\inprod}[1]{\left\langle #1 \right\rangle}
\newcommand{\gradient}{\bigtriangledown}
\newcommand{\grad}{\gradient}
\newcommand{\hessian}{\gradient^{2}}
\newcommand{\hess}{\hessian}
\DeclareMathOperator*{\argmin}{arg\,min}
\DeclareMathOperator{\rank}{rank}
\DeclareMathOperator{\nnz}{nnz}
\newcommand{\bv}[1]{\mathbf{#1}}
\newcommand{\wt}{\widetilde}
\newcommand{\poly}{\mathop\mathrm{poly}}
\newcommand{\otilde}{\widetilde{O}}
\newcommand{\ma}{\mvar{a}}
\newcommand{\mb}{\mvar{b}}
\newcommand{\mSigma}{\mvar{\Sigma}}
\newcommand{\eqdef}{\mathbin{\stackrel{\rm def}{=}}}
\newcommand{\dist}{\mathcal{D}}
\newcommand{\supp}{\mathrm{supp}}
\newcommand{\sk}[1]{\noindent{\textcolor{cyan}{{\bf sk:} \em #1}}}
\newcommand{\gap}{\mathrm{gap}}
\DeclareMathOperator{\nrank}{sr}
\DeclareMathOperator{\nvar}{v}
\newcommand{\rayquot}{\mathrm{quot}}
\newcommand{\ssvrgstep}{\mathrm{ssvrg\_iter}}
\newcommand{\opt}[1]{{#1}^{\mathrm{opt}}}
\global\long\def\boldVar#1{\mathbf{#1}}
\global\long\def\mvar#1{\boldVar{#1}}
\global\long\def\defeq{\stackrel{\mathrm{{\scriptscriptstyle def}}}{=}}
 \renewcommand{\norm}[1]{\left\|#1\right\|}
\global\long\def\ma{\mvar A}
 \global\long\def\mb{\mvar B}
\global\long\def\mI{\mvar I}
\global\long\def\abs#1{\left|#1\right|}
\global\long\def\ceil#1{\left\lceil #1 \right\rceil }
\global\long\def\lamtilij#1#2{\widetilde{\lambda}_{#2}^{\left(#1\right)}}
\global\long\def\lamhatij#1#2{\widehat{\lambda}_{#2}^{\left(#1\right)}}
\global\long\def\lamj#1{\lambda_{#1}}
\global\long\def\lambari#1{\overline{\lambda}^{\left(#1\right)}}
\newcommand{\expec}[1]{\mathbb{E}\left[#1\right]}
\newcommand{\prob}[1]{\mathbb{P}\left[#1\right]}
\newcommand{\xtilde}{\widetilde{x}}
\newcommand{\lamiBinv}[1]{\lambda_{#1}\left(\mb^{-1}\right)}
\newcommand{\lambdah}{\widehat{\lambda}}
\newcommand{\xhat}{\widehat{x}}
\newcommand{\mcalF}{\mathcal{F}}
\newcommand{\mcalG}{\mathcal{G}}
\newcommand{\rayquoth}[1]{\widehat{\mathrm{quot}}\left(#1\right)}
\newcommand{\solve}[1]{\mathrm{solve}\left(#1\right)}
\newcommand{\set}[1]{\left\{#1\right\}}
\newcommand{\mM}{\mathbf{M}}
\begin{document}

\title{Faster Eigenvector Computation via \\Shift-and-Invert Preconditioning~\footnote{This paper combines work first appearing in \cite{garber2015fast} and \cite{jin2015robust}}}% \\ \large \textbf{(Full Paper)}}

\date{}
\author{Dan Garber \\ Toyota Technological Institute at Chicago \\ \texttt{dgarber@ttic.edu}
\and
Elad Hazan \\ Princeton University \\ \texttt{ehazan@cs.princeton.edu}
\and
Chi Jin \\ UC Berkeley \\ \texttt{chijin@eecs.berkeley.edu}
\and
Sham M. Kakade \\ University of Washington \\ \texttt{sham@cs.washington.edu}
\and
Cameron Musco\\MIT\\ \texttt{cnmusco@mit.edu}
\and
Praneeth Netrapalli \\ Microsoft Research, New England \\ \texttt{praneeth@microsoft.com}
\and
Aaron Sidford \\ Microsoft Research, New England \\ \texttt{asid@microsoft.com} }
\maketitle

\begin{abstract}
	We give faster algorithms and improved sample complexities for estimating the top eigenvector of a matrix $\bv{\Sigma}$ -- i.e. computing a unit vector $x$ such that $x^\top \bv{\Sigma}x \ge (1-\epsilon)\lambda_1(\bv{\Sigma})$:
	
	\begin{itemize}
		\item \textbf{Offline Eigenvector Estimation:}  Given an explicit $\bv{A} \in \R^{n \times d}$ with $\bv{\Sigma} = \bv{A}^\top \bv{A}$, we show how to compute an $\epsilon$ approximate top eigenvector in time $\otilde \left(\left [\nnz(\bv{A}) + \frac{d\nrank(\bv{A})}{\gap^2}\right ]\cdot \log 1/\epsilon \right )$ and $\otilde\left(\left [\frac{\nnz(\bv A)^{3/4} (d\nrank(\bv{A}))^{1/4}}{\sqrt{\gap}}\right ]\cdot \log1/\epsilon \right )$. Here $\nnz(\bv{A})$ is the number of nonzeros in $\bv{A}$, $\nrank(\bv{A}) \eqdef \frac{\normFro{\bv{A}}^2}{\normTwo{\bv{A}}^2}$ is the stable rank, $\gap$ is the relative eigengap, and $\tilde O(\cdot)$ hides log factors in $d$ and $\gap$. By separating the $\gap$ dependence from the $\nnz(\bv{A})$ term, our first runtime improves upon the classical power and Lanczos methods. It also improves prior work using fast subspace embeddings \cite{ailon2009fast,clarkson2013low} and stochastic optimization \cite{shamir2015stochastic}, giving significantly better dependencies on $\nrank(\bv{A})$ and $\epsilon$. Our second running time improves these further when $\nnz(\bv{A}) \le \frac{d\nrank(\bv{A})}{\gap^2}$.
		
		\item \textbf{Online Eigenvector Estimation:} 
		Given a distribution $\dist$ with covariance matrix $\bv{\Sigma}$ and a vector $x_0$ which is an $O(\gap)$ approximate top eigenvector for $\bv{\Sigma}$, we show how to
		refine to an $\epsilon$ approximation using $ O \left (\frac{\nvar(\dist)}{\gap \cdot \epsilon} \right )$ samples from $\dist$.  Here $\nvar(\dist)$ is a natural notion of variance. 
		Combining our algorithm with previous work to initialize $x_0$, we obtain improved sample complexity and runtime results under a variety of assumptions on $\dist$. %Notably, we show that, for general distributions, we achieve asymptotically optimal accuracy as a function of sample size as the number of samples grows large.
	\end{itemize}
	
	We achieve our results using a general framework that we believe is of independent interest. We give a robust analysis of the classic method of \emph{shift-and-invert} preconditioning to reduce eigenvector computation to \emph{approximately} solving a sequence of linear systems. We then apply fast stochastic variance reduced gradient (SVRG) based system solvers to achieve our claims. We believe our results suggest the general effectiveness of shift-and-invert based approaches and imply that further computational gains may be reaped in practice.
\end{abstract}

\thispagestyle{empty}
\clearpage
\setcounter{page}{1}
%!TEX root = main.tex

\section{Introduction}

Given $\bv{A} \in \mathbb{R}^{n \times d}$, computing the top eigenvector
of $\bv{A}^{\top}\bv{A}$ is a fundamental problem in numerical linear algebra, applicable to principal component analysis \cite{jolliffe2002principal}, spectral clustering and learning \cite{ng2002spectral,vempala2004spectral}, pagerank computation, and many other graph computations \cite{page1999pagerank,koren2003spectral,spielman2007spectral}. For instance, a degree-$k$ principal component analysis is nothing more than performing $k$ leading eigenvector computations. 
Given the ever-growing size of modern datasets, it is thus a key challenge to come up with more efficient algorithms for this basic computational primitive.  

In this work we provide improved algorithms for computing the top eigenvector,
both in the \emph{offline} case, when $\bv{A}$ is given
explicitly and in the \emph{online} or \emph{statistical} case where we access samples from a distribution $\mathcal{D}$ over
$\mathbb{R}^{d}$ and wish to estimate the top eigenvector of the covariance matrix $\E_{a\sim\dist} \left [aa^\top \right ]$.
In the offline case, our algorithms are the fastest to date in a wide and meaningful regime of parameters. Notably, while the running time of most popular methods for eigenvector computations is a product of the size of the dataset (i.e. number of non-zeros in $\bv{A}$) and certain spectral characteristics of $\bv{A}$, which both can be quite large in practice, we present running times that actually split the dependency between these two quantities, and as a result may yield significant speedups.
In the online case, our results yield improved sample complexity bounds and allow for very efficient \textit{streaming} implementations with memory and processing-time requirements that are proportional to the size of a single sample.

On a high-level, our algorithms are based on a robust analysis of the classic idea of \emph{shift-and-invert} preconditioning \cite{saad1992numerical}, which allows us to efficiently reduce eigenvector computation to \emph{approximately} solving a \emph{short} sequence of \emph{well-conditioned} linear systems in $\lambda \bv I - \bv{A}^\top \bv {A}$ for some shift parameter $\lambda \approx \lambda_1(\bv{A})$. We then apply state-of-the-art stochastic gradient methods to approximately solve these linear systems.

\subsection{Our Approach}

The well known power method for computing the top eigenvector of $\bv{A}^\top \bv{A}$ starts with an initial vector $x$ and repeatedly multiplies by $\bv{A}^\top \bv{A}$, eventually causing $x$ to converge to the top eigenvector. For a random start vector, the power method converges in $O (\log(d/\epsilon)/\gap )$ iterations, where $\gap = (\lambda_1-\lambda_2)/\lambda_1$, $\lambda_i$ denotes the $i^{th}$ largest eigenvalue of $\bv{A}^\top \bv{A}$, and we assume a high-accuracy regime where $\epsilon < \gap$.
The dependence on this gap ensures that the largest eigenvalue is significantly amplified in comparison to the remaining values.

If the eigenvalue gap is small, one approach is replace $\bv{A}^\top \bv{A}$ with a preconditioned matrix -- i.e. a matrix with the same top eigenvector but a much larger gap.
Specifically, let $\bv{B} = \lambda \bv{ I} - \bv{A}^\top \bv{A}$ for some shift parameter $\lambda$. If $\lambda > \lambda_1$, we can see that the smallest eigenvector of $\bv{B}$ (the largest eigenvector of $\bv{B}^{-1}$) is equal to the largest eigenvector of $\bv{A}^\top \bv{A}$. Additionally, if $\lambda$ is close to $\lambda_1$, there will be a constant gap between the largest and second largest values of $\bv{B}^{-1}$. For example, if $\lambda = (1+\gap)\lambda_1$, then we will have $\lambda_1\left(\bv{B}^{-1} \right ) = \frac{1}{\lambda-\lambda_1} = \frac{1}{\gap \cdot \lambda_1}$ and $\lambda_2\left(\bv{B}^{-1} \right ) = \frac{1}{\lambda - \lambda_2} = \frac{1}{2\cdot\gap \cdot \lambda_1}$.

This constant factor gap ensures that the power method applied to $\bv{B}^{-1}$ converges to the top eigenvector of $\bv{A}^\top \bv{A}$ in just $O  (\log (d/\epsilon) )$ iterations. Of course, there is a catch -- each iteration of this \emph{shifted-and-inverted power method} must solve a linear system in $\bv{B}$, whose condition number is proportional $\frac{1}{\gap}$. For small gap, solving this system via iterative methods is more expensive.

Fortunately, linear system solvers are incredibly well studied and there are many efficient iterative algorithms we can adapt to apply $\bv{B}^{-1}$ approximately. In particular, we show how to accelerate the iterations of the shifted-and-inverted power method using variants of Stochastic Variance Reduced Gradient (SVRG) \cite{johnson2013accelerating}.
Due to the condition number of $\bv{B}$, we will not entirely avoid a $\frac{1}{\gap}$ dependence, however, we can separate this dependence from the input size $\nnz(\bv A)$.

Typically, stochastic gradient methods  are used to optimize convex functions that are given as the sum of many convex components. To solve a linear system $(\bv{M}^\top\bv{M}) x = b$ we minimize the convex function $f(x) = \frac{1}{2} x^\top (\bv{M}^\top\bv{M}) x - b^\top x$ with components $\psi_i(x) = \frac{1}{2} x^\top \left (m_i m_i^\top \right )x - \frac{1}{n}b^\top x$ where $m_i$ is the $i^{th}$ row of $\bv M$. Such an approach can be used to solve systems in $\bv{A}^\top \bv{A}$, however solving systems in $\bv{B} = \lambda \bv I - \bv A^\top \bv A$ requires more care. We require an analysis of SVRG that guarantees convergence even when some of our components are  \emph{non-convex}. We give a simple analysis for this setting, generalizing recent work in the area \cite{shalev2015sdca,csiba2015primal}.

Given fast approximate solvers for $\bv{B}$, the second main piece of our algorithmic framework is a new error bound for the shifted-and-inverted power method, showing that it is robust to approximate linear system solvers, such as SVRG. 
%This analysis is general, and we believe that it is a useful contribution to the theory of shift-and-invert preconditioning. It allows one to provable runtimes even when the shifted-and-inverted power method is implemented using linear system solvers that approximately apply $\bv{B}^{-1}$ -- essentially all linear solvers that are used on large matrices in practice. 
We give a general analysis, showing   
exactly what accuracy each system must be solved to, allowing for faster implementations using linear solvers with weaker guarantees. Our proofs center around the potential function: $G(x) \defeq
  \norm{\bv{P}_{v_1^{\perp}}x}_{\mb}/\norm{\bv{P}_{v_1}x}_{\mb}$,
where $\bv{P}_{v_1}$ and $\bv{P}_{v_1^{\perp}}$ are the projections onto the top eigenvector and its complement respectively. This function resembles tangent based potential functions used in previous work \cite{hardt2014noisy} except that we use the $\bv{B}$ norm rather than the $\ell_2$ norm. For the exact power method, this is irrelevant -- progress is identical in both norms (see Lemma \ref{same_progress} of the Appendix). However, $\norm{\cdot}_\mb$ is a natural norm for measuring the progress of linear system solvers for $\mb$, so our potential function makes it possible to extend analysis to the case when $\bv{B}^{-1}x$ is computed up to error $\xi$ with bounded $\norm{\xi}_\mb$.

\subsection{Our Results}

Our algorithmic framework described above offers several advantageous. We obtain improved running times for computing the top eigenvector in the offline model. In Theorem \ref{main_offline_theorem} we give an algorithm running in time $O\left(\left [\nnz(\bv{A}) + \frac{d \nrank \bv{A}}{\gap^2}\right ] \cdot \left [ \log \frac{1}{\epsilon} + \log^2 \frac{d}{\gap} \right ]\right)$, where $\nrank(\bv{A}) = \norm{\bv A}_F^2 / \norm{\bv A}_2^2 \le \rank(\bv{A})$ is the stable rank and $\nnz(\bv{A})$ is the number of non-zero entries. Up to log factors, our runtime is in many settings proportional to the input size $\nnz(\bv{A})$, and so is very efficient for large matrices. In the case when $\nnz(\bv{A}) \le \frac{d\nrank(\bv{A})}{\gap^2}$ we apply the results of \cite{frostig2015regularizing, lin2015catalyst} to provide an accelerated runtime of $O\left(\left [\frac{\nnz(\bv{A})^{3/4}(d \nrank (\bv{A}))^{1/4}}{\sqrt{\gap}}\right ] \cdot \left [\log \frac{d}{\gap} \log \frac{1}{\epsilon} + \log^3 \frac{d}{\gap} \right ]\right)$, shown in Theorem \ref{accelerated_offline_theorem}. Finally, in the case when $\epsilon > \gap$, our results easily extend to give gap-free bounds (Theorems \ref{main_gapfree_theorem} and \ref{acell_gapfree_theorem}), identical to those shown above but with $\gap$ replaced by $\epsilon$. Note that our offline results hold for any $\bv{A}$ and require no initial knowledge of the top eigenvector. In Section \ref{parameter_free} we discuss how to estimate the parameters $\lambda_1$, $\gap$, with modest additional runtime cost.

Our algorithms return an approximate top eigenvector $x$ with $x^\top\bv{A}^\top \bv{A} x \ge (1-\epsilon)\lambda_1$. By choosing error $\epsilon \cdot \gap$, we can ensure that $x$ is actually close to $v_1$ -- i.e. that $|x^\top v_1| \ge 1-\epsilon$. Further, we obtain the same asymptotic runtime since $O\left (\log \frac{1}{\epsilon \cdot \gap} + \log^2 \frac{d}{\gap} \right ) = O\left (\log \frac{1}{\epsilon} + \log^2 \frac{d}{\gap} \right )$. We compare our runtimes with previous work in Table \ref{offline_table}. 

In the online case, in Theorem \ref{warmstart_online_theorem}, we show how to improve an $O(\gap)$ approximation to the top eigenvector to an $\epsilon$ approximation with constant probability using $O\left (\frac{\nvar(\dist)}{\gap \cdot \epsilon}\right)$ samples where $\nvar(\dist)$ is a natural variance measure. Our algorithm is based on the streaming SVRG algorithm of \cite{frostig2014competing}. It requires just $O(d)$ amortized time per sample, uses just $O(d)$ space, and is easily parallelized. We can apply our result in a variety of regimes, using existing algorithms to obtain the initial $O(\gap)$ approximation and our algorithm to improve. As shown in Table \ref{online_table}, this gives improved runtimes and sample complexities over existing work. Notably, we give better asymptotic sample complexity than known matrix concentration results for general distributions, and give the first streaming algorithm that is asymptotically optimal in the popular Gaussian spike model.

%Our bounds hold for \emph{any} $\bv{A}$ or $\mathcal{D}$ and in the offline case we require no initial knowledge of $\lambda_1$, $gap$, or the top eigenvector.
% We are hopeful that our online algorithm can also be made to work without such estimates.
Overall, our robust shifted-and-inverted power method analysis gives new understanding of this classical technique. It gives a means of obtaining provably accurate results when each iteration is implemented using fast linear system solvers with weak accuracy guarantees.
In practice, this reduction between approximate linear system solving and eigenvector computation shows that optimized regression libraries can be leveraged for faster eigenvector computation in many cases. Furthermore, in theory we believe that the reduction suggests computational limits inherent in eigenvector computation as seen by the often easier-to-analyze problem of linear system solving. Indeed, in Section \ref{sec:lower}, we provide evidence that in certain regimes our statistical results are optimal.

%We remark that during the preparation of our manuscript we found that
%previously and independently Dan Garber and Elad Hazan had discovered
%a similar technique using shift-and-invert preconditioning and SVRG for sums
%of non-convex functions to improve the running time for offline eigenvector computation \cite{garber2015fast}.

\subsection{Previous Work}

\subsubsection*{Offline Eigenvector Computation}\label{previous_work_offline}

Due to its universal applicability, eigenvector computation in the offline case is extremely well studied. Classical methods, such as the QR algorithm, take roughly $O(nd^2)$ time to compute a full eigendecomposition. This can be accelerated to $O(nd^{\omega - 1})$, where $\omega < 2.373$ is the matrix multiplication constant \cite{williams2012multiplying,le2014powers}, however this is still prohibitively expensive for large matrices. Hence, faster iterative methods are often employed, especially when only the top eigenvector (or a few of the top eigenvectors) is desired.

As discussed, the popular power method requires $O \left ( \frac{\log(d/\epsilon)}{\gap} \right )$ iterations to converge to an $\epsilon$ approximate top eigenvector. Using Chebyshev iteration, or more commonly, the Lanczos method, this bound can be improved to $O \left ( \frac{\log(d/\epsilon)}{\sqrt{\gap}} \right )$ \cite{saad1992numerical}, giving total runtime of $O \left (\nnz(\bv A) \cdot \frac{\log(d/\epsilon)}{\sqrt{\gap}} \right )$. When $\epsilon > \gap$, the $\gap$ terms in these runtimes can be replaced by $\epsilon$. While we focus on the high-precision regime when $\epsilon < \gap$, we also give gap-free bounds in Section \ref{sec:gapfree}.

Unfortunately, if $\nnz(\bv{A})$ is very large and $\gap$ is small, the above runtimes can still be quite expensive, and there is a natural desire to separate the $\frac{1}{\sqrt{\gap}}$ dependence from the $\nnz(\bv A)$ term. One approach is to use random subspace embedding matrices \cite{ailon2009fast,clarkson2013low} or fast row sampling algorithms \cite{cohen2015uniform}, which can be applied in $ O(\nnz(\bv A))$ time and yield a matrix $\bv{\tilde A}$ which is a good spectral approximation to the original. The number of rows in $\bv{\tilde A}$ depends only on the stable rank of $\bv A$ and the error of the embedding -- hence it can be significantly smaller than $n$. Applying such a subspace embedding and then computing the top eigenvector of $\bv{\tilde A}^\top \bv{\tilde A}$ requires runtime $O \left (\nnz(\bv A) + \poly(\nrank(\bv{A}),\epsilon,\gap) \right )$, achieving the goal of reducing runtime dependence on the input size $\nnz(\bv A)$. Unfortunately, the dependence on $\epsilon$ is significantly suboptimal -- such an approach cannot be used to obtain a linearly convergent algorithm. Further, the technique does not extend to the online setting, unless we are willing to store a full subspace embedding of our sampled rows. 

Another approach, which we follow more closely, is to apply stochastic optimization techniques, which iteratively update an estimate to the top eigenvector, considering a random row of $\bv{A}$ with each update step. Such algorithms naturally extend to the online setting and have led to improved dependence on the input size for a variety of problems \cite{bottou2010large}. Using variance-reduced stochastic gradient techniques, \cite{shamir2015stochastic} achieves runtime $ O\left (\left (\nnz(\bv A) + \frac{dr^2n^2}{\gap^2\lambda_1^2} \right ) \cdot \log(1/\epsilon) \log\log(1/\epsilon) \right )$ for approximately computing the top eigenvector of a matrix with constant probability.  Here $r$ is an upper bound on the squared row norms of $\bv{A}$. In the \emph{best case}, when row norms are uniform, this runtime can be simplified to $O\left (\left (\nnz(\bv A) + \frac{d\nrank(\bv A)^2}{\gap^2} \right ) \cdot \log(1/\epsilon) \log\log(1/\epsilon) \right )$.

The result in \cite{shamir2015stochastic} makes an important contribution in separating input size and gap dependencies using stochastic optimization techniques.
Unfortunately, the algorithm requires an approximation to the eigenvalue gap and a starting vector that has a constant dot product with the top eigenvector. In \cite{shamir2015fast} the analysis is extended to a random initialization, however loses polynomial factors in $d$. Furthermore, the dependencies on the stable rank and $\epsilon$ are suboptimal -- we improve them to $\nrank(\bv A)$ and $\log(1/\epsilon)$ respectively, obtaining true linear convergence.

\begin{table}[h]
\def\arraystretch{1}
\begin{center}
\begin{tabular}{|>{\centering}m{6.5cm}|c|}
\hline
\textbf{Algorithm}  & \textbf{Runtime}\\
\hline
Power Method & $O\left (\nnz(\bv{A})\frac{\log(d/\epsilon)}{\gap} \right )$ \\
\hline
Lanczos Method & $O\left (\nnz(\bv{A})\frac{\log(d/\epsilon)}{\sqrt{\gap}} \right )$ \\
\hline
Fast Subspace Embeddings \cite{clarkson2013low} Plus Lanczos  & $O\left (\nnz(\bv{A}) + \frac{d\nrank(\bv{A})}{\max \left \{ \gap^{2.5} \epsilon, \epsilon^{2.5} \right \} } \right )$ \\
\hline
SVRG \cite{shamir2015stochastic} (assuming bounded row norms, warm-start)& $O\left (\left (\nnz(\bv A) + \frac{d\nrank(\bv A)^2}{\gap^2} \right ) \cdot \log(1/\epsilon) \log\log(1/\epsilon) \right )$\\
\hline
\textbf{Theorem \ref{main_offline_theorem}} & $O\left(\left [\nnz(\bv{A}) + \frac{d \nrank(\bv{A})}{\gap^2}\right ] \cdot \left [\log \frac{1}{\epsilon} + \log^2 \frac{d}{\gap} \right ]\right)$\\
\hline
\textbf{Theorem \ref{accelerated_offline_theorem}} & $O\left(\left [\frac{\nnz(\bv{A})^{3/4}(d \nrank(\bv{A}))^{1/4}}{\sqrt{\gap}}\right ] \cdot   \left[\log \frac{d}{\gap} \log \frac{1}{\epsilon} + \log^3 \frac{d}{\gap} \right ]\right)$\\
\hline
\end{tabular}
\caption{Comparision to previous work on Offline Eigenvector Estimation. We give runtimes for computing a unit vector $x$ such that $x^\top \bv{A}^\top\bv{A} x \ge (1-\epsilon)\lambda_1$ in the regime $\epsilon = O(\gap)$.}
\label{offline_table}
\end{center}
\vspace{-2.5em}
\end{table}

\subsubsection*{Online Eigenvector Computation}

While in the offline case the primary concern is computation time, in the online, or statistical setting, research also focuses on minimizing the number of samples that are drawn from $\dist$ in order to achieve a given accuracy. Especially sought after are results that achieve asymptotically optimal accuracy as the sample size grows large.

While the result we give in Theorem \ref{warmstart_online_theorem} works for any distribution parameterized by a variance bound, in this section, in order to more easily compare to previous work, we normalize $\lambda_1 = 1$ and assume we have the row norm bound $\norm{a}_2^2 \le O(d)$ which then gives us the variance bound $\norm{\E_{a\sim\dist}\left[(aa^\top)^2\right]}_2 = O(d)$. Additionally, we compare runtimes for computing some $x$ such that $|x^\top v_1| \ge 1-\epsilon$, as this is the most popular guarantee studied in the literature. Theorem \ref{warmstart_online_theorem} is easily extended to this setting as obtaining $x$ with $x^T \bv{AA}^\top x \ge (1-\epsilon\cdot \gap) \lambda_1$ ensures $|x^\top v_1| \ge 1-\epsilon$. Our algorithm requires $O\left ( \frac{d}{\gap^2 \epsilon} \right )$ samples to find such a vector under the assumptions given above.

 The simplest algorithm in this setting is to take $n$ samples from $\dist$ and compute the leading eigenvector of the empirical estimate $\widehat\E[a a^\top] = \frac{1}{n} \sum_{i=1}^n a_i a_i^\top$. By a matrix Bernstein bound, such as inequality of Theorem 6.6.1 of \cite{tropp2015introduction}, $O\left ( \frac{d\log d}{\gap^2 \epsilon} \right )$ samples is enough to insure $\norm{\widehat\E[a a^\top] - \E[a a^\top]}_2 \le \sqrt{\epsilon}\cdot \gap$. 
By Lemma \ref{spectral_error_conversion} in the Appendix, this gives that, if $x$ is set to the top eigenvector of $\widehat\E[a a^\top] $ it will satisfy $|x^\top v_1| \ge 1-\epsilon$. $x$ can be approximated with any offline eigenvector algorithm.

A large body of work focuses on improving this simple algorithm, under a variety of assumptions on $\dist$. A common focus is on obtaining \emph{streaming algorithms}, in which the storage space is just $O(d)$ - proportional to the size of a single sample.
In Table \ref{online_table} we give a sampling of results in this area. All listed results rely on distributional assumptions at least as strong as those given above.

Note that, in each setting, we can use the cited algorithm to first compute an $O(\gap)$ approximate eigenvector, and then refine this approximation to an $\epsilon$ approximation using $O\left ( \frac{d}{\gap^2 \epsilon} \right )$ samples by applying our streaming SVRG based algorithm. 
This allows us to obtain improved runtimes and sample complexities. %Notably, by the lower bound shown in Section \ref{sec:lower}, in all settings considered in Table \ref{online_table}, we achieve optimal asymptotic sample complexity - as our sample size grows large, our $\epsilon$ decreases at an optimal rate. 
To save space, we do not include our improved runtime bounds in Table \ref{online_table}, however they are easy to derive by adding the runtime required by the given algorithm to achieve $O(\gap)$ accuracy, to $O\left (\frac{d^2}{\gap^{2} \epsilon}\right)$ -- the runtime required by our streaming algorithm.

The bounds given for the simple matrix Bernstein based algorithm described above,  Krasulina/Oja's Algorithm \cite{balsubramani2013fast}, and SGD \cite{shamir2015convergence}  require no additional assumptions, aside from those given at the beginning of this section.
The streaming results cited for \cite{mitliagkas2013memory} and \cite{hardt2014noisy} assume $a$ is generated from a
Gaussian spike model, where $a_i = \sqrt{\lambda_1}\gamma_i{v_1} + Z_i$
and $\gamma_i \sim \mathcal{N}(0, 1), Z_i \sim \mathcal{N}(0, I_d)$. We note that under this model, the matrix Bernstein results improve by a $\log d$ factor and so match our results in achieving asymptotically optimal convergence rate. The results of \cite{mitliagkas2013memory} and \cite{hardt2014noisy} sacrifice this optimality in order to operate under the streaming model. Our work gives the best of both works -- a streaming algorithm giving asymptotically optimal results. 

The streaming Alecton algorithm \cite{sa2015global} assumes $\E\norm{aa^\top \bv W a a^\top} \le O(1)\text{tr}(\bv W)$ for any symmetric $\bv W$ that commutes with $\E aa^\top$. This is strictly stronger than our assumption that 
\\$\norm{\E_{a\sim\dist} \left[(aa^\top)^2\right]}_2 = O(d)$.

\begin{table}[h]
\def\arraystretch{1}
\small
\begin{center}
%\begin{tabular}{|>{\centering}m{4.0cm}|>{\centering}m{1.8cm}|c|>{\centering}m{1.6cm}|>{\centering}m{2.0cm}|>{\centering}m{2.0cm}|}
\begin{tabular}{|c|c|c|c|c|}
\hline
\textbf{Algorithm} & \specialcell{\textbf{Sample}\\ \textbf{Size}} & \textbf{Runtime} & \textbf{Streaming?} & \specialcell{\textbf{Our Sample}\\ \textbf{Complexity}}\\ 
\hline
\specialcell{Matrix Bernstein plus \\ Lanczos (explicitly forming\\ sampled matrix)} & $O\left (\frac{d\log d}{gap^2 \epsilon}\right) $ & $O\left(\frac{d^3\log d}{gap^2\epsilon}\right)$ & $\times$ & $O\left (\frac{d\log d}{gap^3} + \frac{d}{gap^2 \epsilon}\right) $ \\ 
\hline
\specialcell{Matrix Bernstein plus \\Lanczos (iteratively applying\\ sampled matrix)} & $O\left (\frac{d\log d}{gap^2 \epsilon}\right) $ & $O\left (\frac{d^2\log d\cdot \log(d/\epsilon)}{gap^{2.5}\epsilon}\right)$ & $\times$ & $O\left (\frac{d\log d}{gap^3} + \frac{d}{gap^2 \epsilon}\right) $\\ 
\hline
\specialcell{Memory-efficient PCA \\\cite{mitliagkas2013memory, hardt2014noisy}} & $O\left(\frac{d\log (d/\epsilon)}{gap^3 \epsilon}\right )$ & $O\left (\frac{d^2\log (d/\epsilon)}{gap^3 \epsilon}\right)$ &  $\surd$ & $O\left(\frac{d\log (d/\gap)}{gap^4} + \frac{d}{gap^2 \epsilon}\right )$\\ 
\hline
Alecton \cite{sa2015global} & $O(\frac{d\log (d/\epsilon)}{gap^2 \epsilon})$ & $O(\frac{d^2\log (d/\epsilon)}{gap^2 \epsilon})$ & $\surd$ & $O(\frac{d\log (d/\gap)}{gap^3} + \frac{d}{gap^2 \epsilon})$ \\ 
\hline
\specialcell{Krasulina / Oja's \\Algorithm \cite{balsubramani2013fast}} & $O(\frac{d^{c_1}}{gap^2 \epsilon^{c_2}})$ &$O(\frac{d^{c_1+1}}{gap^2 \epsilon^{c_2}})$  & $\surd$ & $O(\frac{d^{c_1}}{gap^{2+c_2}} + \frac{d}{gap^2 \epsilon})$ \\ 
\hline
SGD \cite{shamir2015convergence} & $O(\frac{d^3\log (d/\epsilon)}{\epsilon^2})$ & $O(\frac{d^4\log (d/\epsilon)}{\epsilon^2})$ & $\surd$ & $O\left (\frac{d^3\log (d/\gap)}{\gap^2} +\frac{d}{\gap^2 \epsilon} \right ) $ \\
\hline
\end{tabular}
\caption{Summary of existing work on Online Eigenvector Estimation and improvements given by our results. Runtimes are for computing a unit vector $x$ such that $|x^\top v_1| \ge 1-\epsilon$. For each of these results we can obtain improved running times and sample complexities by running the algorithm to first compute an $O(\gap)$ approximate eigenvector, and then running our algorithm to obtain an $\epsilon$ approximation using an additional $O\left ( \frac{d}{\gap^2 \epsilon} \right )$ samples, $O(d)$ space, and $O(d)$ work per sample.
	}
\label{online_table}
\end{center}
\end{table}
\vspace{-2.5em}
\subsection{Paper Organization}
\begin{description}
\item[Section \ref{prelims}] Review problem definitions and parameters for our runtime and sample bounds.
\item[Section \ref{framework}] Describe the shifted-and-inverted power method and show how it can be implemented using approximate system solvers.
\item[Section \ref{sec:offline}] Show how to apply SVRG to solve systems in our shifted matrix, giving our main runtime results for offline eigenvector computation.
\item[Section \ref{sec:online}] Show how to use an online variant of SVRG to run the shifted-and-inverted power method, giving our main sampling complexity and runtime results in the statistical setting.
\item[Section \ref{parameter_free}] Show how to efficiently estimate the shift parameters required by our algorithms.
\item[Section \ref{sec:lower}] Give a lower bound in the statistical setting, showing that our results are asymptotically optimal for a wide parameter range.
\item[Section \ref{sec:gapfree}] Give gap-free runtime bounds, which apply when $\epsilon > \gap$.
\end{description}
\section{Preliminaries}\label{prelims}

We bold all matrix variables. 
%Vector variables will always denote verticle \emph{column vectors}, with $x^\top$ used to denote the vector transpose. 
We use $[n] \defeq \{1,...,n\}$. For a symmetric positive semidefinite (PSD) matrix $\bv{M}$ we let $\norm x_{\bv{M}} \defeq \sqrt{x^{\top}\bv{M} x}$ and $\lambda_1(\bv M), ..., \lambda_d(\bv M)$ denote its eigenvalues in decreasing order.
We use $\bv{M} \preceq \bv{N}$ to denote the condition that $x^{\top}\bv{M} x\leq x^{\top}\bv{N} x$
for all $x$. 

\subsection{The Offline Problem }

We are given a matrix $\ma\in\R^{n\times d}$ with rows $a^{(1)},...,a^{(n)}$
and wish to compute an approximation to the top eigenvector of $\mSigma\defeq\ma^{\top}\ma$. Specifically, for error parameter $\epsilon$ we want a unit vector $x$ such that $x^\top \bv{\Sigma} x \ge (1-\epsilon) \lambda_1(\bv{\Sigma})$.

\subsection{The Statistical Problem}
We have access to an oracle returning independent samples from a distribution $\dist$
on $\R^{d}$ and wish to compute the top eigenvector of $\mSigma\defeq\E_{a\sim\dist} \left [aa^\top \right ]$. Again, for error parameter $\epsilon$ we want to return a unit vector $x$ such that $x^\top \bv{\Sigma} x \ge (1-\epsilon) \lambda_1(\bv{\Sigma})$.

\subsection{Problem Parameters}

We parameterize the running times and sample complexities of our algorithms in terms of several natural properties of $\ma$, $\dist$, and $\mSigma$. Let $\lambda_{1},...,\lambda_{d}$ denote the eigenvalues of $\mSigma$
in decreasing order and $v_1,... ,v_d$ denote their corresponding
eigenvectors. We define the \emph{eigenvalue gap} by $\gap\defeq\frac{\lambda_{1}-\lambda_{2}}{\lambda_{1}}$.

We use the following additional parameters for the offline and statistical problems respectively:

\begin{itemize}
	\item \textbf{Offline Problem}: Let  $\nrank(\ma) \defeq\sum_{i}\frac{\lambda_{i}}{\lambda_{1}} = \frac{\norm{\ma}_F^2}{\norm{\ma}_2^2}$ denote the stable rank of $\ma$. Note that we always have $\nrank(\ma) \le \rank(\ma)$. Let $\nnz(\ma)$ denote the number of non-zero entries in $\ma$.
	\item \textbf{Online Problem}: Let $\nvar(\dist) \defeq \frac{\norm{\E_{a \sim \dist}\left [ \left (a a^\top \right )^2 \right ]}_2}{\norm{\E_{a \sim \dist} (aa^\top)}_2^2} = \frac{\norm{\E_{a \sim \dist}\left [ \left (a a^\top \right )^2 \right ]}_2}{\lambda_1^2}$ denote a natural upper bound on the variance of $\dist$ in various settings. Note that $\nvar(\dist) \ge 1$.
\end{itemize}

%!TEX root = main.tex

\section{Algorithmic Framework}\label{framework}

Here we develop our robust shift-and-invert framework. In Section~\ref{sec:framework:basics} we provide a basic overview of the framework and in Section~\ref{sec:framework:potential} we introduce the potential function we use to measure progress of our algorithms. In Section~\ref{sec:framework:power-iteration} we show how to analyze the framework given access to an exact linear system solver and in Section~\ref{sec:framework:approximate-power} we strengthen this analysis to work with an inexact linear system solver. Finally, in Section~\ref{sec:framework:init} we discuss initializing the framework.

\subsection{Shifted-and-Inverted Power Method Basics}
\label{sec:framework:basics}

We let $\mb_{\lambda}\defeq\lambda\mI-\mSigma$ denote the shifted matrix that we will use in our implementation of the shifted-and-inverted power method. As discussed, in order for $\bv{B}_\lambda^{-1}$ to have a large eigenvalue gap, $\lambda$ should be set to $(1+c \cdot \gap) \lambda_1$ for some constant $c \geq 0$. Throughout this section we assume that we have a crude estimate of $\lambda_1$ and $\gap$ and fix $\lambda$ to be a value satisfying  $\left (1 + \frac{\gap}{150}\right)\lambda_1 \le \lambda \le \left (1 + \frac{\gap}{100}\right)\lambda_1$. (See Section~\ref{parameter_free} for how we can compute such a $\lambda$). For the remainder of this section we work with such a fixed value of $\lambda$ and therefore for convenience denote $\bv{B}_\lambda$ as $\bv{B}$.

Note that $\lamiBinv{i} = \frac{1}{\lambda_i(\mb)} = \frac{1}{\lambda-\lambda_i}$ and so $\frac{\lamiBinv{1}}{\lamiBinv{2}} = \frac{\lambda-\lambda_2}{\lambda-\lambda_1} \ge \frac{\gap}{\gap/100} = 100.$ This large gap will ensure that, assuming the ability to apply $\bv{B}^{-1}$, the power method will converge very quickly. In the remainder of this section we develop our error analysis for the shifted-and-inverted power method which demonstrates that approximate application of $\bv{B}^{-1}$ in each iteration in fact suffices.

\subsection{Potential Function}
\label{sec:framework:potential}

Our analysis of the power method focuses on the objective of maximizing
the Rayleigh quotient, $x^\top \mSigma x$ for a unit vector $x$. Note that as the following lemma shows, this has a direct correspondence to the error in maximizing $|v_1^\top x|$:

\begin{lemma}[Bounding Eigenvector Error by Rayleigh Quotient]
\label{lem:ray_to_evec}
For a unit vector $x$ let $\epsilon = \lambda_1 - x^\top \bv{\Sigma} x$. If $\epsilon \leq \lambda_1 \cdot \gap$ then
\[
\left | v_1^\top x \right |
\geq \sqrt{1 - \frac{\epsilon}{\lambda_1 \cdot \gap}}.
\]
\end{lemma}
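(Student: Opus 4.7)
The plan is to expand $x$ in the eigenbasis of $\bv{\Sigma}$ and extract a direct lower bound on its coefficient along $v_1$ from the Rayleigh quotient deficit.

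First I would write $x = \sum_{i=1}^d c_i v_i$ where the $v_i$ are the orthonormal eigenvectors of $\bv{\Sigma}$ with eigenvalues $\lambda_1 \geq \lambda_2 \geq \dots \geq \lambda_d$. Since $x$ is a unit vector, $\sum_i c_i^2 = 1$, and moreover $v_1^\top x = c_1$, so the goal reduces to lower bounding $c_1^2$. Expanding the Rayleigh quotient gives $x^\top \bv{\Sigma} x = \sum_i c_i^2 \lambda_i$, hence
\[
\epsilon \;=\; \lambda_1 - \sum_i c_i^2 \lambda_i \;=\; \sum_{i=1}^d c_i^2(\lambda_1 - \lambda_i) \;=\; \sum_{i=2}^d c_i^2 (\lambda_1 - \lambda_i),
\]
where the last equality uses that the $i=1$ term vanishes.

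Next I would use the eigenvalue ordering: for every $i \geq 2$ we have $\lambda_1 - \lambda_i \geq \lambda_1 - \lambda_2 = \lambda_1 \cdot \gap$. Substituting this worst-case bound term-by-term yields
\[
\epsilon \;\geq\; \lambda_1 \cdot \gap \cdot \sum_{i=2}^d c_i^2 \;=\; \lambda_1 \cdot \gap \cdot (1 - c_1^2).
\]
Rearranging gives $c_1^2 \geq 1 - \frac{\epsilon}{\lambda_1 \cdot \gap}$, and the hypothesis $\epsilon \leq \lambda_1 \cdot \gap$ guarantees the right-hand side is nonnegative so taking square roots is valid. Since $|v_1^\top x| = |c_1|$, the claim follows.

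There is no real obstacle here: the argument is a one-line spectral calculation, and the only subtle point is making sure the eigengap is applied against every off-top component (not only the second one), which is immediate once the sum is written out. The hypothesis $\epsilon \leq \lambda_1 \cdot \gap$ is needed solely to keep $1 - \frac{\epsilon}{\lambda_1 \cdot \gap} \geq 0$ so the square root is defined; without it the statement degenerates to the trivial bound $|v_1^\top x| \geq 0$.
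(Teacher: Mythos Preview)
Your proof is correct and follows essentially the same approach as the paper: both exploit that $\lambda_1 - \lambda_i \geq \lambda_1 - \lambda_2 = \lambda_1 \cdot \gap$ for all $i \geq 2$ to bound the off-top mass. The paper phrases this as ``the minimizer of $|v_1^\top x|$ subject to the Rayleigh-quotient constraint lies in the $v_1$--$v_2$ plane'' and then computes directly, whereas you expand in the full eigenbasis and bound term-by-term; the latter is arguably cleaner since it avoids the implicit variational step.
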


\begin{proof}
Among all unit vectors $x$ such that $\epsilon = \lambda_1 - x^\top \mSigma x$, a minimizer of $\left | v_1^\top x\right |$ has the form $x = (\sqrt{1 - \delta^2}) v_1 + \delta v_2$ for some $\delta$. We have
\begin{align*}
\epsilon
&=
\lambda_1 - x^\top \bv \Sigma x
%\\
%&
= 
\lambda_1 - \lambda_1 (1 - \delta^2) - \lambda_2 \delta^2
%\\
%&
=
(\lambda_1 - \lambda_2) \delta^2.
\end{align*}
Therefore by direct computation,
\begin{align*}
\left | v_1^\top x \right | = \sqrt{1 - \delta^2} = \sqrt{1 - \frac{\epsilon}{\lambda_1-\lambda_2}} = \sqrt{1 - \frac{\epsilon}{\lambda_1 \cdot \gap}} ~ .
\end{align*}
\end{proof}

\iffalse
\sk{I tweaked the below to make it more intuitive. Please check.}

 write $x$ in the eigenbasis as $x=\sum_i \alpha_i v_i$,
and, in order to track how close $x$ is to $v_1$, let us
\fi

In order to track the progress of our algorithm we use a more
complex potential function than just the Rayleigh quotient error, $\lambda_1 - x^\top \bv{\Sigma}
x$.  Our potential function $G$ is defined for $x \neq 0$ by
\begin{align*}
	G(x) \defeq
  \frac{\norm{\bv{P}_{v_1^{\perp}}x}_{\mb}}
  {\norm{\bv{P}_{v_1}x}_{\mb}} 
%= \sqrt{ \frac{\sum_{i\geq 2} \lambda_i \alpha_i^2 }
%{ \lambda_i\alpha_1^2 } }
\end{align*}
where $\bv{P}_{v_1}$ and $\bv{P}_{v_1^{\perp}}$ are the
projections onto $v_1$ and the subspace
orthogonal to $v_1$ respectively. Equivalently, we have that:
\begin{align}\label{gequiv}
	G(x) =
  \frac{\sqrt{\norm{x}_{\mb}^{2}-\left(v_1^{\top}\mb^{1/2}x\right)^{2}}}{\abs{v_1^\top
  \mb^{1/2}x}} =  \frac{\sqrt{\sum_{i\geq 2} \frac{\alpha_i^2}{ \lambda_{i}(\bv{B}^{-1})}}}{\sqrt{\frac{\alpha_1^2}{ \lambda_{1}(\bv{B}^{-1})}}}.
\end{align}
where $\alpha_i = v_i^\top x$.

When the Rayleigh quotient error $\epsilon = \lambda_1 - x^\top\mSigma
x$ of $x$ is small, we can show a strong relation between $\epsilon$  and $G(x)$. We prove this in two parts. We first give a technical lemma, Lemma~\ref{lem:potfunc1}, that we will use several times for bounding the numerator of $G$. We then prove the connection in Lemma~\ref{lem:rayquot-potential}. 

\begin{lemma} \label{lem:potfunc1} For a unit vector $x$ and $\epsilon = \lambda_1 - x^\top \mSigma x$ if $\epsilon \leq \lambda_1 \cdot \gap$ then 
\[
\epsilon \leq 
x^\top \mb x - (v_1^\top \mb x) (v_1^\top x)
\leq \epsilon \left (1 + 
 \frac{\lambda - \lambda_1}{\lambda_1 \cdot \gap}\right).
\]
\end{lemma}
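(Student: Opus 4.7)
My plan is to expand both terms of the middle expression explicitly using $\mb = \lambda \mI - \mSigma$ and the eigendecomposition of $\mSigma$, reducing the entire identity to a statement about how much of the unit vector $x$ lies outside the top eigenvector direction.

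First I would write $x = \sum_i \alpha_i v_i$ with $\sum_i \alpha_i^2 = 1$, and compute each piece. By definition of $\mb$, $x^\top \mb x = \lambda - x^\top \mSigma x = \lambda - \lambda_1 + \epsilon$. Similarly, since $\mSigma v_1 = \lambda_1 v_1$, we get $v_1^\top \mb x = (\lambda - \lambda_1)(v_1^\top x)$, hence $(v_1^\top \mb x)(v_1^\top x) = (\lambda - \lambda_1)\alpha_1^2$. Combining these gives the key identity
\[
x^\top \mb x - (v_1^\top \mb x)(v_1^\top x) = \epsilon + (\lambda - \lambda_1)(1 - \alpha_1^2) = \epsilon + (\lambda - \lambda_1)\sum_{i\geq 2}\alpha_i^2.
\]

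The lower bound $\epsilon \leq x^\top \mb x - (v_1^\top \mb x)(v_1^\top x)$ is then immediate since $\lambda > \lambda_1$ (by the choice of $\lambda$ in Section~\ref{sec:framework:basics}) and $\sum_{i\geq 2}\alpha_i^2 \geq 0$. For the upper bound I would control $\sum_{i\geq 2}\alpha_i^2$ via the Rayleigh quotient error: expanding $\epsilon = \lambda_1 - \sum_i \lambda_i \alpha_i^2 = \sum_{i\geq 2}(\lambda_1 - \lambda_i)\alpha_i^2$ and using $\lambda_1 - \lambda_i \geq \lambda_1 - \lambda_2 = \lambda_1 \cdot \gap$ for $i\geq 2$ gives
\[
\sum_{i\geq 2}\alpha_i^2 \leq \frac{\epsilon}{\lambda_1\cdot \gap}.
\]
Plugging this into the identity above yields $x^\top\mb x - (v_1^\top \mb x)(v_1^\top x) \leq \epsilon\bigl(1 + \tfrac{\lambda-\lambda_1}{\lambda_1 \cdot \gap}\bigr)$, as required.

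There is no real obstacle here; the statement is essentially an algebraic manipulation plus the standard gap-based bound on the mass $x$ places outside $v_1$. The hypothesis $\epsilon \leq \lambda_1 \cdot \gap$ is not actually used in either direction of the inequality (it only ensures the bound on $\sum_{i\geq 2}\alpha_i^2$ is nontrivially less than $1$, which is consistent with $x$ being a unit vector), and it will be important in subsequent lemmas that convert this into a bound on $G(x)$.
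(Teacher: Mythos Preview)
Your proof is correct and follows essentially the same route as the paper: both derive the identity $x^\top \mb x - (v_1^\top \mb x)(v_1^\top x) = \epsilon + (\lambda-\lambda_1)(1-(v_1^\top x)^2)$ and then bound $1-(v_1^\top x)^2 \le \epsilon/(\lambda_1\cdot\gap)$. The only cosmetic difference is that the paper obtains this last bound by citing Lemma~\ref{lem:ray_to_evec}, whereas you prove it directly from the eigendecomposition; your observation that the hypothesis $\epsilon \le \lambda_1\cdot\gap$ is not actually needed here is also correct.
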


\begin{proof}
Since $\mb = \lambda \mI - \mSigma$ and since $v_1$ is an eigenvector of 
$\mSigma$ with eigenvalue $\lambda_1$ we have
\begin{align*}
x^\top \mb x - (v_1^\top \mb x) (v_1^\top x)
&=
\lambda \norm{x}_2^2 - x^\top \mSigma x
- (\lambda v_1^\top x - v_1^\top \mSigma x) (v_1^\top x)
\\
&=
\lambda - \lambda_1 + \epsilon -
(\lambda v_1^\top x - \lambda_1 v_1^\top x) (v_1^\top x)
\\
&=
(\lambda - \lambda_1) \left(1 - (v_1^\top x)^2\right) + \epsilon.
\end{align*}
Now by Lemma~\ref{lem:ray_to_evec} we know that $| v_1^\top x |
\geq \sqrt{1 - \frac{\epsilon }{\lambda_1 \cdot \gap}}$, giving us the upper bound. Furthermore, since trivially $\left |v_1^\top x\right | \leq 1$ and $\lambda - \lambda_1 > 0$, we have the lower bound.
\end{proof}

\begin{lemma}[Potential Function to Rayleigh Quotient Error Conversion]\label{lem:rayquot-potential}
	For a unit vector $x$ and $\epsilon = \lambda_1 - x^\top \mSigma x$ if $\epsilon \leq \frac{1}{2}\lambda_1 \cdot \gap$, we have:
	\begin{align*}
		\frac{\epsilon}{\lambda-\lambda_1} \leq G(x)^2 \leq \left(1+\frac{\lambda-\lambda_1}{\lambda_1 \cdot \gap}\right) \left(1+\frac{2\epsilon}{\lambda_1 \cdot \gap}\right)\frac{\epsilon}{\lambda-\lambda_1}.
	\end{align*}
\end{lemma}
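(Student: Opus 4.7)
The plan is to express $G(x)^2$ as the ratio $\frac{\|\bv{P}_{v_1^\perp} x\|_\mb^2}{\|\bv{P}_{v_1} x\|_\mb^2}$ and bound the numerator and denominator separately using the two previous lemmas. Because $v_1$ is an eigenvector of $\mSigma$ (and hence of $\mb$) with eigenvalue $\lambda - \lambda_1$, we have $\bv{P}_{v_1}\mb = \mb \bv{P}_{v_1}$, so the decomposition $x = \bv{P}_{v_1}x + \bv{P}_{v_1^\perp}x$ is orthogonal in the $\mb$-inner product. This gives the clean identities
\[
\|\bv{P}_{v_1} x\|_\mb^2 = (\lambda - \lambda_1)(v_1^\top x)^2,
\qquad
\|\bv{P}_{v_1^\perp} x\|_\mb^2 = x^\top \mb x - (\lambda-\lambda_1)(v_1^\top x)^2 = x^\top \mb x - (v_1^\top \mb x)(v_1^\top x),
\]
matching the expression appearing in Lemma~\ref{lem:potfunc1}.

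Next I would apply Lemma~\ref{lem:potfunc1} to the numerator to obtain
\[
\epsilon \;\leq\; \|\bv{P}_{v_1^\perp} x\|_\mb^2 \;\leq\; \epsilon\left(1 + \frac{\lambda-\lambda_1}{\lambda_1\cdot\gap}\right),
\]
and Lemma~\ref{lem:ray_to_evec} (together with the trivial bound $(v_1^\top x)^2 \leq 1$) to the denominator to obtain
\[
(\lambda - \lambda_1)\left(1 - \frac{\epsilon}{\lambda_1\cdot\gap}\right) \;\leq\; \|\bv{P}_{v_1} x\|_\mb^2 \;\leq\; \lambda - \lambda_1.
\]
Dividing the lower bound on the numerator by the upper bound on the denominator immediately yields the claimed lower bound $G(x)^2 \geq \epsilon/(\lambda-\lambda_1)$.

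For the upper bound, dividing the upper bound on the numerator by the lower bound on the denominator gives
\[
G(x)^2 \;\leq\; \frac{\epsilon}{\lambda-\lambda_1}\cdot\left(1+\frac{\lambda-\lambda_1}{\lambda_1\cdot\gap}\right)\cdot\frac{1}{1-\epsilon/(\lambda_1\cdot\gap)}.
\]
The last step is to use the hypothesis $\epsilon \leq \tfrac{1}{2}\lambda_1\cdot\gap$ to convert $\tfrac{1}{1-y}$ into $1+2y$: for $y = \epsilon/(\lambda_1\cdot\gap) \in [0,1/2]$, the elementary inequality $\tfrac{1}{1-y} \leq 1 + 2y$ holds, producing the final factor $1+\tfrac{2\epsilon}{\lambda_1\cdot\gap}$ in the stated bound.

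There isn't really a hard step here; the only thing to watch is making sure the $\mb$-orthogonality of $\bv{P}_{v_1}x$ and $\bv{P}_{v_1^\perp}x$ is invoked correctly so that the numerator of $G(x)^2$ matches the quantity bounded by Lemma~\ref{lem:potfunc1}, and that the $\epsilon \leq \tfrac{1}{2}\lambda_1\cdot\gap$ assumption is used precisely to license the $\tfrac{1}{1-y}\leq 1+2y$ relaxation. Everything else is direct substitution.
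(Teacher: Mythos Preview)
Your proposal is correct and follows essentially the same argument as the paper: write $G(x)^2 = \frac{x^\top \mb x - (v_1^\top \mb x)(v_1^\top x)}{(v_1^\top \mb x)(v_1^\top x)}$, bound the numerator via Lemma~\ref{lem:potfunc1} and the denominator via Lemma~\ref{lem:ray_to_evec}, then use $\epsilon \leq \tfrac{1}{2}\lambda_1\cdot\gap$ to replace $\tfrac{1}{1-y}$ by $1+2y$. Your write-up is in fact a bit more explicit than the paper's about the $\mb$-orthogonality justifying the numerator identity, but the approach is the same.
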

\begin{proof}
	Since $v_1$ is an eigenvector of $\mb$, we can write $G(x)^2 = \frac{x^\top \mb x - (v_1^\top \mb x) (v_1^\top x)}{(v_1^\top \mb x) (v_1^\top x)}$. Lemmas~\ref{lem:ray_to_evec} and~\ref{lem:potfunc1} then give us:
	\begin{align*}
		\frac{\epsilon}{\lambda-\lambda_1} \leq G(x)^2 \leq \left(1+\frac{\lambda-\lambda_1}{\lambda_1 \cdot \gap}\right)\frac{\epsilon}{\left(\lambda-\lambda_1\right) \left(1-\frac{\epsilon}{\lambda_1 \cdot \gap}\right)}.
	\end{align*}
	Since $\epsilon \leq \frac{1}{2}\lambda_1 \cdot \gap$, we have $\frac{1}{1-\frac{\epsilon}{\lambda_1 \cdot \gap}} \leq 1+\frac{2\epsilon}{\lambda_1 \cdot \gap}$. This proves the lemma.
\end{proof}

\subsection{Power Iteration}
\label{sec:framework:power-iteration}

Here we show that the shifted-and-inverted power iteration in fact makes progress with respect to our objective function given an exact linear system solver for $\mb$. Formally, we show that applying $\mb^{-1}$ to a vector $x$ decreases the potential function $G(x)$ geometrically.

\begin{theorem}\label{thm:powermethod}
	Let $x$ be a unit vector with $\inprod{x,v_1} \neq 0$ and let $\xtilde = \mb^{-1}x$, i.e. the power method update of $\mb^{-1}$ on $x$. Then, under our assumption on $\lambda$, we have:
	\begin{align*}
		G(\xtilde) \leq \frac{\lamiBinv{2}}{\lamiBinv{1}} G(x)
		\leq \frac{1}{100} G(x).
	\end{align*}
\end{theorem}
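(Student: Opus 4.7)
The plan is to work directly in the eigenbasis. Since $v_1,\ldots,v_d$ are simultaneously eigenvectors of $\mSigma$, $\mb$, and $\mb^{-1}$, writing $x = \sum_i \alpha_i v_i$ with $\alpha_i = v_i^\top x$ gives $\tilde x = \mb^{-1}x = \sum_i \lamiBinv{i}\, \alpha_i\, v_i$, so the new coordinates are $\tilde\alpha_i = \lamiBinv{i}\,\alpha_i$. Plugging into the closed form~\eqref{gequiv} for $G$ then turns the statement into a purely algebraic inequality about the $\alpha_i$'s.

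The first step I would carry out is the substitution. From~\eqref{gequiv},
\begin{align*}
G(\tilde x)^2 = \frac{\sum_{i\geq 2} \tilde\alpha_i^2 / \lamiBinv{i}}{\tilde\alpha_1^2/\lamiBinv{1}}
= \frac{\sum_{i\geq 2} \lamiBinv{i}\,\alpha_i^2}{\lamiBinv{1}\,\alpha_1^2},
\end{align*}
and I would compare this to $G(x)^2 = \dfrac{\sum_{i\geq 2} \alpha_i^2/\lamiBinv{i}}{\alpha_1^2/\lamiBinv{1}}$. The key manipulation is to rewrite each numerator term of $G(\tilde x)^2$ so as to match the form of the numerator of $G(x)^2$, namely $\lamiBinv{i}\,\alpha_i^2 = \lamiBinv{i}^2 \cdot \bigl(\alpha_i^2/\lamiBinv{i}\bigr)$. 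Since $\lamiBinv{1}\geq \lamiBinv{2}\geq \cdots \geq \lamiBinv{d}$, we have $\lamiBinv{i}^2 \leq \lamiBinv{2}^2$ for every $i\geq 2$, so
\begin{align*}
\sum_{i\geq 2}\lamiBinv{i}\,\alpha_i^2 \;\leq\; \lamiBinv{2}^2 \sum_{i\geq 2}\frac{\alpha_i^2}{\lamiBinv{i}}.
\end{align*}
Dividing both sides by $\lamiBinv{1}\,\alpha_1^2$ and recognizing the right-hand factor as $\lamiBinv{1}\,G(x)^2$ gives $G(\tilde x)^2 \leq (\lamiBinv{2}/\lamiBinv{1})^2\, G(x)^2$, i.e.\ the first inequality of the theorem.

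The final step is to translate the eigenvalue ratio back to the problem parameters. Using $\lamiBinv{i} = 1/(\lambda-\lambda_i)$, we get $\lamiBinv{2}/\lamiBinv{1} = (\lambda-\lambda_1)/(\lambda-\lambda_2)$. The hypothesis $\lambda \leq (1+\gap/100)\lambda_1$ yields $\lambda-\lambda_1 \leq \gap\cdot \lambda_1/100$, and the identity $\lambda_1-\lambda_2 = \gap\cdot\lambda_1$ together with $\lambda > \lambda_1$ gives $\lambda-\lambda_2 \geq \gap\cdot\lambda_1$, so the ratio is at most $1/100$. I expect no real obstacle; the only non-obvious move is the rewriting trick $\lamiBinv{i}\alpha_i^2 = \lamiBinv{i}^2\cdot(\alpha_i^2/\lamiBinv{i})$, which is precisely what lets a uniform bound on $\lamiBinv{i}^2$ (for $i\ge 2$) convert the numerator of $G(\tilde x)^2$ into a multiple of the numerator of $G(x)^2$.
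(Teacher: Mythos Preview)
Your proposal is correct and follows essentially the same approach as the paper: expand in the eigenbasis, plug into the closed form~\eqref{gequiv}, and use $\lamiBinv{i}\leq\lamiBinv{2}$ for $i\geq 2$ to bound the numerator, then invoke the assumption on $\lambda$ for the $1/100$ ratio. The only cosmetic difference is that you square first and the paper works directly with $G$, but the algebra is identical.
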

Note that $\xtilde$ may no longer be a unit vector. However, $G(\xtilde, v_1) = G(c \xtilde, v_1)$ for any scaling parameter $c$, so the theorem also holds for $\xtilde$ scaled to have unit norm.
\begin{proof}
	Writing $x$ in the eigenbasis, we have $x=\sum_i \alpha_i v_i$ and  $\xtilde = \sum_i \alpha_i \lamiBinv{i}v_i$. Since $\inprod{x,v_1} \neq 0$, $\alpha_1 \neq 0$  and by the equivalent formulation of $G(x)$ given in \eqref{gequiv}:
	\begin{align*}
		G(\xtilde) = \frac{\sqrt{\sum_{i\geq 2} \alpha_i^2 \lambda_{i}(\bv{B}^{-1})}}{\sqrt{\alpha_1^2 \lambda_{1}(\bv{B}^{-1})}}
		\leq \frac{\lamiBinv{2}}{\lamiBinv{1}} \cdot \frac{\sqrt{\sum_{i\geq 2} \frac{\alpha_i^2}{\lambda_{i}(\bv{B}^{-1})}}}{\sqrt{\frac{\alpha_1^2}{\lambda_1(\bv{B}^{-1})}}}
		= \frac{\lamiBinv{2}}{\lamiBinv{1}} \cdot G(x) ~.
	\end{align*}
Recalling that $\frac{\lamiBinv{1}}{\lamiBinv{2}} = \frac{\lambda-\lambda_2}{\lambda-\lambda_1} \ge \frac{\gap}{\gap/100} = 100$ yields the result. 
\end{proof}
The challenge in using the above theorem, and any traditional analysis of the shifted-and-inverted power method, is that we don't actually have access to $\mb^{-1}$. In the next section we show that the shifted-and-inverted power method is robust -- we still make progress on our objective function even if we only approximate $\mb^{-1}x$ using a fast linear system solver.

\subsection{Approximate Power Iteration}
\label{sec:framework:approximate-power}

We are now ready to prove our main result. We show that each iteration of the shifted-and-inverted power method makes constant factor expected progress on our potential function assuming we:
\begin{enumerate}
\item Start with a sufficiently good $x$ and an approximation of $\lambda_1$ 
\item Can apply $\bv{B}^{-1}$ approximately using a system solver such that the function error (i.e. distance to $\bv{B}^{-1} x$ in the $\bv{B}$ norm) is sufficiently small \emph{in expectation}.
\item Can estimate Rayleigh quotients over $\bv{\Sigma}$ well enough to only accept updates that do not hurt progress on the objective function too much.
\end{enumerate}

This third assumption is necessary since the second assumption is quite weak. An expected progress bound on the linear system solver allows, for example, the solver to occasionally return a solution that is entirely orthogonal to $v_1$, causing us to make unbounded backwards progress on our potential function. The third assumption allows us to reject possibly harmful updates and ensure that we still make progress in expectation. In the offline setting, we can access $\bv{A}$ and are able to compute Rayleigh quotients exactly in time $\nnz(\bv{A})$ time. However, we only assume the ability to estimate quotients since in the online setting we only have access to $\mSigma$ through samples from $\dist$.

Our general theorem for the approximate power iteration, Theorem~\ref{thm:powermethod-perturb}, assumes that we can solve linear systems to some absolute accuracy in expectation. This is not completely standard. Typically, system solver analysis assumes an initial approximation to $\bv{B}^{-1} x$ and then shows a relative progress bound -- that the quality
of the initial approximation is improved geometrically in each iteration of the algorithm. In Corollary~\ref{cor:constant_factor_corollary} we show how to find a  coarse initial approximation to $\bv{B}^{-1} x$, in fact just approximating $\bv{B}^{-1}$ with $\frac{1}{x^\top \mb x} x$. Using this approximation, we show that Theorem~\ref{thm:powermethod-perturb} actually implies that traditional system solver relative progress bounds suffice. 

Note that in both claims we measure error of the linear system solver using $\norm{\cdot}_\mb$. This is a natural norm in which geometric convergence is shown for many linear system solvers and directly corresponds to the function error of minimizing $f(w) = \frac{1}{2} w^\top \mb w - w^\top x$ to compute $\mb^{-1} x$.

\begin{theorem}[Approximate Shifted-and-Inverted Power Iteration -- Warm Start]\label{thm:powermethod-perturb}
	Let $x=\sum_i \alpha_i v_i$ be a unit vector such that $G(x) \leq \frac{1}{\sqrt{10}}$. Suppose we know some shift parameter $\lambda$ with $\left (1 + \frac{\gap}{150} \right ) \lambda_1< \lambda \le \left (1 + \frac{\gap}{100} \right ) \lambda_1$ and an estimate $\lambdah_1$ of $\lambda_1$ such that
	$
		\frac{10}{11} \left(\lambda-\lambda_1\right) \leq \lambda-\lambdah_1 \leq \lambda-\lambda_1
	$.
	Furthermore, suppose we have a subroutine $\mathrm{solve}(\cdot)$ such that on any input $x$
	\begin{align*}
		\expec{\norm{\solve{x} - \mb^{-1}x}_{\mb}} \leq  \frac{c_1}{1000}
		\sqrt{\lambda_1(\bv{B}^{-1})},
	\end{align*}
	for some $c_1 < 1$,
	and a subroutine $\rayquoth{\cdot}$ that on any input $x \neq 0$
	\begin{align*}
		\abs{\rayquoth{x} - \rayquot(x)}
			\leq \frac{1}{30}\left(\lambda-\lambda_1\right) \;  \text{ for all nonzero } \; x \in \R^d.
	\end{align*}
	where $\rayquot(x) \defeq \frac{x^\top \mSigma x}{x^\top x}$.

	Then the following update procedure:
	\begin{align*}
		\text {Set }\xhat = \solve{x},
	\end{align*}
	\begin{align*}
		\text {Set } \xtilde = \left\{
		\begin{array}{cc}
		\xhat& \mbox{ if } \left\{
		\begin{array}{c}
			\rayquoth{\xhat} \geq \lambdah_1 - \left(\lambda-\lambdah_1\right)/6 \mbox{ and } \\
			\norm{\xhat}_2 \geq \frac{2}{3}\frac{1}{\lambda-\lambdah_1}
		\end{array} \right. \\
		x & \mbox{ otherwise,}
		\end{array}
		\right.
	\end{align*}
	satisfies the following:
	\begin{itemize}
		\item	$G(\xtilde)\leq \frac{1}{\sqrt{10}}$ and
		\item	$\expec{G(\xtilde)}\leq \frac{3}{25} G(x) + \frac{c_1}{500}$.
	\end{itemize}
\end{theorem}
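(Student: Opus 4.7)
Let $y = \mb^{-1} x$ and $\xi = \hat x - y$, so by Theorem~\ref{thm:powermethod} (the exact-solver bound applied to $x$) we have $G(y) \le G(x)/100$, and by hypothesis $\expec{\|\xi\|_\mb} \le (c_1/1000)\sqrt{\lambda_1(\mb^{-1})}$. I will prove both bounds by splitting into the accept case ($\tilde x = \hat x$) and the reject case ($\tilde x = x$).

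For the deterministic bound $G(\tilde x) \le 1/\sqrt{10}$, the reject case is immediate from the hypothesis $G(x) \le 1/\sqrt{10}$. In the accept case I combine the Rayleigh quotient check $\rayquoth{\hat x} \ge \hat\lambda_1 - (\lambda - \hat\lambda_1)/6$ with the estimator error $|\rayquoth{\cdot}-\rayquot(\cdot)| \le (\lambda-\lambda_1)/30$ and the bracketing $\lambda_1 \le \hat\lambda_1 \le \lambda_1 + (\lambda-\lambda_1)/11$ to conclude $\rayquot(\hat x) \ge \lambda_1 - \delta$ for $\delta$ a small constant multiple of $\lambda-\lambda_1$. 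Since $\lambda - \lambda_1 \le \gap\cdot\lambda_1/100$, this Rayleigh-quotient deficit is tiny compared to $\gap\cdot\lambda_1$, and Lemma~\ref{lem:rayquot-potential} then translates it directly into the desired upper bound on $G(\hat x)$.

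For the expectation bound I decompose
\[
\expec{G(\tilde x)} \;=\; \expec{G(\hat x)\,\mathds{1}_{\mathrm{acc}}} \;+\; G(x)\cdot\prob{\mathrm{rej}}.
\]
The accept contribution is bounded by writing $\hat x = y + \xi$ and applying the triangle inequality in the $\mb$-norm: $\|\bv{P}_{v_1^\perp}\hat x\|_\mb \le \|\bv{P}_{v_1^\perp}y\|_\mb + \|\xi\|_\mb$ and $\|\bv{P}_{v_1}\hat x\|_\mb \ge \|\bv{P}_{v_1}y\|_\mb - \|\xi\|_\mb$. Since $G(x) \le 1/\sqrt{10}$ lower bounds $|v_1^\top x|$ by a constant, and $\|\bv{P}_{v_1}y\|_\mb = |v_1^\top x|/\sqrt{\lambda - \lambda_1}$, this denominator is at least a constant multiple of $\sqrt{\lambda_1(\mb^{-1})}$. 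Dividing through yields $G(\hat x) \le (G(y)+\rho)/(1-\rho)$ with $\rho = O(\|\xi\|_\mb\sqrt{\lambda-\lambda_1})$; plugging in $G(y)\le G(x)/100$ and taking expectations produces a $(3/25)G(x)$-type multiplicative term plus an $O(c_1)$ additive term.

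For the reject contribution, the key observation is that $y = \mb^{-1}x$ itself passes both checks with substantial slack: since $G(y)\le G(x)/100$ is very small, $\rayquot(y)$ is within a tiny fraction of $\lambda-\lambda_1$ of $\lambda_1$, and $\|y\|_2 \ge |v_1^\top x|/(\lambda-\lambda_1)$ comfortably exceeds $2/(3(\lambda-\hat\lambda_1))$. Consequently any rejection of $\hat x$ forces $\|\xi\|_\mb \ge T$ for some threshold $T = \Theta(\sqrt{\lambda_1(\mb^{-1})})$, and Markov's inequality gives $\prob{\mathrm{rej}} \le \expec{\|\xi\|_\mb}/T = O(c_1)$, so $G(x)\cdot\prob{\mathrm{rej}}$ is absorbed into the final bound. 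The main obstacle will be the constants: verifying that $y$ passes both checks with enough slack that every rejection forces $\|\xi\|_\mb$ to be quantitatively large, and chasing the arithmetic so that the accept and reject contributions add up to exactly $(3/25)G(x) + c_1/500$.
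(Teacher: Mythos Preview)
Your overall architecture matches the paper's: same decomposition into accept/reject, same use of Lemma~\ref{lem:rayquot-potential} for the deterministic bound, and the same Markov argument (small $\|\xi\|_\mb$ forces acceptance) to control the reject probability. But there is a genuine gap in how you handle the \emph{accept} contribution.

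You lower-bound the denominator by triangle inequality, $\|\bv{P}_{v_1}\hat x\|_\mb \ge \|\bv{P}_{v_1}y\|_\mb - \|\xi\|_\mb$, which leads to $G(\hat x) \le (G(y)+\rho)/(1-\rho)$ with $\rho \asymp \|\xi\|_\mb\sqrt{\lambda-\lambda_1}$. The problem is that acceptance does \emph{not} bound $\|\xi\|_\mb$: a large error that happens to land in a favorable direction can still pass both checks. So on the accept event $\rho$ may exceed $1$ (making your bound vacuous), and even where $\rho<1$ the map $\rho \mapsto (G(y)+\rho)/(1-\rho)$ is convex, so ``taking expectations'' of this nonlinear bound does not give anything like $G(y)+O(\E\rho)$. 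You cannot get a clean $O(c_1)$ additive term this way.

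The paper's fix is to abandon the triangle inequality for the denominator and instead use the acceptance criteria \emph{directly}: the Rayleigh-quotient check plus Lemma~\ref{lem:ray_to_evec} give $|v_1^\top \hat x|/\|\hat x\|_2 \ge 3/4$, and the norm check gives $\|\hat x\|_2 \ge \tfrac{2}{3}(\lambda-\lambda_1)^{-1}$, so on the accept event one has the deterministic bound $\|\bv{P}_{v_1}\hat x\|_\mb = |v_1^\top \hat x|\sqrt{\lambda-\lambda_1} \ge \tfrac{1}{2}\sqrt{\lambda_1(\mb^{-1})}$, \emph{independent of} $\|\xi\|_\mb$. This makes the bound on $G(\hat x)$ linear in $\|\xi\|_\mb$, namely $G(\hat x) \le G(x)/50 + 2\|\xi\|_\mb/\sqrt{\lambda_1(\mb^{-1})}$, and then conditional expectations (using $\E[\|\xi\|_\mb\mid\mathcal{F}] \le \E[\|\xi\|_\mb]/\Pr[\mathcal{F}]$) go through cleanly. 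This is precisely the purpose of the norm check in the update rule, which your plan never uses on the accept side.
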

That is, not only do we decrease our potential function by a constant factor in expectation, but we are guaranteed that the potential function will never increase beyond $1/\sqrt{10}$.
\begin{proof}
	The first claim follows directly from our choice of $\xtilde$ from $x$ and $\xhat$. 
	If $\xtilde = x$, it holds trivially by our assumption that $G(x) \leq \frac{1}{\sqrt{10}}$.  Otherwise, $\xtilde = \xhat$ and we know that 
	\begin{align*}
	\lambda_1 - \rayquot\left(\xhat\right) 
	&\leq \lambdah_1 - \rayquot\left(\xhat\right) \leq \lambdah_1 - \rayquoth{\xhat} + \abs{\rayquoth{\xhat} - \rayquot\left(\xhat\right)} 
	\\
	&\leq \frac{\lambda-\lambdah_1}{6} + \frac{\lambda-\lambda_1}{30} 
	\leq \frac{\lambda-\lambda_1}{5} \le \frac{\lambda_1 \cdot \gap}{500} ~.
	\end{align*} 
	The claim then follows from Lemma~\ref{lem:rayquot-potential} as
	\begin{align*}
		G(\xhat)^2 
		&\leq 
		\left(1 + \frac{\lambda - \lambda_1}{\lambda_1 \cdot \gap}\right) 
		\left(1 + \frac{2\left(\lambda_1 - \rayquot\left(\xhat\right)\right)}{\lambda_1 \cdot \gap}\right)
		\frac{\lambda_1 - \rayquot\left(\xhat\right)}{\lambda-\lambda_1} \\
		&\leq \frac{101}{100}\cdot \frac{251}{250} \cdot
		\frac{\left(\frac{\lambda_1 \cdot \gap}{500}\right)}
		{\left(\frac{\lambda_1 \cdot \gap}{150}\right)} 
		\le \frac{1}{\sqrt{10}} ~.
	\end{align*}
	
	All that remains is to show the second claim, that $\expec{G(\xtilde)}\leq \frac{3}{25} G(x) + \frac{4c_1}{1000}$. Let $\mcalF$ denote the event that we accept our iteration and set $x= \xhat = \solve{x}$. That is:
	\begin{align*}
	\mcalF &\defeq \set{\rayquoth{\xhat} \geq \lambdah_1 - 
		\frac{\lambda - \lambdah_1}{6}} \cup \set{\norm{\xhat}_2 \geq \frac{2}{3}\frac{1}{\lambda-\lambdah_1}}.
	\end{align*}
	Using our bounds on $\lambdah_1$ and $\rayquoth{\cdot}$, we know that $\rayquoth{x} \leq \rayquot(x) + (\lambda - \lambda_1) / 30$ and $\lambda - \lambdah_1 \leq  \lambda - \lambda_1$.
	Therefore, since $-1/6 - 1/30 \geq -1/2$ we have
	\begin{align*}
	\mathcal{F} &\subseteq \set{\rayquot\left(\xhat\right) \geq \lambda_1 - \left(\lambda-\lambda_1\right)/2} \cup \set{\norm{\xhat}_2 \geq \frac{2}{3}\frac{1}{\lambda-\lambda_1}},
	\end{align*}
	
	We will complete the proof in two steps. First we let $\xi \defeq \xhat - \mb^{-1}x$ and show that assuming $\mcalF$ is true then $G(\xhat)$ and
    $\norm{\xi}_{\mb}$ are linearly related, i.e. expected error bounds on $\norm{\xi}_{\mb}$ correspond to expected error bounds on $G(\xhat)$. Second, we bound the probability that $\mcalF$ does not occur and bound error incurred in this case. Combining these yields the result. 
    
    To show the linear relationship in the case where $\mcalF$ is true, first note Lemma~\ref{lem:ray_to_evec} shows that in this case $\abs{v_1^\top \frac{\xhat}{\norm{\xhat}_2}} \geq \sqrt{1-\frac{\lambda_1-\rayquot(\xhat)}{\lambda_1 \cdot \gap}} \geq \frac{3}{4}$. Consequently,
    \[
    \norm{\bv{P}_{v_1}\xhat}_{\mb}
    = \abs{v_1^\top\xhat}\sqrt{\lambda-\lambda_1}
    = \abs{v_1^\top \frac{\xhat}{\norm{\xhat}_2}}\cdot \norm{\xhat}\sqrt{\lambda-\lambda_1} 
     \geq \frac{3}{4}\cdot \frac{2}{3} \frac{1}{\sqrt{\lambda-\lambda_1}}
     =
     \frac{\sqrt{\lambda_1(\bv{B}^{-1})}}{2} ~.
    \]
    However, 
    % Do we want to say something about triangle inequality and
    % B and the projection operator commuting?
    \[
    \norm{\bv{P}_{v_1^{\perp}}\xhat}_{\mb}
	\leq \norm{\bv{P}_{v_1^{\perp}}\mb^{-1} x}_{\mb} + \norm{\bv{P}_{v_1^{\perp}}\xi}_\mb
		\leq \norm{\bv{P}_{v_1^{\perp}}\mb^{-1} x}_{\mb} + \norm{\xi}_{\mb} 
	\]
	and by Theorem~\ref{thm:powermethod} and the definition of $G$ we have
	\[
	\norm{\bv{P}_{v_1^{\perp}}\mb^{-1} x}_{\mb}
	= \norm{\bv{P}_{v_1}\mb^{-1} x}_{\mb} \cdot G(\mb^{-1} x)
	\leq \left(\abs{\inprod{x,v_1}} \sqrt{\lambda_1(\bv{B}^{-1})} \right)
	\cdot \frac{G(x)}{100} ~.
	\]
	Taking expectations, using that $\abs{\inprod{x,v_1}} \leq 1$, and combining these three inequalities yields
	\begin{equation}
	\expec{G(\xhat) \middle\vert  \mcalF}
	=
	\expec{\frac{\norm{\bv{P}_{v_1^{\perp}}\mb^{-1} x}_{\mb}}{\norm{\bv{P}_{v_1}\mb^{-1} x}_{\mb}} \middle\vert  \mcalF}
	\leq \frac{G(x)}{50}  
	+ 2 \frac{ \expec{\norm{\xi}_{\mb} 
			 \middle\vert \mcalF}}{\sqrt{\lambda_1(\bv{B}^{-1})}}
	\label{eqn:expec-potential}
	\end{equation}
	So, conditioning on making an update and changing $x$ (i.e. $\mathcal{F}$ occurring), we see that our potential function changes exactly as in the exact case (Theorem \ref{thm:powermethod}) with additional additive error due to our inexact linear system solve.

	Next we upper bound $\prob{\mcalF}$ and use it to compute $\expec{\norm{\xi}_{\mb}\middle\vert \mcalF}$. We will show that
	 $$\mcalG \defeq \set{\norm{\xi}_{\mb} \leq \frac{1}{100}\cdot \sqrt{\lamiBinv{1}} } \subseteq \mcalF$$ which then implies by Markov inequality that
	\begin{align}
		\prob{\mcalF} &\geq \prob{\norm{\xi}_{\mb} \leq \frac{1}{100}\cdot \sqrt{\lamiBinv{1}} } 
		%\nonumber 
		%\\ &
		\geq 1- \frac{\expec{\norm{\xi}_{\mb}}}{\frac{1}{100}\cdot \sqrt{\lamiBinv{1}} } \geq \frac{9}{10},\label{eqn:PF}
	\end{align}
	where we used the fact that $\E [\norm{\xi}_{\mb}] \le \frac{c_1}{1000} \sqrt{\lambda_1(\bv{B}^{-1})}$ for some $c_1 < 1$.
	
	Let us now show that $\mcalG \subseteq \mcalF$. Suppose $\mcalG$ is occurs. We can bound $\norm{\xhat}_2$ as follows:
	\begin{align}
		\norm{\xhat}_2 
		& \geq \norm{\mb^{-1}x}_2 - \norm{\xi}_2
		\geq \norm{\mb^{-1}x}-\sqrt{\lamiBinv{1}}\norm{\xi}_{\mb} \nonumber \\
		&\geq \abs{\alpha_1}\lamiBinv{1} - \frac{1}{100}\cdot \lamiBinv{1} \nonumber \\
		&= \frac{1}{\lambda-\lambda_1} \left(\abs{\alpha_1} - \frac{1}{100}\right)
		\geq \frac{3}{4}\frac{1}{\lambda-\lambda_1},\label{eqn:f2}
	\end{align}
	where we use Lemmas~\ref{lem:potfunc1} and~\ref{lem:rayquot-potential} to conclude that $\abs{\alpha_1} \geq \sqrt{1-\frac{1}{10}}$. We now turn to showing the Rayleigh quotient condition required by $\mcalF$. In order to do this, we first bound $\xhat^\top \mb \xhat - \left(v_1^\top \mb \xhat\right)\left(v_1^\top \xhat\right)$ and then use Lemma~\ref{lem:potfunc1}. We have:
	\begin{align*}
	\sqrt{\xhat^\top \mb \xhat - \left(v_1^\top \mb \xhat\right)\left(v_1^\top \xhat\right)} 
	&= \norm{\bv{P}_{v_1^{\perp}}\xhat}_\mb 
	%\\
	%&
	\leq \norm{\bv{P}_{v_1^{\perp}}\mb^{-1}x}_\mb + \norm{\bv{P}_{v_1^{\perp}} \xi}_\mb \\
	&\leq \sqrt{\sum_{i\geq 2} \alpha_i^2 \lamiBinv{i}} + \frac{1}{100}\cdot \sqrt{\lamiBinv{1}} \\
	&\leq \sqrt{\lamiBinv{2}} + \frac{1}{100}\cdot \sqrt{\lamiBinv{1}}
	%\\
	%&
	\leq \frac{1}{9}\sqrt{\lambda-\lambda_1},
	\end{align*}
	where we used the fact that $\lamiBinv{2}\leq \frac{1}{100}\lamiBinv{1}$ since $\lambda \le \lambda_1 + \frac{\gap}{100}$ in the last step. Now, using Lemma~\ref{lem:potfunc1} and the bound on $\norm{\xhat}_2$, we conclude that
	\begin{align}
		\lambdah_1 - \rayquoth{\xhat}
		&\leq \lambda_1 - \rayquot\left(\xhat\right) + \abs{\rayquot\left(\xhat\right)-\rayquoth{\xhat}} + \lambdah_1 - \lambda_1\nonumber \\
		&\leq \frac{\xhat^\top \mb \xhat - \left(v_1^\top \mb \xhat\right)\left(v_1^\top \xhat\right)}{\norm{\xhat}^2_2} + \frac{\lambda-\lambda_1}{30} + \frac{\lambda-\lambda_1}{11} \nonumber \\
		&\leq \frac{1}{81\left(\lambda-\lambda_1\right)} \cdot \frac{16}{9}\left(\lambda-\lambda_1\right)^2 + \frac{\lambda-\lambda_1}{8} \nonumber \\
		&\leq \left(\lambda-\lambda_1\right)/6
		\leq \left(\lambda-\lambdah_1\right)/4. \label{eqn:f1}
	\end{align}
	Combining~\eqref{eqn:f2} and~\eqref{eqn:f1} shows that $\mcalG \subseteq \mcalF$ there by proving~\eqref{eqn:PF}.
	
	Using this and the fact that $\norm{\cdot}_{\mb} \geq 0$ we can upper bound $\expec{\norm{\xi}_{\mb}\middle\vert \mcalF}$ as follows:
	\begin{align*}
		\expec{\norm{\xi}_{\mb}\middle\vert \mcalF} \leq \frac{1}{\prob{\mcalF}} \cdot \expec{\norm{\xi}_{\mb}}
		\leq \frac{c_1}{900}\cdot \sqrt{\lambda_1(\bv{B}^{-1})}
	\end{align*}
	Plugging this into~\eqref{eqn:expec-potential}, we obtain:
	\[
	\expec{G(\xhat)\middle\vert \mcalF} 
	\leq 
	\frac{1}{50} G(x) + \frac{2\expec{\norm{\xi}_{\mb}\middle\vert \mcalF}}{\sqrt{\lambda_1(\bv{B}^{-1})}}
	\leq \frac{1}{50}\cdot G(x) + \frac{2c_1}{900}.
	\]
	We can now finally bound $\expec{G(\xtilde)}$ as follows:
	\begin{align*}
		\expec{G(\xtilde)} &= \prob{\mcalF} \cdot \expec{G(\xhat)\middle\vert \mcalF} + \left(1-\prob{\mcalF}\right) G(x) \\
		&\leq \frac{9}{10} \left ( \frac{1}{50}\cdot G(x) + \frac{2c_1}{900}\right) + \frac{1}{10} G(x) 
		= \frac{3}{25} G(x) + \frac{2c_1}{1000}.
	\end{align*}
	This proves the theorem.
\end{proof}

\begin{corollary} [Relative Error Linear System Solvers]
	\label{cor:constant_factor_corollary} 
		For any unit vector $x$, we have:
		\begin{equation}
		\label{eq:constant_factor_corollary}
		\norm{\frac{1}{x^\top \mb x} x - \mb^{-1}x}_{\mb} \leq \alpha_1\sqrt{\lambda_1(\bv{B}^{-1})} \cdot G(x) =  \lamiBinv{1} \sqrt{\sum_{i\geq 2} \frac{\alpha_i^2}{\lamiBinv{i}}},
		\end{equation}
		where $x=\sum_i \alpha_i v_i$ is the decomposition of $x$ along $v_i$.
	Therefore, instantiating Theorem \ref{thm:powermethod-perturb} with $c_1 = \alpha_1 G(x)$ gives $\E [G(\xtilde)] \le \frac{4}{25} G(x)$ as long as:
	\begin{align*}
		\expec{\norm{\solve{x}-\mb^{-1}x}_{\mb}} \le  \frac{1}{1000} \norm{\frac{1}{\lambda-x^\top \bv{\Sigma}x} x - \mb^{-1}x}_{\mb}.
	\end{align*}
\end{corollary}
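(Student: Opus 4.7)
The plan is to prove the two parts of the corollary in sequence. For the first part, the key observation is that $\frac{1}{x^\top \mb x} x$ is the optimal scalar multiple of $x$ for approximating $\mb^{-1} x$ under the $\mb$-norm. Letting $s \defeq x^\top \mb x$, this follows by expanding $\norm{c x - \mb^{-1} x}_\mb^2 = c^2 s - 2c + \inprod{x, \mb^{-1} x}$ as a quadratic in $c$ whose minimum is attained at $c = 1/s$. Hence for any convenient scalar $c$ we have $\norm{(1/s) x - \mb^{-1} x}_\mb \le \norm{c x - \mb^{-1} x}_\mb$; the natural choice is $c = \lambda_1(\mb^{-1})$.

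Writing $x = \sum_i \alpha_i v_i$ in the common eigenbasis of $\mb$ and $\mSigma$, this choice gives $\lambda_1(\mb^{-1}) x - \mb^{-1} x = \sum_i \alpha_i \left(\lambda_1(\mb^{-1}) - \lamiBinv{i}\right) v_i$. The $i = 1$ coefficient vanishes, and for each $i \ge 2$ the factor $\left(\lambda_1(\mb^{-1}) - \lamiBinv{i}\right)^2 \leq \lambda_1(\mb^{-1})^2$ because $0 \le \lamiBinv{i} \le \lambda_1(\mb^{-1})$. Using $\norm{v_i}_\mb^2 = \lambda_i(\mb) = 1/\lamiBinv{i}$, the squared $\mb$-norm collapses to $\lambda_1(\mb^{-1})^2 \sum_{i \ge 2} \alpha_i^2/\lamiBinv{i}$. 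Unpacking $G(x)$ via \eqref{gequiv} gives $\alpha_1^2 \lambda_1(\mb^{-1}) G(x)^2 = \lambda_1(\mb^{-1})^2 \sum_{i \ge 2} \alpha_i^2/\lamiBinv{i}$, which establishes both the inequality and the stated equality in \eqref{eq:constant_factor_corollary}.

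For the second claim, set $c_1 = \alpha_1 G(x)$ (WLOG choose the sign of $v_1$ so that $\alpha_1 \ge 0$). Since $G(x) \le 1/\sqrt{10}$ by the hypothesis of Theorem~\ref{thm:powermethod-perturb} and $\alpha_1 \le 1$, we have $c_1 < 1$. Observe that $\lambda - x^\top \mSigma x = x^\top \mb x = s$, so chaining the assumed relative-error bound with \eqref{eq:constant_factor_corollary} yields $\E[\norm{\solve{x} - \mb^{-1} x}_\mb] \le \frac{c_1}{1000} \sqrt{\lambda_1(\mb^{-1})}$, matching the $\solve{\cdot}$ precondition of Theorem~\ref{thm:powermethod-perturb}. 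Invoking that theorem (with its other preconditions carried over by assumption) gives $\E[G(\xtilde)] \le \frac{3}{25} G(x) + \frac{c_1}{500} \le \left(\frac{3}{25} + \frac{1}{500}\right) G(x) = \frac{61}{500} G(x) < \frac{4}{25} G(x)$.

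The only subtlety is spotting the variational characterization of $1/s$ in the first step; picking the auxiliary scalar $c = \lambda_1(\mb^{-1})$ then makes the $i = 1$ term vanish, sidestepping the more delicate algebra one would otherwise need to control $\alpha_1^2 \left(1/s - \lambda_1(\mb^{-1})\right)^2/\lambda_1(\mb^{-1})$ via the shift condition $\lamiBinv{2} \le \lamiBinv{1}/100$.
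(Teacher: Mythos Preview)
Your proof is correct and follows essentially the same approach as the paper: both use the variational characterization that $1/(x^\top \mb x)$ minimizes $\norm{cx - \mb^{-1}x}_\mb$ over scalars $c$, then upper-bound by the convenient suboptimal choice $c = \lambda_1(\mb^{-1})$ (which kills the $i=1$ term), and finally expand in the eigenbasis. The only cosmetic difference is that the paper routes the computation through the function $f(w) = \tfrac{1}{2} w^\top \mb w - w^\top x$ and the identity $2[f(w) - f(\mb^{-1}x)] = \norm{w - \mb^{-1}x}_\mb^2$, whereas you work directly with the $\mb$-norm; you are also slightly more explicit in checking $c_1 < 1$ and in the final arithmetic $\tfrac{3}{25} + \tfrac{1}{500} = \tfrac{61}{500} < \tfrac{4}{25}$, which the paper leaves implicit.
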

\begin{proof}
	Since $\mb$ is PSD we see that if we let $f(w) = \frac{1}{2} w^\top \mb w - w^\top x$, then the minimizer is $\mb^{-1} x$. Furthermore note that $\frac{1}{x^\top \mb x} = \argmin_\beta
	f(\beta x)$ and therefore
	\begin{align*}
	 \norm{\frac{1}{x^\top \mb x} x - \mb^{-1}x}_{\mb}^2
	&=  x^\top \mb^{-1} x - \frac{1}{x^\top \mb x}
	= 2 \left [f\left(\frac{x}{x^\top \mb x}\right) - f(\mb^{-1} x) \right] \\
	= & 2 \left [\min_\beta f(\beta x) - f(\mb^{-1} x) \right]
	\le 2 \left[f(\lamiBinv{1}x) - f(\mb^{-1} x) \right] \\
	= &\lamiBinv{1}^2 x^\top \mb x - 2 \lamiBinv{1} x^\top x + x^\top \mb^{-1} x \\
	= & \sum_{i=1}^d \abs{v_i^\top \mb^{\frac{1}{2}} x}^2(\lamiBinv{1}-\lamiBinv{i})^2
	\le \lamiBinv{1}^2\sum_{i\ge 2} \abs{v_i^\top \mb^{\frac{1}{2}} x}^2 \\
	= & \lamiBinv{1}^2\sum_{i\geq 2} \frac{\alpha_i^2}{\lamiBinv{i}},
	\end{align*}
	which proves \eqref{eq:constant_factor_corollary}.	
Consequently
\begin{align*}
\frac{c_1}{1000}\sqrt{\lambda_1(\bv{B}^{-1})} = \frac{1}{1000} \alpha_1 G(x)\sqrt{\lambda_1(\bv{B}^{-1})} \ge \frac{1}{1000} \norm{\frac{1}{x^\top \mb  x} x - \mb^{-1}x}_{\mb}
\end{align*}
which with Theorem~\ref{thm:powermethod-perturb} then completes the proof.
\end{proof}

\subsection{Initialization}
\label{sec:framework:init}

Theorem \ref{thm:powermethod-perturb} and Corollary \ref{cor:constant_factor_corollary} show that, given a good enough approximation to $v_1$, we can rapidly refine this approximation by applying the shifted-and-inverted power method. In this section, we cover initialization. That is, how to obtain a good enough approximation to apply these results.

%In each iteration we only require an approximate linear system solver - specifically one that improves the error of a crude estimate to $\bv{B}^{-1}x$ by just a constant factor in expectation. 

We first give a simple bound on the quality of a randomly chosen start vector $x_0$. 

\begin{lemma}[Random Initialization Quality]\label{lem:init-random}
    Suppose $x \sim \mathcal{N}(0, \bv{I})$, and we initialize $x_0$ as $\frac{x}{\norm{x}_2}$, then with probability greater than $1- O\left (\frac{1}{d^{10}} \right )$, we have:
    \begin{equation*}
    	G(x_0) \le \sqrt{\kappa(\mb^{-1})} d^{10.5} \le 15 \frac{1}{\sqrt{\gap}} \cdot d^{10.5}
    \end{equation*}
    where $\kappa(\mb^{-1}) = \lambda_1(\mb^{-1}) / \lambda_d(\mb^{-1)}$.
\end{lemma}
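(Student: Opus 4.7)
The key observation is that $G$ is scale-invariant: $G(cx) = G(x)$ for any nonzero scalar $c$, so it suffices to bound $G(x)$ directly (for the Gaussian vector $x$ before normalization). Writing $x = \sum_i \alpha_i v_i$ in the eigenbasis of $\mSigma$ (which is also the eigenbasis of $\mb$), rotational invariance of the Gaussian implies $\alpha_1,\dots,\alpha_d$ are i.i.d.\ $\mathcal{N}(0,1)$. The plan is to bound the numerator of $G(x)$ by the largest eigenvalue of $\mb$ times $\|x\|_2$, bound the denominator from below by the smallest eigenvalue of $\mb$ times $|\alpha_1|$, and then apply standard Gaussian concentration/anti-concentration.

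Concretely, using that $v_1$ is an eigenvector of $\mb$:
\begin{align*}
\|\bv{P}_{v_1^\perp} x\|_\mb^2 = \sum_{i \geq 2} \alpha_i^2 (\lambda - \lambda_i) \leq \lambda_1(\mb) \|x\|_2^2, \qquad \|\bv{P}_{v_1} x\|_\mb^2 = \alpha_1^2 (\lambda - \lambda_1) = \alpha_1^2 \lambda_d(\mb).
\end{align*}
Dividing gives
\[
G(x) \leq \sqrt{\frac{\lambda_1(\mb)}{\lambda_d(\mb)}} \cdot \frac{\|x\|_2}{|\alpha_1|} = \sqrt{\kappa(\mb^{-1})} \cdot \frac{\|x\|_2}{|\alpha_1|},
\]
using $\kappa(\mb^{-1}) = \lambda_1(\mb^{-1})/\lambda_d(\mb^{-1}) = \lambda_1(\mb)/\lambda_d(\mb)$.

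Next I would apply two standard facts, each failing with probability $O(d^{-10})$, combined by a union bound. First, chi-squared concentration (or the Gaussian Lipschitz bound on $\|\cdot\|_2$) gives $\|x\|_2 \leq O(\sqrt{d})$ with probability $1-O(d^{-10})$. Second, Gaussian anti-concentration gives $\Pr[|\alpha_1| \leq t] \leq \sqrt{2/\pi}\, t$, so taking $t = \Theta(d^{-10})$ yields $|\alpha_1| \geq \Omega(d^{-10})$ with the same failure probability. Plugging these bounds in gives
\[
G(x) \leq \sqrt{\kappa(\mb^{-1})} \cdot O(\sqrt{d}) \cdot O(d^{10}) = \sqrt{\kappa(\mb^{-1})} \cdot O(d^{10.5}),
\]
which (after tracking constants) yields the first claimed inequality.

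For the second inequality, bound $\kappa(\mb^{-1})$ using the standing assumption $\bigl(1+\tfrac{\gap}{150}\bigr)\lambda_1 \leq \lambda \leq \bigl(1+\tfrac{\gap}{100}\bigr)\lambda_1$: the denominator satisfies $\lambda - \lambda_1 \geq \lambda_1\gap/150$, while the numerator satisfies $\lambda - \lambda_d \leq \lambda \leq (1+\gap/100)\lambda_1 \leq 2\lambda_1$. Hence $\kappa(\mb^{-1}) \leq 300/\gap$, and $\sqrt{\kappa(\mb^{-1})} \leq 15/\sqrt{\gap}$. The only real obstacle is bookkeeping the constants so that they absorb cleanly into the $15$ and $10.5$; the underlying probabilistic content is entirely one application each of Gaussian tail and anti-concentration.
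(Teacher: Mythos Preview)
Your proof is correct and follows essentially the same approach as the paper: exploit scale-invariance, bound $G(x)$ by $\sqrt{\kappa(\mb^{-1})}$ times the ratio of a chi norm to a single Gaussian coordinate, then apply Gaussian concentration and anti-concentration. One small constant issue in your last paragraph: bounding $\lambda - \lambda_d \leq 2\lambda_1$ gives $\kappa(\mb^{-1}) \leq 300/\gap$ and hence $\sqrt{\kappa(\mb^{-1})} \approx 17.3/\sqrt{\gap}$, slightly exceeding $15$; using the tighter $\lambda \leq (1+\gap/100)\lambda_1 \leq 1.01\lambda_1$ (since $\gap \leq 1$) yields $\kappa(\mb^{-1}) \leq 151.5/\gap$ and $\sqrt{\kappa(\mb^{-1})} < 15/\sqrt{\gap}$ as claimed.
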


\begin{proof}
	\begin{align*}
		G(x_0) =& G(x) = \frac{\norm{\bv{P}_{v_1^{\perp}}x}_{\mb}} {\norm{\bv{P}_{v_1}x}_{\mb}} = \frac{\sqrt{\norm{x}_{\mb}^{2}-\left(v_1^\top\mb^{1/2}x\right)^{2}}}{\abs{v_1^\top \mb^{1/2}x}}
		=\frac{\sqrt{\sum_{i\geq2} \frac{(v_i^\top x)^2}{\lamiBinv{i}}}}{\sqrt{\frac{(v_1^\top x)^2}{\lamiBinv{1}}}}, \\
		\le& \sqrt{\kappa(\mb^{-1})} \cdot \frac{\sqrt{\sum_{i\geq2}(v_i^\top x)^2}}{\abs{v_1^\top x}}
	\end{align*}
	Since $\{v_i^\top x\}_i$ are independent standard normal Gaussian variables.
	By standard concentration arguments, with probability greater than $1- e^{-\Omega(d)}$, we have $\sqrt{\sum_{i\geq2}(v_i^\top x)^2} = O(\sqrt{d})$. Meanwhile, $v_1^\top x$ is just a one-dimensional standard Gaussian. It is easy to show $\Pr \left(\abs{v_1^\top x} \le \frac{1}{d^{10}} \right) = O\left (\frac{1}{d^{10}} \right)$, which finishes the proof.
\end{proof}

We now show that we can rapidly decrease our initial error to obtain the required $G(x) \le \frac{1}{\sqrt{10}}$ bound for Theorem \ref{thm:powermethod-perturb}.

\begin{theorem}[Approximate Shifted-and-Inverted Power Method -- Burn-In]\label{thm:init-offline}
	Suppose we initialize $x_0$ as in Lemma \ref{lem:init-random} and suppose we have access to a subroutine $\solve{\cdot}$ such that
	\begin{align*}
		\expec{\norm{\solve{x}-\mb^{-1}x}_{\mb}} \leq \frac{1}{3000 \kappa(\mb^{-1})d^{21}} \cdot \norm{\frac{1}{\lambda-x^\top \bv \Sigma x} x - \mb^{-1}x}_{\mb}
	\end{align*}
	where $\kappa(\mb^{-1}) = \lambda_1(\mb^{-1}) / \lambda_d(\mb^{-1)}$.
	Then the following procedure,
	\begin{align*}
		x_{t} = \solve{x_{t-1}}/\norm{\solve{x_{t-1}}}_2
	\end{align*}
	after $T = O\left(\log d + \log \kappa(\mb^{-1}))\right)$ iterations  satisfies:
	\begin{align*}
		G(x_T) \leq \frac{1}{\sqrt{10}},
	\end{align*}
	with probability greater than $1- O(\frac{1}{d^{10}})$.
	%with probability greater than $\frac{98}{100}$.
\end{theorem}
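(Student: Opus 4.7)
The plan is to argue by induction that, with high probability, each iteration of the burn-in procedure drives $G(x_t)$ down by a large multiplicative factor until it falls below $1/\sqrt{10}$. Lemma~\ref{lem:init-random} gives $G(x_0) \leq \sqrt{\kappa(\bv{B}^{-1})}\, d^{10.5}$ with probability $1 - O(d^{-10})$, so only $O(\log d + \log \kappa(\bv{B}^{-1}))$ contractions are needed.

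First, I would establish a deterministic per-iteration perturbation bound. Writing $\widehat{x} = \solve{x_{t-1}} = \bv{B}^{-1} x_{t-1} + \xi_t$, the triangle inequality in the $\bv{B}$-norm together with the exact progress estimates from the proof of Theorem~\ref{thm:powermethod}, namely $\norm{\bv{P}_{v_1^{\perp}} \bv{B}^{-1} x_{t-1}}_{\bv{B}} \leq \frac{G(x_{t-1})}{100}\abs{\alpha_1}\sqrt{\lambda_1(\bv{B}^{-1})}$ and $\norm{\bv{P}_{v_1} \bv{B}^{-1} x_{t-1}}_{\bv{B}} = \abs{\alpha_1}\sqrt{\lambda_1(\bv{B}^{-1})}$ with $\alpha_1 = v_1^\top x_{t-1}$, and the fact that $G$ is invariant under rescaling, yield
\[
G(x_t) = G(\widehat{x}) \leq \frac{G(x_{t-1})/100 + \rho_t}{1 - \rho_t}, \qquad \rho_t \defeq \frac{\norm{\xi_t}_{\bv{B}}}{\abs{\alpha_1}\sqrt{\lambda_1(\bv{B}^{-1})}},
\]
valid whenever $\rho_t < 1$. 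Specializing to $\rho_t \leq 1/200$ gives $G(x_t) \leq (2 G(x_{t-1}) + 1)/199$, which is at most $G(x_{t-1})/50$ whenever $G(x_{t-1}) \geq 1$, and at most $3/199 < 1/\sqrt{10}$ once $G(x_{t-1}) < 1$.

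Second, I would turn the solver guarantee into a per-iteration high-probability bound on $\rho_t$. By Corollary~\ref{cor:constant_factor_corollary}, $\norm{\tfrac{1}{x^\top \bv{B} x} x - \bv{B}^{-1} x}_{\bv{B}} \leq \abs{\alpha_1}\sqrt{\lambda_1(\bv{B}^{-1})}\, G(x)$, so the hypothesis on $\solve{\cdot}$ gives
\[
\expec{\norm{\xi_t}_{\bv{B}} \mid x_{t-1}} \leq \frac{\abs{\alpha_1}\sqrt{\lambda_1(\bv{B}^{-1})}\, G(x_{t-1})}{3000\, \kappa(\bv{B}^{-1})\, d^{21}}.
\]
Markov's inequality then yields $\Pr\bigl[\rho_t > 1/200 \bigm| x_{t-1}\bigr] \leq G(x_{t-1})/(15\, \kappa(\bv{B}^{-1})\, d^{21})$.

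Finally, I would combine these ingredients via induction, conditioned on the probability-$(1 - O(d^{-10}))$ event from Lemma~\ref{lem:init-random}. As long as the inductive hypothesis $G(x_{t-1}) \leq G(x_0) \leq \sqrt{\kappa(\bv{B}^{-1})}\, d^{10.5}$ holds, the per-step failure probability is at most $1/(15\sqrt{\kappa(\bv{B}^{-1})}\, d^{10.5})$, so a union bound over $T = O(\log d + \log \kappa(\bv{B}^{-1}))$ steps contributes only $O(d^{-10})$. I would also verify that $\abs{\alpha_1}$ stays bounded away from zero along the trajectory: for any unit $x$ one has $G(x)^2 \geq (1 - \alpha_1^2)/\alpha_1^2$, so $G(x) \leq \sqrt{\kappa(\bv{B}^{-1})}\, d^{10.5}$ forces $\alpha_1^2 \geq 1/(1 + G(x)^2)$, keeping the denominators $\abs{\alpha_1}\sqrt{\lambda_1(\bv{B}^{-1})}$ nontrivial throughout. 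The main subtlety I anticipate is precisely this bookkeeping: the solver bound holds only in expectation, so Markov's inequality must be applied inside the induction with careful conditioning on the history, ensuring that the hypothesis $G(x_{t-1}) \leq G(x_0)$ and the induced lower bound on $\abs{\alpha_1}$ are in force whenever a per-step failure probability is being controlled.
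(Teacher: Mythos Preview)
Your proposal is correct and follows essentially the same route as the paper: bound the numerator and denominator of $G(\widehat{x})$ via the triangle inequality in the $\bv{B}$-norm, apply Markov's inequality to the solver error conditionally on $x_{t-1}$, maintain the inductive invariant $G(x_{t-1}) \leq G(x_0)$, and union-bound over $T = O(\log d + \log \kappa(\bv{B}^{-1}))$ iterations. Your normalization via $\rho_t = \norm{\xi_t}_{\bv{B}}/(\abs{\alpha_1}\sqrt{\lambda_1(\bv{B}^{-1})})$ is a slightly cleaner bookkeeping device than the paper's factorization (which instead manipulates $\norm{\xi}_{\bv{B}}$ against $\min\bigl\{\sqrt{\sum_{i\geq 2}\alpha_i^2/\lambda_i(\bv{B}^{-1})},\sqrt{\alpha_1^2/\lambda_1(\bv{B}^{-1})}\bigr\}$ using a $\min\{1,1/G(x)\}$ trick), but the substance is identical.
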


\begin{proof}
As before, we first bound the numerator and denominator of $G(\xhat)$ more carefully as follows:
\begin{align*}
	\begin{array}{lrl}
	\textbf{Numerator:}	&\norm{\bv{P}_{v_1^{\perp}}\xhat}_{\mb}
	&\leq \norm{\bv{P}_{v_1^{\perp}}\mb^{-1} x}_{\mb} + \norm{\bv{P}_{v_1^{\perp}}\xi}_\mb
	\leq \norm{\bv{P}_{v_1^{\perp}}\mb^{-1} x}_{\mb} + \norm{\xi}_{\mb} \\
	& &=\sqrt{\sum_{i\geq2}\left(v_{i}^{T}B^{-1/2}x\right)^{2}}+\norm{\xi}_{\mb} = \sqrt{\sum_{i\geq2} \alpha_{i}^{2}\lamiBinv{i}} + \norm{\xi}_{\mb}, \\
	\textbf{Denominator:} &\norm{\bv{P}_{v_1}\xhat}_{\mb}
	&\geq  \norm{\bv{P}_{v_1}\mb^{-1} x}_{\mb} - \norm{\bv{P}_{v_1}\xi}_\mb
	\geq \norm{\bv{P}_{v_1}\mb^{-1} x }_{\mb} - \norm{\xi}_{\mb} \\
	&&=\abs{v_{i}^{T}{\mb}^{-1/2}x}-\norm{\xi}_{\mb} = \alpha_1\sqrt{\lamiBinv{1}} - \norm{\xi}_{\mb}
	\end{array}
\end{align*}

We now use the above estimates to bound $G(\xhat)$.
\begin{align*}
	G(\xhat)  &\leq \frac{\sqrt{\sum_{i\geq2} \alpha_{i}^{2}\lamiBinv{i}} + \norm{\xi}_{\mb}}{\alpha_1\sqrt{\lamiBinv{1}} - \norm{\xi}_{\mb}} 
	\leq \frac{\lamiBinv{2}\sqrt{\sum_{i\geq2} \frac{\alpha_{i}^{2}}{\lamiBinv{i}}} + \norm{\xi}_{\mb}}{\lamiBinv{1}\sqrt{\frac{\alpha_1^2}{\lamiBinv{1}}} - \norm{\xi}_{\mb}} \\
	&= G(x) \frac{\lamiBinv{2} + \norm{\xi}_{\mb}/\sqrt{\sum_{i\geq2} \frac{\alpha_{i}^{2}}{\lamiBinv{i}}}}{\lamiBinv{1} - \norm{\xi}_{\mb}/\sqrt{\frac{\alpha_1^2}{\lamiBinv{1}}}}
\end{align*}
By Lemma \ref{lem:init-random}, we know with at least probability $1- O(\frac{1}{d^{10}})$, 
we have $G(x_0) \le \sqrt{\kappa(\mb^{-1})} d^{10.5}$.

Conditioned on high probability result of $G(x_0)$, we now use induction to prove $G(x_t) \le G(x_0)$. It trivially holds for $t=0$.
Suppose we now have $G(x) \le G(x_0)$, then by the condition in Theorem \ref{thm:init-offline} and Markov inequality, we know with probability greater than $1-\frac{1}{100\sqrt{\kappa(\mb^{-1})} d^{10.5}}$ we have: 
\begin{align*}
	\norm{\xi}_\mb \le &\frac{1}{30\sqrt{\kappa(\mb^{-1})} d^{10.5}} \cdot \norm{\frac{1}{\lambda-x^\top \bv \Sigma x} x - \mb^{-1}x}_{\mb}\\
	\le & \frac{1}{30} \cdot \norm{\frac{1}{\lambda-x^\top \bv \Sigma x} x - \mb^{-1}x}_{\mb} \min \left \{1, \frac{1}{G(x_0)} \right \} \\
	\le &\frac{1}{30} \cdot \norm{\frac{1}{\lambda-x^\top \bv \Sigma x} x - \mb^{-1}x}_{\mb} \min \left \{1, \frac{1}{G(x)} \right \} \\
	\le & \frac{\lamiBinv{1}-\lamiBinv{2}}{4}\min\left\{\sqrt{\sum_{i\geq2} \frac{\alpha_{i}^{2}}{\lamiBinv{i}}}, \sqrt{\frac{\alpha_1^2}{\lamiBinv{1}}}\right\}
\end{align*}
The last inequality uses Corollary~\ref{cor:constant_factor_corollary} with the fact that $\lamiBinv{2} \le \frac{1}{100}\lamiBinv{1}$. Therefore, we have:
We will have:
\begin{equation*}
	G(\xhat) \le  \frac{\lamiBinv{1} + 3\lamiBinv{2}}{3\lamiBinv{1}+\lamiBinv{2}} \times G(x)
	\le \frac{1}{2}G(x)
\end{equation*}
This finishes the proof of induction.

Finally, by union bound, we know with probability greater than $1- O(\frac{1}{d^{10}})$ in $T = O(\log d + \log \kappa(\mb^{-1}))$ steps, 
we have:
\begin{equation*}
	G(x_T) \le  \frac{1}{2^T} G(x_0) \le \frac{1}{\sqrt{10}}
\end{equation*}

%Finally we apply Lemma \ref{random_vector_init}, giving $T = 4 + \log(12000 \sqrt{d/\gap}) = O(\log (d/\gap))$. 
\end{proof}
\section{Offline Eigenvector Computation}\label{sec:offline}

In this section we show how to instantiate the framework of Section \ref{framework} in order to compute an approximate top eigenvector in the offline setting. As discussed, in the offline setting we can trivially compute the Rayleigh quotient of a vector in $\nnz(\ma)$ time as we have explicit access to $\ma^\top \ma$. Consequently the bulk of our work in this section is to show how we can solve linear systems in $\mb$ efficiently in expectation, allowing us to apply Corollary \ref{cor:constant_factor_corollary} of Theorem \ref{thm:powermethod-perturb}.

In Section~\ref{sec:offline:svrg} we first show how Stochastic Variance Reduced Gradient (SVRG) \cite {johnson2013accelerating} can be adapted to solve linear systems of the form $\mb x = b$.
If we wanted, for example, to solve a linear system in a positive definite matrix like $\bv{A}^\top  \bv A$, we would optimize the objective function $f(x) = \frac{1}{2}x^\top \bv{A}^\top  \bv A x - b^\top x$. This function can be written as the sum of $n$ \emph{convex components}, $\psi_i(x) = \frac{1}{2} x^\top \left (a_i a_i^\top \right )x - \frac{1}{n}b^\top x$. In each iteration of traditional gradient descent, one computes the full gradient of $f(x_i)$ and takes a step in that direction. In stochastic gradient methods, at each iteration, a single component is sampled, and the step direction is based only on the gradient of the sampled component. Hence, we avoid a full gradient computation at each iteration, leading to runtime gains.

Unfortunately, while we have access to the rows of $\bv{A}$ and so can solve systems in $\bv A^\top \bv A$, it is less clear how to solve systems in $\bv{B} = \lambda \bv I - \bv A^\top \bv A$. To do this, we will split our function into components of the form $\psi_i(x) = \frac{1}{2} x^\top \left (w_i \bv I - a_i a_i^\top \right )x - \frac{1}{n}b^\top x$ for some set of weights $w_i$ with $\sum_{i \in [n]} w_i = \lambda$.

Importantly, $(w_i \bv I - a_i a_i^\top)$ may not be positive semidefinite. That is, we are minimizing a sum of functions which is convex, but consists of non-convex components. While recent results for minimizing such functions could be applied directly \cite{shalev2015sdca,csiba2015primal} here we show how to obtain stronger results by using a more general form of SVRG and analyzing the specific properties of our function (i.e. the variance).

Our analysis shows that we can make constant factor progress in solving linear systems in $\bv{B}$ in time $O\left (\nnz(\bv A) + \frac{d\nrank(\bv A)}{\gap^2} \right )$. If $\frac{d\nrank(\bv A)}{\gap^2} \le \nnz(\bv A)$ this gives a runtime proportional to the input size -- the best we could hope for. 
If not, we show in Section~\ref{sec:offline:acceleration}  that it is possible to \emph{accelerate} our system solver, achieving runtime  $\tilde O \left (\frac{\nnz(\bv A)^{3/4}(d\nrank(\bv A))^{1/4}}{\sqrt{\gap}} \right )$. This result uses the work of \cite{frostig2015regularizing, lin2015catalyst} on accelerated approximate proximal point algorithms. 

With our solvers in place, in
Section~\ref{sec:offline:results} we pull our results together, showing how to use these solvers in the framework of Section \ref{framework} to give faster running times for offline eigenvector computation.

\subsection{SVRG Based Solver}
\label{sec:offline:svrg}

Here we provide a sampling based algorithm for solving linear systems in $\mb$. In particular we provide an algorithm for solving the more general problem where we are given a strongly convex function that is a sum of possibly non-convex functions that obey smoothness properties. We provide a general result on bounding the progress of an algorithm that solves such a problem by non-uniform sampling in Theorem~\ref{lem:svrg-nonconv} and then in the remainder of this section we show how to bound the requisite quantities for solving linear systems in $\mb$.

\newcommand{\avgsmooth}{\overline{S}}

\begin{theorem}[SVRG for Sums of Non-Convex Functions]
\label{lem:svrg-nonconv} 
Consider a set of functions, $\{\psi_{1}, \psi_2,...\psi_n \}$, each mapping $\R^{d}\rightarrow\R$.
Let $f(x) = \sum_{i}\psi_{i}(x)$ and let $\opt{x} \eqdef \argmin_{x \in \R^{d}} f(x)$. Suppose we have a probability distribution $p$ on $[n]$, 
%i.e. $p \in \R^n$ with $p_i \geq 0$ and $\sum_{i} p_i = 1$, 
and that starting from some initial point $x_{0} \in \R^d$ in each iteration $k$ we pick $i_{k} \in [n]$ independently with probability $p_{i_k}$ and let 
\[
x_{k+1}
:= 
x_{k}
-\frac{\eta}{p_{i}} 
\left(\grad\psi_{i}(x_k)-\grad\psi_{i}(x_0 )\right) + \eta \grad f(x_0)
\]
for some $\eta$. If $f$ is $\mu$-strongly convex and if for all $x \in \R^d$
we have
\begin{equation}
\label{eq:svrg-nonconvex-avgsmooth}
\sum_{i \in [n]} \frac{1}{p_{i}} \norm{\grad \psi_{i}(x) - \grad\psi_{i}(\opt{x})}_{2}^{2} 
\leq 
2 \avgsmooth \left[f(x) - f(\opt{x}) \right],
\end{equation}
where $\avgsmooth$ is a variance parameter,
then for all $m \geq 1$ we have
\[
\E \left[\frac{1}{m} \sum_{k\in[m]} f(x_{k}) - f(\opt{x})\right] \leq \frac{1}{1-2\eta\bar{S}} 
\left[\frac{1}{\mu\eta m}+2\eta\avgsmooth \right] \cdot 
\left[f(x_{0})-f(\opt{x})\right]
\]
Consequently, if we pick $\eta$ to be a sufficiently small multiple of $1/\bar{S}$
then when $m = O(\overline{S} / \mu)$ we can decrease the error by
a constant multiplicative factor in expectation.
\end{theorem}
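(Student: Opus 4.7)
The plan is to follow the standard SVRG-style analysis of \cite{johnson2013accelerating}, with the key observation that hypothesis~\eqref{eq:svrg-nonconvex-avgsmooth} plays exactly the role that per-component smoothness plays in the usual argument. Crucially, this bound refers only to the aggregate $f$, not to any individual $\psi_i$, so it remains usable even when some components are non-convex. First I would rewrite the update as $x_{k+1} = x_k - \eta g_k$ with
\begin{align*}
g_k \defeq \frac{1}{p_{i_k}}\bigl(\grad\psi_{i_k}(x_k) - \grad\psi_{i_k}(x_0)\bigr) + \grad f(x_0),
\end{align*}
note that the weighting by $1/p_{i_k}$ makes $\mathbb{E}[g_k \mid x_k] = \grad f(x_k)$, and expand
\begin{align*}
\mathbb{E}\bigl[\|x_{k+1}-\opt{x}\|_2^2 \bigm| x_k\bigr] = \|x_k-\opt{x}\|_2^2 - 2\eta\,\inprod{\grad f(x_k),\,x_k - \opt{x}} + \eta^2\,\mathbb{E}\bigl[\|g_k\|_2^2 \bigm| x_k\bigr].
\end{align*}

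The heart of the proof is controlling $\mathbb{E}[\|g_k\|_2^2 \mid x_k]$. I would split $g_k = A_k + B_k$ around $\grad\psi_{i_k}(\opt{x})$, with
\begin{align*}
A_k \defeq \frac{1}{p_{i_k}}\bigl(\grad\psi_{i_k}(x_k) - \grad\psi_{i_k}(\opt{x})\bigr),
\end{align*}
and $B_k$ the remaining $x_0$-dependent piece, noting that $\mathbb{E}[A_k\mid x_k] = \grad f(x_k)$ (using $\grad f(\opt{x})=0$) and $\mathbb{E}[B_k\mid x_k]=0$. Applying $\|A_k+B_k\|_2^2 \leq 2\|A_k\|_2^2 + 2\|B_k\|_2^2$ and, for the zero-mean $B_k$, the elementary bound $\mathbb{E}\|B_k\|_2^2 \leq \mathbb{E}\bigl\|\tfrac{1}{p_{i_k}}(\grad\psi_{i_k}(x_0) - \grad\psi_{i_k}(\opt{x}))\bigr\|_2^2$, hypothesis~\eqref{eq:svrg-nonconvex-avgsmooth} evaluated at $x_k$ and at $x_0$ yields
\begin{align*}
\mathbb{E}\bigl[\|g_k\|_2^2 \bigm| x_k\bigr] \leq 4\bar{S}\bigl[f(x_k) - f(\opt{x})\bigr] + 4\bar{S}\bigl[f(x_0) - f(\opt{x})\bigr].
\end{align*}

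To conclude, I would substitute this back, invoke convexity of $f$ for $\inprod{\grad f(x_k), x_k-\opt{x}} \geq f(x_k)-f(\opt{x})$, and telescope over $k=0,\ldots,m-1$. The distance terms collapse, and dropping the nonnegative terminal distance gives
\begin{align*}
(2\eta - 4\eta^2\bar{S})\sum_{k \in [m]} \mathbb{E}[f(x_k)-f(\opt{x})] \leq \|x_0-\opt{x}\|_2^2 + 4m\eta^2\bar{S}\,\bigl[f(x_0)-f(\opt{x})\bigr].
\end{align*}
A final application of $\mu$-strong convexity, $\|x_0-\opt{x}\|_2^2 \leq \tfrac{2}{\mu}(f(x_0)-f(\opt{x}))$, followed by division by $m(2\eta-4\eta^2\bar{S})$, matches the stated inequality; then choosing $\eta$ a sufficiently small constant multiple of $1/\bar{S}$ and $m = \Theta(\bar S/\mu)$ produces the advertised constant-factor expected decrease.

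The main obstacle I anticipate is the centering step for $B_k$. The classical SVRG argument bounds this term using per-component smoothness constants, which are unavailable here because the individual $\psi_i$ may be non-convex; the saving trick is that \eqref{eq:svrg-nonconvex-avgsmooth} is a purely global statement about $f$, so once $B_k$ is centered (using $\mathrm{Var}(Y) \leq \mathbb{E}\|Y\|^2$) the same hypothesis applies verbatim at the reference point $x_0$. Beyond this point the argument is routine bookkeeping.
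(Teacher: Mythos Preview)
Your proposal is correct and follows essentially the same argument as the paper's proof: the same expansion of $\E\|x_{k+1}-\opt{x}\|_2^2$, the same splitting of the stochastic direction around $\grad\psi_{i_k}(\opt{x})$, the same use of $\E\|Y-\E Y\|_2^2\le\E\|Y\|_2^2$ on the centered piece, and the same telescoping plus strong convexity to finish. Your identification of the centering step as the only place where non-convexity of the components could bite, and why hypothesis~\eqref{eq:svrg-nonconvex-avgsmooth} sidesteps it, matches the paper's reasoning exactly.
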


\begin{proof}
We first note that $\E_{i_{k}}[x_{k+1} - x_{k}] = \eta \grad f(x_{k})$. This is, in each iteration, in expectation, we make a step in the direction of the gradient. Using this fact we have:
\begin{align*}
\E_{i_{k}} \norm{x_{k+1} - \opt{x}}_{2}^{2} &= \E_{i_{k}} \norm{(x_{k+1} -x_k) + (x_k - \opt{x}) }_{2}^{2}  \\
&= \norm{x_k - \opt{x} }_{2}^{2} - 2 \E_{i_{k}}(x_{k+1} -x_k)^\top(x_k - \opt{x}) + \E_{i_{k}}\norm{x_{k+1} - x_k }_{2}^{2} \\
&=\norm{x_k - \opt{x} }_{2}^{2} - 2 \eta \grad f(x_k)^{\top} \left(x_{k} - \opt{x} \right)\\
&+ \sum_{i \in [n]} \eta^{2} p_{i} \normFull{
	\frac{1}{p_{i}} 
	\left(\grad \psi_{i}(x_{k}) - \grad\psi_{i}(x_{0}) \right)
	+ \grad f(x_{0})}_{2}^{2}
\end{align*}
%Using \eqref{eq:svrg-nonconvex-avgsmooth} and the inequalities  $\E\norm{x-\E x}_{2}^{2} \leq \E\norm{x}_{2}^{2}$ and 
We now apply the fact that $\norm{x + y}_2^2 \leq 2\norm{x}_2^2 + 2\norm{y}_2^2$ to give:
\begin{align*}
\sum_{i \in [n]}
& p_i \normFull{\frac{1}{p_i} 
\left(\grad \psi_i (x_k) - \grad \psi_i (x_0)\right) + \grad f(x_0)
}_2^2
\\
&
\leq 
\sum_{i \in [n]}
2 p_i \normFull{\frac{1}{p_i} 
	\left(\grad \psi_i (x_k) - \grad \psi_i (\opt{x}) \right)
}_2^2
+
\sum_{i \in [n]}
2 p_i \normFull{\frac{1}{p_i} 
	\left(\grad \psi_i (x_0) - \grad \psi_i(\opt{x}) \right) - \grad f(x_0)
}_2^2.
\end{align*}
Then, using that $\grad f(\opt{x}) = 0$ by optimality, that $\E\norm{x-\E x}_{2}^{2} \leq \E\norm{x}_{2}^{2}$, and  \eqref{eq:svrg-nonconvex-avgsmooth} we have:
\begin{align*}
\sum_{i \in [n]}
& p_i \normFull{\frac{1}{p_i} 
\left(\grad \psi_i (x_k) - \grad \psi_i (x_0)\right) + \grad f(x_0)
}_2^2
\\
&
\le
\sum_{i \in [n]}
\frac{2}{p_i}
 \normFull{\grad \psi_i (x_k) - \grad \psi_i (\opt{x})
}_2^2
+
\sum_{i \in [n]}
2 p_i \normFull{\frac{1}{p_i} 
	\left(\grad \psi_i (x_0) - \grad \psi_i(\opt{x})) - (\grad f(x_0) - \grad f(\opt{x})\right)
}_2^2
\\
&
\leq
\sum_{i \in [n]}
\frac{2}{p_i}
\normFull{\grad \psi_i (x_k) - \grad \psi_i (\opt{x})
}_2^2
+
\sum_{i \in [n]}
2 p_i \normFull{\frac{1}{p_i} 
	\grad \psi_i (x_0) - \grad \psi_i(\opt{x}))
}_2^2
\\
&
\leq 4 \avgsmooth
\left[ f(x_{k})-f(\opt{x}) + f(x_{0}) - f(\opt{x}) \right]
\end{align*}

Since $f(\opt{x}) - f(x_k) \geq \grad f(x_k)^\top (\opt{x} - x_k)$ by the convexity of $f$, these inequalities imply
\begin{align*}
\E_{i_{k}}\norm{x_{k+1} - \opt{x}}_{2}^{2} 
& 
\leq\norm{x_{k} - \opt{x}}_{2}^{2} 
- 2 \eta \left[f(x_{k}) - f(\opt{x})\right] 
+ 4 \eta^{2} \avgsmooth \left[f(x_{k}) - f(\opt{x}) + f(x_{0}) - f(\opt{x})\right] 
\\
& 
= \norm{x_{k} - \opt{x}}_{2}^{2}-2\eta(1-2\eta S)\left(f(x_{k})-f(\opt{x})\right)+4\eta^{2}\bar{S}\left(f(x_{0})-f(\opt{x})\right)
\end{align*}
Rearranging, we have:
\begin{align*}
2\eta(1-2\eta S)\left(f(x_{k})-f(\opt{x})\right) \le \norm{x_{k} - \opt{x}}_{2}^{2} - \E_{i_{k}}\norm{x_{k+1} - \opt{x}}_{2}^{2} + 4\eta^{2}\bar{S}\left(f(x_{0})-f(\opt{x})\right).
\end{align*}
And summing over all iterations and taking expectations we have:
\begin{align*}
\E \left [2\eta(1-2\eta \bar{S})\sum_{k\in[m]}f(x_{k})-f(\opt{x}) \right ] \le \norm{x_0 - \opt{x}}_2^2 + 4m\eta^{2}\bar{S}\left[f(x_{0})-f(\opt{x})\right].
\end{align*}
Finally, we use that
by strong convexity, $\norm{x_{0}-\opt{x}}_{2}^{2}\leq\frac{2}{\mu}\left(f(x_{0})-f(\opt{x})\right)$ to obtain:
	\[
	\E \left [2\eta(1-2\eta \bar{S})\sum_{k\in[m]}f(x_{k})-f(\opt{x})\right ]\leq \frac{2}{\mu}\left[f(x_{0})-f(\opt{x})\right]+4m\eta^{2}\bar{S}\left[f(x_{0})-f(\opt{x})\right]
	\]
	and thus
	\[
	\E \left [\frac{1}{m}\sum_{k\in[m]}f(x_{k})-f(\opt{x}) \right ]\leq\frac{1}{1-2\eta\bar{S}}\left[\frac{1}{\mu\eta m}+2\eta\bar{S}\right]\cdot\left[f(x_{0})-f(\opt{x})\right]
	\]
	
\end{proof}

Theorem \ref{lem:svrg-nonconv} immediately yields a solver for $\bv{B}x = b$. Finding the minimum norm solution to this system is equivalent to minimizing $f(x) = \frac{1}{2} x^\top \bv{B} x - b^\top x$. If we take the common approach of applying a smoothness bound for each $\psi_i$ along with a strong convexity bound on $f(x)$ we obtain:
\begin{lemma}[Simple Variance Bound for SVRG]\label{simple_variance_bound}
Let
\begin{align*}
	\psi_i(x) \defeq \frac{1}{2} x^\top \left(\frac{\lambda\norm{a_i}_2^2}{\norm{\bv A}_F^2} \mI - a_i a_i^\top \right) x 
	- \frac{1}{n}b^\top x
\end{align*}
so we have $\sum_{i\in[n]} \psi_i(x) = f(x) = \frac{1}{2} x^\top \bv{B} x - b^\top x$.
Setting $p_i = \frac{\norm{a_i}_2^2}{\norm{\bv A}_F^2}$ for all $i$, we have 
\begin{align*}
\sum_{i \in [n]} \frac{1}{p_i} \norm{\grad \psi_i(x)-\grad \psi_i(\opt{x})}_2^2 = O \left (
\frac{\norm{\bv A}_F^4}{\lambda - \lambda_1}
\left[f(x) - f(\opt{x})\right] \right )
\end{align*}
\end{lemma}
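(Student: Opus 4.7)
The plan is a direct computation exploiting that each $\psi_i$ is quadratic. Let $\Delta := x - \opt{x}$ and define $\mM_i := \frac{\lambda \norm{a_i}_2^2}{\norm{\bv A}_F^2}\mI - a_i a_i^\top$ so that $\grad \psi_i(x) - \grad \psi_i(\opt x) = \mM_i \Delta$. The task reduces to bounding $\sum_i \frac{1}{p_i} \norm{\mM_i \Delta}_2^2$ in terms of $f(x) - f(\opt x) = \tfrac{1}{2} \Delta^\top \mb \Delta$.

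For the per-term bound I would apply $\norm{u - v}_2^2 \le 2\norm{u}_2^2 + 2\norm{v}_2^2$ to split $\mM_i$ into its scaled-identity and rank-one pieces, yielding
$$\norm{\mM_i \Delta}_2^2 \le \frac{2\lambda^2 \norm{a_i}_2^4}{\norm{\bv A}_F^4}\norm{\Delta}_2^2 + 2\norm{a_i}_2^2 (a_i^\top \Delta)^2.$$
Dividing by $p_i = \norm{a_i}_2^2/\norm{\bv A}_F^2$ and summing, the identity contribution telescopes through $\sum_i \norm{a_i}_2^2 = \norm{\bv A}_F^2$, while the rank-one contributions combine via $\sum_i (a_i^\top \Delta)^2 = \Delta^\top \mSigma \Delta$, giving
$$\sum_i \frac{1}{p_i}\norm{\mM_i \Delta}_2^2 \;\le\; 2\lambda^2 \norm{\Delta}_2^2 + 2\norm{\bv A}_F^2 \,\Delta^\top \mSigma \Delta.$$

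The final step is to convert both contributions into $\Delta^\top \mb \Delta$. Since $\mb = \lambda \mI - \mSigma \succeq (\lambda - \lambda_1)\mI$, we have $\norm{\Delta}_2^2 \le \frac{\Delta^\top \mb \Delta}{\lambda - \lambda_1}$, and since $\mSigma \preceq \lambda_1 \mI$ we get $\Delta^\top \mSigma \Delta \le \frac{\lambda_1}{\lambda - \lambda_1}\Delta^\top \mb \Delta$. Combining with the trivial estimates $\lambda \le 2\lambda_1$ and $\lambda_1 \le \norm{\bv A}_2^2 \le \norm{\bv A}_F^2$, each contribution is at most $O(\norm{\bv A}_F^4/(\lambda - \lambda_1)) \cdot \Delta^\top \mb \Delta$, and using $\Delta^\top \mb \Delta = 2[f(x) - f(\opt x)]$ yields the claim.

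I do not foresee a real obstacle: the argument is a one-line triangle inequality followed by routine spectral bounds. The only subtlety is that the sampling distribution $p_i \propto \norm{a_i}_2^2$ is precisely what makes the $\norm{a_i}_2^4/\norm{\bv A}_F^4$ factor telescope, avoiding any $1/\min_i \norm{a_i}_2^2$ blowup. This is also exactly why one recovers a stable-rank dependence in Theorem~\ref{lem:svrg-nonconv} (via $\norm{\bv A}_F^2 = \nrank(\bv A) \cdot \lambda_1$) rather than something depending on $n$ or on individual row norms.
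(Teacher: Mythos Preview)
Your argument is correct and close in spirit to the paper's, but not identical. The paper bounds each term via the per-$i$ operator norm $\norm{\mM_i}_2 \le \frac{\lambda\norm{a_i}_2^2}{\norm{\bv A}_F^2} + \norm{a_i}_2^2$, squares, divides by $p_i$, and then converts $\norm{\Delta}_2^2$ directly into $f(x)-f(\opt x)$ via strong convexity alone. You instead split $\mM_i\Delta$ with the triangle inequality and handle the rank-one piece using $\mSigma \preceq \frac{\lambda_1}{\lambda-\lambda_1}\mb$. That spectral conversion of the $\mSigma$-term is precisely the device the paper uses in its \emph{improved} variance bound (Lemma~\ref{variance_lemma}); in your version the gain is hidden because the crude $\norm{u-v}_2^2 \le 2\norm{u}_2^2 + 2\norm{v}_2^2$ up front costs exactly the cross-term that would have turned $\lambda^2\norm{\Delta}_2^2 - 2\lambda\norm{\Delta}_{\mSigma}^2$ into $\lambda\norm{\Delta}_{\mb}^2$. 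So your route is sound and essentially equivalent for this lemma, and already contains the key idea needed for the sharper $O(\lambda_1\norm{\bv A}_F^2/(\lambda-\lambda_1))$ bound --- you would only need to replace the triangle inequality with the exact expansion of $\norm{\mM_i\Delta}_2^2$.
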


\begin{proof}
	We first compute, for all $i \in [n]$ 
	\begin{align}\label{basic_gradient_computation}
	\grad \psi_i(x) =
	\left(\frac{\lambda \norm{a_i}_2^2}{\normFro{\ma}^2} \mI 
	- a_i a_i^\top \right) x 
	- \frac{1}{n}b.
	\end{align}
	We have that each $\psi_i$ is $\frac{\lambda \norm{a_i}_2^2}{\normFro{\ma}^2} + \norm{a_i}^2$ smooth with respect to $\norm{\cdot}_2$. Specifically,
\begin{align*}
\norm{\grad \psi_i(x)-\grad \psi_i(\opt{x})}_2 &= \norm {\left (\frac{\lambda \norm{a_i}_2^2}{\normFro{\ma}^2} \mI 
	- a_i a_i^\top \right) (x-\opt{x}) }_2 \\
	&\le \left (\frac{\lambda \norm{a_i}_2^2}{\normFro{\ma}^2} + \norm{a_i}^2 \right ) \norm{x-\opt{x}}_2.
\end{align*}

Additionally, 
$f(x)$ is $\lambda_d(\bv B) = \lambda - \lambda_1$ strongly convex so we have $\norm{x - \opt{x}}_2^2 \leq \frac{2}{\lambda - \lambda_1} \left[f(x) - f(\opt{x})\right]$ and putting all this together we have
\begin{align*}
\sum_{i \in [n]} \frac{1}{p_i} \norm{\grad \psi_i(x) - \grad \psi_i (\opt{x})}_2^2
&\leq
\sum_{i \in [n]} \frac{\norm{\bv A}_F^2}{\norm{a_i}_2^2}
\cdot
\norm{a_i}_2^4 \left(\frac{\lambda}{\norm{\bv A}_F^2} + 1\right)^2
\cdot \frac{2}{\lambda - \lambda_1}
\left[f(x) - f(\opt{x})\right]
\\
&= O \left (
\frac{\norm{\bv A}_F^4}{\lambda - \lambda_1}
\left[f(x) - f(\opt{x})\right] \right )
\\
\end{align*}
where the last step uses that $\lambda \le 2\lambda_1 \le 2\norm{\bv A}_F^2$ so $\frac{\lambda}{\norm{\bv A}_F^2} \le 2$.
\end{proof}

Assuming that $\lambda = (1+c \cdot \gap)\lambda_1$ for some constant $c$, the above bound means that we can make constant progress on our linear system by setting $m = O(\avgsmooth/\mu) = O\left(\frac{\norm{\bv A}^4_F}{(\lambda-\lambda_1)^2}\right)= O\left ( \frac{\nrank(\bv{A})^2}{\gap^2} \right )$. This dependence on stable rank matches the dependence given in \cite{shamir2015stochastic} (see discussion in Section \ref{previous_work_offline}), however we can show that it is suboptimal. 
We show to improve the bound to $O\left ( \frac{\nrank(\bv{A})}{\gap^2} \right )$ by using a better variance analysis. Instead of bounding each $\norm{\grad \psi_i(x) - \grad \psi_i (\opt{x})}_2^2$ term using the smoothness of $\psi_i$, we more carefully bound the sum of these terms.

\begin{lemma}(Improved Variance Bound for SVRG)\label{variance_lemma} For $i \in [n]$ let
\begin{align*}
	\psi_i(x) \defeq \frac{1}{2} x^\top \left(\frac{\lambda\norm{a_i}_2^2}{\norm{\bv A}_F^2} \mI - a_i a_i^\top \right) x 
	- \frac{1}{n}b^\top x
\end{align*}
so we have $\sum_{i\in[n]} \psi_i(x) = f(x) = \frac{1}{2} x^\top \bv{B} x - b^\top x$.
Setting $p_i = \frac{\norm{a_i}_2^2}{\norm{\bv A}_F^2}$ for all $i$, we have for all $x$
	\[
	\sum_{i \in [n]}
	\frac{1}{p_i} \normFull{\grad \psi_i(x) - \grad \psi_i(\opt{x})}_2^2
	\leq 
	\frac{4\lambda_1 \normFro{\ma}^2}{\lambda - \lambda_1} \cdot \left[f(x) - f(\opt{x}) \right].
	\]
\end{lemma}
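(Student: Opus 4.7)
The plan is to avoid the naive per-component smoothness bound (which loses a factor of the stable rank) and instead sum the operators $\frac{1}{p_i}M_i^2$ in closed form, where $M_i \defeq \frac{\lambda\norm{a_i}_2^2}{\normFro{\ma}^2}\mI - a_i a_i^\top$ so that $\grad\psi_i(x)-\grad\psi_i(\opt{x}) = M_i(x-\opt{x})$. Writing $z \defeq x - \opt{x}$, my goal reduces to bounding $z^\top \bigl(\sum_i \tfrac{1}{p_i} M_i^2\bigr) z$ in terms of $z^\top \mb z$, since $f(x) - f(\opt{x}) = \tfrac12 z^\top \mb z$.

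First I would compute the matrix sum explicitly. Squaring $M_i$ gives $M_i^2 = c_i^2 \mI - 2 c_i a_i a_i^\top + \norm{a_i}_2^2 a_i a_i^\top$ with $c_i = \lambda \norm{a_i}_2^2/\normFro{\ma}^2$, and dividing by $p_i = \norm{a_i}_2^2/\normFro{\ma}^2$ the $\norm{a_i}_2^2$ factors telescope neatly. Summing over $i$ using $\sum_i a_i a_i^\top = \mSigma$ and $\sum_i \norm{a_i}_2^2 = \normFro{\ma}^2$ yields the clean operator identity
\[
\sum_{i \in [n]} \frac{1}{p_i} M_i^2 = \lambda^2 \mI + \bigl(\normFro{\ma}^2 - 2\lambda\bigr)\mSigma.
\]

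Next I would rewrite this in terms of $\mb = \lambda\mI - \mSigma$. Using $\lambda^2 \mI = \lambda(\mb + \mSigma)$, the quadratic form collapses to
\[
z^\top \Bigl(\sum_i \tfrac{1}{p_i} M_i^2\Bigr) z = \lambda \, z^\top \mb z + \bigl(\normFro{\ma}^2 - \lambda\bigr) z^\top \mSigma z.
\]
Then I would invoke the operator inequality $\mSigma \preceq \lambda_1 \mI \preceq \frac{\lambda_1}{\lambda - \lambda_1}\mb$, which controls the second term whenever $\normFro{\ma}^2 \geq \lambda$ (the opposite case drops the term trivially). This converts the right-hand side to at most $\bigl(\lambda + \frac{\lambda_1 \normFro{\ma}^2}{\lambda - \lambda_1}\bigr) z^\top \mb z$.

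Finally, I would absorb the leading $\lambda z^\top\mb z$ into the main term: since $\lambda \leq 2\lambda_1$ and $\normFro{\ma}^2 \geq \lambda_1$, we have $\lambda \leq 2 \cdot \frac{\lambda_1 \normFro{\ma}^2}{\lambda - \lambda_1}$ (in fact with tons of slack given $\lambda-\lambda_1 \leq \gap \lambda_1/100$), so the full quantity is at most $\frac{2\lambda_1 \normFro{\ma}^2}{\lambda-\lambda_1} z^\top \mb z$. Converting $z^\top \mb z = 2(f(x) - f(\opt{x}))$ yields the claimed $\frac{4\lambda_1 \normFro{\ma}^2}{\lambda - \lambda_1}$ prefactor. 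The only subtle step is the closed-form computation of $\sum_i \frac{1}{p_i} M_i^2$ and recognizing that the resulting operator is already comparable to $\mb$; everything else is algebraic manipulation using eigenvalue bounds on $\mSigma$.
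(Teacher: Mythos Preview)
Your proposal is correct and follows essentially the same route as the paper: both compute $\sum_i \tfrac{1}{p_i}\|M_i z\|_2^2$ in closed form to get $\lambda^2\|z\|_2^2 - 2\lambda\|z\|_\mSigma^2 + \normFro{\ma}^2\|z\|_\mSigma^2$, regroup using $\lambda\mI = \mb + \mSigma$ to expose a $\lambda\|z\|_\mb^2$ term, apply $\mSigma \preceq \tfrac{\lambda_1}{\lambda-\lambda_1}\mb$, and finally absorb $\lambda(\lambda-\lambda_1)$ into $\lambda_1\normFro{\ma}^2$. The only cosmetic difference is that you phrase the first step as an operator identity for $\sum_i \tfrac{1}{p_i} M_i^2$ while the paper writes the same calculation as a quadratic form in $z$; and your last numerical step as written ($\lambda \le 2\tfrac{\lambda_1\normFro{\ma}^2}{\lambda-\lambda_1}$) would yield a constant $6$ rather than $4$, but as you note the slack from $\lambda-\lambda_1 \le \tfrac{\gap}{100}\lambda_1$ makes the stated constant hold with room to spare.
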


\begin{proof}
Using the gradient computation in \eqref{basic_gradient_computation} we have
	\begin{align}
	\sum_{i \in [n]}
	\frac{1}{p_i} \normFull{\grad \psi_i(x) - \grad \psi_i(\opt{x})}_2^2 &= \sum_{i \in [n]}
	\frac{\normFro{\ma}^2}{\norm{a_i}_2^2}
	\normFull{\left( \frac{\lambda\norm{a_i}_2^2}{\normFro{\ma}^2} \mI 
		- a_i a_i^\top \right) (x - \opt{x})}_2^2\nonumber\\
	&=
	\sum_{i \in [n]} 
	\frac{\lambda^2 \norm{a_i}_2^2}{\normFro{\ma}^2}
	\normFull{x - \opt{x}}^2_2
	- 2 \sum_{i \in [n]} \lambda \normFull{x - \opt{x}}_{a_i a_i^\top}^2
	\nonumber\\&+ \sum_{i \in [n]} \frac{\normFro{\ma}^2}{\norm{a_i}^2} \normFull{x - \opt{x}}_{\norm{a_i}_2^2 a_i a_i^\top}^2
	\nonumber\\
	&=
	\lambda^2 \normFull{x - \opt{x}}_2^2
	- 2 \lambda \normFull{x - \opt{x}}_{\mSigma}^2
	+ \normFro{\ma}^2 \normFull{x - \opt{x}}_{\mSigma}^2.\nonumber\\
		&\le
	\lambda \normFull{x - \opt{x}}_{\bv B}^2
	+ \normFro{\ma}^2 \normFull{x - \opt{x}}_{\mSigma}^2.\label{broken_up_norms}
	\end{align}
	
	Now since 
	\[
	\mSigma \preceq \lambda_1 \mI \preceq \frac{\lambda_1}{\lambda - \lambda_1} \mb 
	\]
	we have
	\begin{align*}
	\sum_{i \in [n]}
	\frac{1}{p_i} \normFull{\grad \psi_i(x) - \grad \psi_i(\opt{x})}_2^2  &\le \left (\frac{ \lambda(\lambda-\lambda_1) +\normFro{\ma}^2\cdot \lambda_1}{\lambda - \lambda_1} \right )\normFull{x - \opt{x}}_{\mb}^2\\
	&\le \left (\frac{2\normFro{\ma}^2 \lambda_1}{\lambda - \lambda_1} \right )\normFull{x - \opt{x}}_{\mb}^2
	\end{align*}
	where in the last inequality we just coarsely bound $\lambda(\lambda-\lambda_1) \le \lambda_1\norm{\bv A}_F^2$.
	Now since $\bv{B}$ is full rank, $\bv{B}\opt{x} = b$, we can compute:
	\begin{align}\label{norm_to_function_error_conversion}
	\norm{x - \opt{x}}_{\mb}^2 = x^\top \mb x - 2b^\top x + b^\top \opt{x} = 2[f(x) - f(\opt{x})].
	\end{align}
	The result follows.
\end{proof}

Plugging the bound in Lemma \ref{variance_lemma} into Theorem \ref{lem:svrg-nonconv} we have:
\begin{theorem}(Offline SVRG-Based Solver)\label{offline_solver}
Let $\avgsmooth =\frac{2\lambda_1 \normFro{\mSigma}^2}{\lambda - \lambda_1}$, $\mu =\lambda-\lambda_1$. The iterative procedure described in Theorem \ref{lem:svrg-nonconv} with $f(x) = \frac{1}{2}x^\top \bv{B} x - b^\top x$, $\psi_i(x) = \frac{1}{2} x^\top \left(\frac{\lambda\norm{a_i}_2^2}{\norm{\bv \Sigma}_F^2} \mI - a_i a_i^\top \right) x - b^\top x$, $p_i = \frac{\norm{a_i}_2^2}{\norm{\bv \Sigma}_F^2}$, $\eta = 1/(8\avgsmooth)$ and $m$ chosen uniformly at random from $[64 \avgsmooth/\mu]$ returns a vector $x_m$ such that
\begin{align*}
\E \norm{x_m - \opt{x}}_\mb^2 \le \frac{1}{2} \norm{x_0-\opt{x}}_\mb^2.
\end{align*}
Further, assuming $\left (1 + \frac{\gap}{150} \right ) \lambda_1< \lambda \le \left (1 + \frac{\gap}{100} \right ) \lambda_1$, this procedure runs in time $O\left (\nnz(\bv A) + \frac{d\cdot \nrank(\bv{A})}{\gap^2} \right )$.
\end{theorem}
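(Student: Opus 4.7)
The plan is to combine Theorem \ref{lem:svrg-nonconv} with the variance bound of Lemma \ref{variance_lemma} to obtain an expected function-error contraction, then convert function error to $\mb$-norm error via the identity \eqref{norm_to_function_error_conversion}, and finally account for the per-iteration cost of SVRG.

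First I would verify the hypotheses of Theorem \ref{lem:svrg-nonconv}. The objective $f(x) = \tfrac{1}{2}x^\top \mb x - b^\top x$ is convex (since $\lambda > \lambda_1$ implies $\mb \succ 0$) and $\mu = \lambda - \lambda_1$ strongly convex, while Lemma \ref{variance_lemma} provides precisely the average-smoothness inequality $\sum_i p_i^{-1}\norm{\nabla\psi_i(x) - \nabla\psi_i(\opt{x})}_2^2 \le 2\avgsmooth[f(x) - f(\opt{x})]$ required for the hypothesis, with the stated value of $\avgsmooth$. The key observation here, which is not obvious a priori, is that the components $\psi_i$ are individually non-convex, but Theorem \ref{lem:svrg-nonconv} only needs the sum to be convex.

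Next I would plug $\eta = 1/(8\avgsmooth)$ and $M = 64\avgsmooth/\mu$ into the conclusion of Theorem \ref{lem:svrg-nonconv} and simplify. Then $2\eta\avgsmooth = 1/4$, $1 - 2\eta\avgsmooth = 3/4$, and $\mu\eta M = 8$, giving
\[
\E\!\left[\tfrac{1}{M}\sum_{k\in[M]} \big(f(x_k) - f(\opt{x})\big)\right] \le \tfrac{4}{3}\!\left(\tfrac{1}{8} + \tfrac{1}{4}\right)[f(x_0) - f(\opt{x})] = \tfrac{1}{2}[f(x_0) - f(\opt{x})].
\]
Because the returned index $m$ is drawn uniformly from $[M]$, this average equals $\E[f(x_m) - f(\opt{x})]$. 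I then apply the identity $\norm{x - \opt{x}}_\mb^2 = 2[f(x) - f(\opt{x})]$ from \eqref{norm_to_function_error_conversion} to both sides, obtaining $\E\norm{x_m - \opt{x}}_\mb^2 \le \tfrac{1}{2}\norm{x_0 - \opt{x}}_\mb^2$, as required.

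Finally I would account for runtime. The one-time cost of computing $\nabla f(x_0) = \mb x_0 - b$ is $O(\nnz(\bv A) + d)$ via one pass each through $\bv A$ and $\bv A^\top$. Each stochastic iteration evaluates $\nabla\psi_i(x_k) - \nabla\psi_i(x_0) = \big(\tfrac{\lambda\norm{a_i}_2^2}{\norm{\bv A}_F^2}\mI - a_i a_i^\top\big)(x_k - x_0)$, which takes $O(d + \nnz(a_i)) = O(d)$ time. The number of iterations is $M = 64\avgsmooth/\mu$, and using the lower bound $\lambda - \lambda_1 \ge \tfrac{\gap}{150}\lambda_1$ together with $\norm{\bv A}_F^2/\lambda_1 = \nrank(\bv A)$ gives $\avgsmooth/\mu = O(\nrank(\bv A)/\gap^2)$, so the total cost is $O(\nnz(\bv A) + d\cdot \nrank(\bv A)/\gap^2)$. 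The main (minor) obstacle is a careful numerical check that the chosen step size and iteration count give a contraction factor of exactly $1/2$ and not a weaker constant; all the conceptual heavy lifting has been done in Theorem \ref{lem:svrg-nonconv} and Lemma \ref{variance_lemma}, so this final theorem amounts to a clean bookkeeping step.
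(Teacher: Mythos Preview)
Your proposal is correct and follows essentially the same route as the paper: verify strong convexity and the variance bound via Lemma~\ref{variance_lemma}, plug the parameters into Theorem~\ref{lem:svrg-nonconv}, convert function error to $\mb$-norm error using \eqref{norm_to_function_error_conversion}, and then do the runtime bookkeeping. Your arithmetic $\tfrac{4}{3}(\tfrac{1}{8}+\tfrac{1}{4})=\tfrac{1}{2}$ is in fact cleaner than the paper's written computation (which contains a harmless typo), and the only omission is the trivial remark that computing the $p_i$'s and $\eta$ also fits in the initial $O(\nnz(\bv A))$ pass.
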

\begin{proof}
Lemma \ref{variance_lemma} tells us that \[
	\sum_{i \in [n]}
	\frac{1}{p_i} \normFull{\grad \psi_i(x) - \grad \psi_i(\opt{x})}_2^2
	\leq 
	2 \avgsmooth \left[f(x) - f(\opt{x}) \right].
	\]
Further $f(x) = \frac{1}{2}x^\top \bv{B} x - b^\top x$ is $\lambda_d(\bv B)$-strongly  convex and $\lambda_d(\bv B) = \lambda-\lambda_1 = \mu$.
Plugging this into Theorem \ref{lem:svrg-nonconv} and using \eqref{norm_to_function_error_conversion} which shows $\norm{x-\opt{x}}_\mb^2 = 2[f(x)-f(\opt{x})]$ we have, for $m$ chosen uniformly from $[64 \avgsmooth/ \mu]$: 
\begin{align*}
\E \left[\frac{1}{64 \avgsmooth/ \mu} \sum_{k\in[64 \avgsmooth/ \mu]} f(x_{k}) - f(\opt{x})\right] &\leq 4/3 \cdot 
\left[1/8+1/8 \right] \cdot 
\left[f(x_{0})-f(\opt{x})\right]\\
\E \left [f(x_m) - f(\opt{x}) \right ] &\le \frac{1}{2}\left[f(x_{0})-f(\opt{x})\right] \\
\E \norm{x_m - \opt{x}}_\mb^2 &\le \frac{1}{2}  \norm{x_0-x^{opt}}_\mb^2.
\end{align*}

The procedure requires $O\left(\nnz(\bv A)\right)$ time to initially compute $\grad f(x_0)$, along with each $p_i$ and the step size $\eta$ which depend on $\norm{\bv A}_F^2$ and the row norms of $\bv{A}$. Each iteration then just requires $O(d)$ time to compute $\grad \psi_i(\cdot)$ and perform the necessary vector operations. Since there are at most $[64 \avgsmooth/ \mu] = O\left ( \frac{\lambda_1\norm{\bv A}_F^2}{(\lambda-\lambda_1)^2} \right )$ iterations, our total runtime is
\begin{align*}
O \left (\nnz(\bv{A})  + d \cdot \frac{\lambda_1\norm{\bv A}_F^2}{(\lambda-\lambda_1)^2} \right ) = O\left (\nnz(\bv A) + \frac{d\cdot \nrank(\bv{A})}{\gap^2} \right ).
\end{align*}
Note that if our matrix is uniformly sparse - i.e. all rows have sparsity at most $d_s$, then the runtime is actually at most $O\left (\nnz(\bv A) + \frac{d_s \cdot \nrank(\bv{A})}{\gap^2} \right )$.
\end{proof}

\subsection{Accelerated Solver}
\label{sec:offline:acceleration}

Theorem \ref{offline_solver} gives a linear solver for $\bv{B}$ that makes progress in expectation and which we can plug into Theorems \ref{thm:powermethod-perturb} and \ref{thm:init-offline}. However, we first show that the runtime in Theorem \ref{offline_solver} can be accelerated in some cases. We apply a result of \cite{frostig2015regularizing}, which shows that, given a solver for a regularized version of a convex function $f(x)$, we can produce a fast solver for $f(x)$ itself. Specifically:

\begin{lemma}[Theorem 1.1 of \cite{frostig2015regularizing}]\label{acceleration_primitive}
Let $f(x)$ be a $\mu$-strongly convex function and let $\opt{x} \eqdef \argmin_{x\in\mathbb{R}^d} f(x)$. For any $\gamma > 0$ and any $x_0 \in \mathbb{R}^d$, let $f_{\gamma,x_0}(x) \eqdef f(x) + \frac{\gamma}{2} \norm{x-x_0}_2^2$. Let $\opt{x}_{\gamma,x_0}  \eqdef \argmin_{x\in\mathbb{R}^d} f_{\gamma,x_0} (x)$.

Suppose that, for all $x_0 \in \mathbb{R}^d$, $c > 0$, $\gamma > 0$, we can compute a point $x_c$ such that
\begin{align*}
\E f_{\gamma,x_0}(x_c) - f_{\gamma,x_0}(\opt{x}_{\gamma,x_0} ) \le \frac{1}{c} \left [f_{\gamma,x_0} - f_{\gamma,x_0}(\opt{x}_{\gamma,x_0} ) \right ]
\end{align*}

in time $\mathcal{T}_c$. Then given any $x_0$, $c > 0$, $\gamma > 2 \mu$, we can compute $x_1$ such that
\begin{align*}
\E f(x_1) - f(\opt{x}) \le \frac{1}{c} \left [f(x_0) - f(\opt{x}) \right ]
\end{align*}
in time $O\left(\mathcal{T}_{4\left (\frac{2\gamma + \mu}{\mu} \right )^{3/2}} \sqrt{\lceil \gamma/\mu \rceil} \log c \right ).$
\end{lemma}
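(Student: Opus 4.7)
My plan is to prove this by running an accelerated proximal point method on $f$, where each outer iteration requires approximately minimizing a regularized subproblem of the form $f_{\gamma,z}$, and those subproblems are handled by the given inner solver. Each subproblem $f_{\gamma,z}$ is $(\mu+\gamma)$-strongly convex by construction, so classical (unaccelerated) proximal point iteration $x_{k+1} = \opt{x}_{\gamma,x_k}$ contracts the distance to the optimum of $f$ at rate $\gamma/(\mu+\gamma)$, which on its own would require $O(\gamma/\mu)$ outer iterations to make constant factor progress. Applying Nesterov-style acceleration to the outer loop should improve this to $O(\sqrt{\gamma/\mu})$ outer iterations, which is where the $\sqrt{\lceil\gamma/\mu\rceil}$ factor in the runtime arises.

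Concretely, I would instantiate an accelerated sequence with momentum interpolation $z_k = y_k + \beta_k (y_k - y_{k-1})$, take $y_{k+1}$ to be an approximate minimizer of $f_{\gamma,z_k}$ returned by the inner solver, and analyze the iteration using the Monteiro--Svaiter / inexact-Nesterov framework. In this framework one can treat $\opt{x}_{\gamma,z_k} - z_k$ as playing the role of $-\eta \grad f$ for an effective stepsize set by $\gamma$, and view the proximal subproblem solution as a "gradient step" on the Moreau envelope $F_\gamma(z) \defeq \min_x f_{\gamma,z}(x)$, which is $\gamma\mu/(\gamma+\mu)$-strongly convex and $\gamma$-smooth with condition number essentially $(\gamma+\mu)/\mu$. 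Applying accelerated gradient descent to $F_\gamma$ gives the claimed $\sqrt{\gamma/\mu}$ outer convergence rate while driving the $f$-error down by a factor of $c$ in $O(\sqrt{\gamma/\mu}\,\log c)$ outer steps.

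The main obstacle is handling inexactness: if each inner solve only returns an approximate minimizer, errors can accumulate and potentially destroy the accelerated rate. The standard remedy, which I would use, is to require each subproblem be solved to relative accuracy that shrinks polynomially in the outer condition number; this is exactly the origin of the $4((2\gamma+\mu)/\mu)^{3/2}$ factor appearing in the lemma's statement. The $3/2$ exponent comes from two sources in the inexact-acceleration analysis: a factor of $(\gamma+\mu)/\mu$ needed to convert function error on $f_{\gamma,z_k}$ into a bound on the proximal step error measured in the norm used for the outer analysis, and an additional $\sqrt{(\gamma+\mu)/\mu}$ factor because the accelerated scheme amplifies per-iteration errors by a factor proportional to its condition number before contraction. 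Combining the $O(\sqrt{\gamma/\mu}\,\log c)$ outer iterations with a per-iteration cost of $\mathcal{T}_{4((2\gamma+\mu)/\mu)^{3/2}}$ yields the claimed total runtime.

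I would expect the bookkeeping required to formally verify that the inexact accelerated scheme tolerates errors at this polynomial level, without a logarithmic blow-up in $c$, to be the most delicate piece; the cleanest route is to appeal to an off-the-shelf inexact accelerated proximal point analysis (e.g., the estimate-sequence argument of Guler or the Monteiro--Svaiter HPE framework) rather than rederive it from scratch.
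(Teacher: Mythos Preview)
The paper does not prove this lemma at all: it is quoted verbatim as Theorem~1.1 of \cite{frostig2015regularizing} and used as a black box, so there is no ``paper's own proof'' to compare against. Your sketch is, however, an accurate high-level description of how the cited result is actually established in \cite{frostig2015regularizing} (and in the closely related Catalyst framework of \cite{lin2015catalyst}): one runs an accelerated proximal-point / estimate-sequence iteration on the Moreau envelope $F_\gamma$, each outer step calls the inner solver on a subproblem $f_{\gamma,z_k}$, and the required inner accuracy scales polynomially in the outer condition number $(\gamma+\mu)/\mu$, which is the source of the $((2\gamma+\mu)/\mu)^{3/2}$ factor. The $\sqrt{\lceil \gamma/\mu\rceil}\log c$ outer-iteration count is exactly the accelerated rate on a function with condition number $\Theta(\gamma/\mu)$.

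One small caution: your attribution of the $3/2$ exponent to ``a factor $(\gamma+\mu)/\mu$ for norm conversion plus $\sqrt{(\gamma+\mu)/\mu}$ for error amplification'' is a plausible heuristic but not literally how the constant arises in \cite{frostig2015regularizing}; there the precise exponent falls out of the estimate-sequence recursion bounding how much inexactness each accelerated step can absorb while still guaranteeing geometric decrease of the Lyapunov potential. If you were to write this proof out in full rather than cite it, you would want to follow that recursion (or the equivalent Catalyst analysis) rather than the informal two-factor decomposition you gave.
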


We first give a new variance bound on solving systems in $\bv{B}$ when a regularizer is used. The proof of this bound is very close to the proof given for the unregularized problem in Lemma \ref{variance_lemma}.
\begin{lemma}\label{regularized_variance_lemma} For $i \in [n]$ let
\begin{align*}
	\psi_i(x) \defeq \frac{1}{2} x^\top \left(\frac{\lambda\norm{a_i}_2^2}{\norm{\bv A}_F^2} \mI - a_i a_i^\top \right) x 
	- \frac{1}{n}b^\top x + \frac{\gamma\norm{a_i}_2^2}{2\norm{\bv A}_F^2} \norm{x - x_0}_2^2
\end{align*}
so we have $\sum_{i\in[n]} \psi_i(x) = f_{\gamma,x_0} (x) = \frac{1}{2} x^\top \bv{B} x - b^\top x + \frac{\gamma}{2}\norm{x - x_0}_2^2$.
Setting $p_i = \frac{\norm{a_i}_2^2}{\norm{\bv A}_F^2}$ for all $i$, we have for all $x$
	\[
	\sum_{i \in [n]}
	\frac{1}{p_i} \normFull{\grad \psi_i(x) - \grad \psi_i(\opt{x}_{\gamma,x_0} )}_2^2
	\leq 
	\left (  \frac{\gamma^2 + 12\lambda_1\norm{\bv A}_F^2}{\lambda-\lambda_1+\gamma}\right ) \left [f_{\gamma,x_0}(x)-f_{\gamma,x_0}(\opt{x}_{\gamma,x_0}) \right ]
	\]
\end{lemma}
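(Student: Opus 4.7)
My plan is to mirror the proof of Lemma~\ref{variance_lemma} exactly, tracking how the extra regularizer $\frac{\gamma\|a_i\|_2^2}{2\|\bv A\|_F^2}\|x-x_0\|_2^2$ propagates. First I will differentiate each $\psi_i$: the regularizer contributes $\frac{\gamma\|a_i\|_2^2}{\|\bv A\|_F^2}(x-x_0)$ to $\grad\psi_i(x)$, and this $x_0$ term cancels in the difference, so
\begin{align*}
\grad\psi_i(x) - \grad\psi_i(\opt{x}_{\gamma,x_0})
= \left(\frac{(\lambda+\gamma)\|a_i\|_2^2}{\|\bv A\|_F^2}\bv I - a_i a_i^\top\right)\Delta,
\end{align*}
where $\Delta \defeq x - \opt{x}_{\gamma,x_0}$. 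This is structurally identical to the gradient difference in Lemma~\ref{variance_lemma} with $\lambda$ replaced by $\lambda+\gamma$. Squaring, multiplying by $1/p_i = \|\bv A\|_F^2/\|a_i\|_2^2$, and summing in exactly the manner of \eqref{broken_up_norms} will give
\begin{align*}
\sum_{i\in[n]} \tfrac{1}{p_i}\|\grad\psi_i(x) - \grad\psi_i(\opt{x}_{\gamma,x_0})\|_2^2
= (\lambda+\gamma)^2\|\Delta\|_2^2 - 2(\lambda+\gamma)\|\Delta\|_{\bv\Sigma}^2 + \|\bv A\|_F^2\|\Delta\|_{\bv\Sigma}^2.
\end{align*}

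Next I will convert this to a bound in terms of $\|\Delta\|^2_{\bv B+\gamma\bv I}$, which is the natural norm since $\bv B+\gamma\bv I$ is the Hessian of $f_{\gamma,x_0}$. Using the identity $\|\Delta\|^2_{\bv B+\gamma\bv I} = (\lambda+\gamma)\|\Delta\|_2^2 - \|\Delta\|_{\bv\Sigma}^2$ and dropping a nonnegative term, the first two pieces collapse to at most $(\lambda+\gamma)\|\Delta\|^2_{\bv B+\gamma\bv I}$. For the remaining $\|\bv A\|_F^2 \|\Delta\|_{\bv\Sigma}^2$ term I will invoke $\bv\Sigma \preceq \lambda_1 \bv I \preceq \frac{\lambda_1}{\lambda-\lambda_1+\gamma}(\bv B+\gamma\bv I)$, which holds because the smallest eigenvalue of $\bv B+\gamma\bv I$ is $\lambda-\lambda_1+\gamma$. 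Combining yields
\begin{align*}
\sum_{i\in[n]} \tfrac{1}{p_i}\|\cdots\|_2^2
\le \frac{(\lambda+\gamma)(\lambda-\lambda_1+\gamma) + \lambda_1\|\bv A\|_F^2}{\lambda-\lambda_1+\gamma}\,\|\Delta\|^2_{\bv B+\gamma\bv I}.
\end{align*}
Finally, the quadratic identity $\|\Delta\|^2_{\bv B+\gamma\bv I} = 2[f_{\gamma,x_0}(x) - f_{\gamma,x_0}(\opt{x}_{\gamma,x_0})]$ (analogous to \eqref{norm_to_function_error_conversion}) turns this into the claimed bound on the function error.

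The main obstacle I anticipate is the last arithmetic step, where I must absorb $2(\lambda+\gamma)(\lambda-\lambda_1+\gamma) + 2\lambda_1\|\bv A\|_F^2$ into $\gamma^2 + 12\lambda_1\|\bv A\|_F^2$. Expanding gives $2\lambda(\lambda-\lambda_1) + 2\lambda\gamma + 2(\lambda-\lambda_1)\gamma + 2\gamma^2 + 2\lambda_1\|\bv A\|_F^2$. The $\gamma$-free terms are easily controlled via $\lambda \le 2\lambda_1$ and $\lambda_1 \le \|\bv A\|_F^2$, as in Lemma~\ref{variance_lemma}. The cross terms $\lambda\gamma$ and $(\lambda-\lambda_1)\gamma$ must be split via weighted AM-GM of the form $2ab \le \varepsilon a^2 + b^2/\varepsilon$, choosing $\varepsilon$ large so that the $\gamma^2$ coefficient stays below the target while the cost of extra $\lambda_1^2$ terms is paid for cheaply by $\lambda_1^2 \le \lambda_1 \|\bv A\|_F^2$. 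This constant bookkeeping is the only delicate part of the proof; all the structural work is simply the regularized analogue of Lemma~\ref{variance_lemma}.
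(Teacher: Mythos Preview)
Your structural approach is correct and in fact tighter than the paper's: by observing that the regularizer's contribution to $\grad\psi_i$ merges with the identity part, you get the exact identity
\[
\sum_{i}\tfrac{1}{p_i}\norm{\grad\psi_i(x)-\grad\psi_i(\opt{x}_{\gamma,x_0})}_2^2
=(\lambda+\gamma)^2\norm{\Delta}_2^2-2(\lambda+\gamma)\norm{\Delta}_{\mSigma}^2+\normFro{\ma}^2\norm{\Delta}_{\mSigma}^2,
\]
which is simply \eqref{broken_up_norms} with $\lambda\mapsto\lambda+\gamma$. The paper does \emph{not} do this: it keeps the two pieces separate, applies the crude $\norm{u+v}^2\le 2\norm{u}^2+2\norm{v}^2$, and then bounds everything directly in $\norm{\Delta}_2^2$ via the strong convexity inequality $\norm{\Delta}_2^2\le \tfrac{2}{\lambda-\lambda_1+\gamma}[f_{\gamma,x_0}(x)-f_{\gamma,x_0}(\opt{x}_{\gamma,x_0})]$ rather than the exact quadratic identity in the $\mb+\gamma\mI$ norm. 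Your route is more elegant and loses less along the way.

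However, your final ``constant bookkeeping'' step has a real obstruction that your AM--GM plan does not address. After expanding you already carry a $2\gamma^2$ term, whereas the target numerator is $\gamma^2+12\lambda_1\normFro{\ma}^2$. Weighted AM--GM on the cross terms $2\lambda\gamma$ and $2(\lambda-\lambda_1)\gamma$ can only \emph{increase} the $\gamma^2$ coefficient (or leave it alone); it cannot bring $2\gamma^2$ down to $\gamma^2$. So as stated your argument proves the lemma only with $2\gamma^2$ (or a bit more) in place of $\gamma^2$, i.e.\ a bound of the form $\tfrac{c\gamma^2+c'\lambda_1\normFro{\ma}^2}{\lambda-\lambda_1+\gamma}$ with $c\ge 2$. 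This is entirely adequate for the downstream Theorem~\ref{accelerated_offline_solver}, which only uses the order of magnitude, but it does not match the exact constant in the lemma.

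For context, the paper obtains the coefficient $\gamma^2$ only because its displayed gradient \eqref{regularized_gradient_computation} records the regularizer contribution as $\tfrac{\gamma\norm{a_i}_2^2}{2\normFro{\ma}^2}(x-2x_0)$; the correct gradient of $\tfrac{\gamma\norm{a_i}_2^2}{2\normFro{\ma}^2}\norm{x-x_0}_2^2$ is $\tfrac{\gamma\norm{a_i}_2^2}{\normFro{\ma}^2}(x-x_0)$, twice as large in the $x$ part. With the correct gradient the paper's own splitting would also produce a $2\gamma^2$ (in fact $4\gamma^2$ after the $\norm{u+v}^2\le 2\norm{u}^2+2\norm{v}^2$ loss). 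So the specific constant $\gamma^2$ in the statement appears to be an artifact; your approach yields the honest bound, and you should simply state it with the constant you can actually prove.
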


\begin{proof}
	We have for all $i \in [n]$ 
	\begin{align}\label{regularized_gradient_computation}
	\grad \psi_i(x) =
	\left(\frac{\lambda \norm{a_i}_2^2}{\normFro{\ma}^2} \mI 
	- a_i a_i^\top \right) x 
	- \frac{1}{n}b + \frac{\gamma\norm{a_i}_2^2}{2\norm{\bv A}_F^2} (x-2x_0)
	\end{align}
Plugging this in we have:
	\begin{align*}
	\sum_{i \in [n]}
	\frac{1}{p_i} \normFull{\grad \psi_i(x) - \grad \psi_i(\opt{x}_{\gamma,x_0} )}_2^2 &= \sum_{i \in [n]}
	\frac{\normFro{\ma}^2}{\norm{a_i}_2^2}
	\normFull{\left( \frac{\lambda\norm{a_i}_2^2}{\normFro{\ma}^2} \mI 
		- a_i a_i^\top \right) (x - \opt{x}_{\gamma,x_0} ) + \frac{\gamma\norm{a_i}_2^2}{2\norm{\bv A}_F^2} (x - \opt{x}_{\gamma,x_0} )}_2^2
	\end{align*}
	For simplicity we now just use the fact that  $\norm{x+y}_2^2 \le 2\norm{x}_2^2 + 2\norm{y}_2^2$ and apply our bound from equation \eqref{broken_up_norms} to obtain:
	\begin{align*}
	\sum_{i \in [n]}
	\frac{1}{p_i} \normFull{\grad \psi_i(x) - \grad \psi_i(\opt{x}_{\gamma,x_0} )}_2^2  &\le 
	2\lambda^2 \normFull{x - \opt{x}_{\gamma,x_0}}_2^2
	- 4 \lambda \normFull{x - \opt{x}_{\gamma,x_0}}_{\mSigma}^2
	+ 2\normFro{\mSigma}^2 \normFull{x - \opt{x}_{\gamma,x_0}}_{\mSigma}^2 \\&+ 2\sum_{i \in [n]}\frac{\norm{a_i}_2^2}{\normFro{\ma}^2} \frac{\gamma^2}{4} \norm{x-\opt{x}_{\gamma,x_0} }_2^2\\
	&\le 
	 \left (2\lambda^2 +\gamma^2/2 + 2\lambda_1\norm{\bv A}_F^2 - 4\lambda_1\lambda \right ) \normFull{x - \opt{x}_{\gamma,x_0} }_{2}^2\\
	&\le 
	 \left (\gamma^2/2 + 6\lambda_1\norm{\bv A}_F^2 \right ) \normFull{x - \opt{x}_{\gamma,x_0} }_{2}^2
	\end{align*}
	
	Now, $f_{\gamma,x_0}(\cdot)$ is $\lambda-\lambda_1+\gamma$ strongly convex, so 
	\begin{align*}
	\normFull{x - \opt{x}_{\gamma,x_0} }_{2}^2 \le \frac{2}{\lambda-\lambda_1+\gamma} \left [f_{\gamma,x_0}(x)-f_{\gamma,x_0}(\opt{x}_{\gamma,x_0}) \right ].
	\end{align*}
	So overall we have:
	\begin{align*}
	\sum_{i \in [n]} \frac{1}{p_i} \normFull{\grad \psi_i(x) - \grad \psi_i(\opt{x}_{\gamma,x_0} )}_2^2  &\le \left (  \frac{\gamma^2 + 12\lambda_1\norm{\bv A}_F^2}{\lambda-\lambda_1+\gamma}\right ) \left [f_{\gamma,x_0}(x)-f_{\gamma,x_0}(\opt{x}_{\gamma,x_0}) \right ]
	\end{align*}
\end{proof}

We can now use this variance bound to obtain an accelerated solver for $\bv{B}$. We assume $\nnz(\bv{A}) \le \frac{d\nrank(\bv A)}{\gap^2}$, as otherwise, the unaccelerated solver in Theorem \ref{offline_solver} runs in $O(\nnz(\bv{A}))$ time and cannot be accelerated further.
\begin{theorem}[Accelerated SVRG-Based Solver]\label{accelerated_offline_solver}
Assuming $\left (1 + \frac{\gap}{150} \right ) \lambda_1< \lambda \le \left (1 + \frac{\gap}{100} \right ) \lambda_1$ and $nnz(\bv{A}) \le \frac{d\nrank(\bv A)}{\gap^2}$, applying the iterative procedure described in Theorem \ref{lem:svrg-nonconv} along with the acceleration given by Lemma \ref{acceleration_primitive} gives a solver that returns $x$ with
\begin{align*}
\E \norm{x - \opt{x}}_\mb^2 \le \frac{1}{2} \norm{x_0-\opt{x}}_\mb^2.
\end{align*}
in time $O\left ( \frac{\nnz(\bv A)^{3/4} (d \nrank(\bv A))^{1/4}}{\sqrt{\gap}} \cdot \log \left (\frac{d}{\gap}\right)\right)$.
\end{theorem}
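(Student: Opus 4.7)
The plan is to combine Lemma~\ref{regularized_variance_lemma} (variance bound for the regularized objective) with Theorem~\ref{lem:svrg-nonconv} (SVRG convergence) to build an inner solver for the regularized objective $f_{\gamma, x_0}$, and then invoke the black-box acceleration of Lemma~\ref{acceleration_primitive} to obtain a solver for the unregularized $f$. Since $f(x) - f(\opt{x}) = \tfrac{1}{2}\|x - \opt{x}\|_\mb^2$ (by equation~\eqref{norm_to_function_error_conversion}), halving the function error is equivalent to halving the $\mb$-norm squared error, so it suffices to show a constant-factor function-error decrease in the claimed time.

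First I would bound the per-epoch cost of the inner solver. Lemma~\ref{regularized_variance_lemma} gives variance parameter $\avgsmooth_\gamma = O\!\left(\frac{\gamma^2 + \lambda_1 \|\ma\|_F^2}{\lambda - \lambda_1 + \gamma}\right)$, while $f_{\gamma,x_0}$ is $\mu_\gamma = (\lambda - \lambda_1 + \gamma)$-strongly convex. Plugging these into Theorem~\ref{lem:svrg-nonconv} exactly as in the proof of Theorem~\ref{offline_solver}, we decrease the regularized function error by any constant factor in time $O(\nnz(\ma) + d \cdot \avgsmooth_\gamma / \mu_\gamma)$, and therefore obtain an accuracy-$1/c$ solver for $f_{\gamma,x_0}$ in time
\[
\mathcal{T}_c \;=\; O\!\left(\log c \cdot \left[\nnz(\ma) \,+\, d \cdot \frac{\gamma^2 + \lambda_1 \|\ma\|_F^2}{(\lambda - \lambda_1 + \gamma)^2}\right]\right).
\]

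Next I would invoke Lemma~\ref{acceleration_primitive} with $\mu = \lambda - \lambda_1$, which transforms $\mathcal{T}_c$ into an unregularized solver running in time $O\!\left(\mathcal{T}_{c'} \cdot \sqrt{\gamma/(\lambda-\lambda_1)} \cdot \log c\right)$ where $c' = O\!\bigl((\gamma/(\lambda-\lambda_1))^{3/2}\bigr)$. Taking $c = O(1)$ (since we only need to halve the error), the outer $\log c$ is constant and $\log c' = O(\log(\gamma/\gap\lambda_1))$. The total runtime becomes
\[
O\!\left(\sqrt{\tfrac{\gamma}{\gap\lambda_1}} \,\log\!\tfrac{\gamma}{\gap\lambda_1} \cdot \left[\nnz(\ma) + d\cdot\tfrac{\gamma^2 + \lambda_1\|\ma\|_F^2}{(\gap\lambda_1 + \gamma)^2}\right]\right).
\]

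The core step, and the main place balancing is required, is choosing $\gamma$ to minimize this expression. In the regime $\gamma \gg \gap\lambda_1$ the bracket simplifies to $\nnz(\ma) + d + d\lambda_1\|\ma\|_F^2/\gamma^2 = \nnz(\ma) + d + d\lambda_1^2\nrank(\ma)/\gamma^2$. Balancing the $\nnz(\ma)$ and $d\lambda_1^2\nrank(\ma)/\gamma^2$ terms (after multiplication by the $\sqrt{\gamma/\gap\lambda_1}$ acceleration factor) yields the optimal choice
\[
\gamma \;=\; \Theta\!\left(\lambda_1 \cdot \sqrt{d\,\nrank(\ma)/\nnz(\ma)}\right),
\]
and plugging this back gives the claimed runtime $O\!\left(\tfrac{\nnz(\ma)^{3/4}(d\,\nrank(\ma))^{1/4}}{\sqrt{\gap}} \cdot \log(d/\gap)\right)$. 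The remaining sanity check — and the most delicate step — is verifying that this choice of $\gamma$ lies in the regime $\gamma \gg \gap\lambda_1$, which translates to $\sqrt{d\,\nrank(\ma)/\nnz(\ma)} \gg \gap$, i.e., $\nnz(\ma) \ll d\,\nrank(\ma)/\gap^2$. This is precisely the hypothesis of the theorem, so the balance point is valid, and in this regime the $d$-term inside the bracket is dominated by $\nnz(\ma)$ and the $\log(\gamma/\gap\lambda_1)$ factor collapses to $O(\log(d/\gap))$, completing the proof.
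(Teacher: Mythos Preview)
Your proposal is correct and follows essentially the same approach as the paper: build an inner SVRG solver for the regularized objective via Lemma~\ref{regularized_variance_lemma} and Theorem~\ref{lem:svrg-nonconv}, wrap it in the acceleration of Lemma~\ref{acceleration_primitive}, and balance by choosing $\gamma = \Theta\bigl(\sqrt{d\lambda_1\|\ma\|_F^2/\nnz(\ma)}\bigr)$ (which is exactly your $\lambda_1\sqrt{d\,\nrank(\ma)/\nnz(\ma)}$), with the hypothesis $\nnz(\ma)\le d\,\nrank(\ma)/\gap^2$ ensuring $\gamma \gtrsim \gap\lambda_1$ so the balance point is valid. The paper's write-up is nearly identical, differing only in that it states the choice of $\gamma$ directly rather than deriving it from a balancing argument.
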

\begin{proof}
Following Theorem \ref{offline_solver}, the variance bound of Lemma \ref{regularized_variance_lemma} means that we can make constant progress in minimizing $f_{\gamma,x_0}(x)$ in $O\left (\nnz(\bv{A}) + d m \right)$ time where $m = O \left ( \frac{\gamma^2 + 12\lambda_1\norm{\bv\Sigma}_F^2}{(\lambda-\lambda_1+\gamma)^2}\right )$. So, for $\gamma \ge 2(\lambda-\lambda_1)$ we can make $4\left (\frac{2\gamma + (\lambda-\lambda_1)}{\lambda-\lambda_1} \right )^{3/2}$ progress, as required by Lemma \ref{acceleration_primitive} in time $O\left (\left (\nnz(\bv A) + dm \right) \cdot \log \left (\frac{\gamma}{\lambda-\lambda_1}\right) \right )$ time. Hence by Lemma \ref{acceleration_primitive} we can make constant factor expected progress in minimizing $f(x)$ in time:
\begin{align*}
O\left ( \left (\nnz(\bv A) + d\frac{\gamma^2 + 12\lambda_1\norm{\bv A}_F^2}{(\lambda-\lambda_1+\gamma)^2} \right ) \log \left (\frac{\gamma}{\lambda-\lambda_1}\right) \sqrt {\frac{\gamma}{\lambda-\lambda_1}}\right)
\end{align*}

By our assumption, we have $\nnz(\bv{A}) \le \frac{d\nrank(\bv A)}{\gap^2} = \frac{d \lambda_1 \norm{\bv A}_F^2}{(\lambda-\lambda_1)^2}$. So, if we let $\gamma = \Theta \left (\sqrt{\frac{d\lambda_1 \norm{\ma}_F^2}{\nnz(\bv A)}}\right)$ then using a sufficiently large constant, we have $\gamma \ge 2(\lambda-\lambda_1)$. We have $\frac{\gamma}{\lambda-\lambda_1} = \Theta \left (\sqrt{\frac{d\lambda_1 \norm{\bv A}_F^2}{\nnz(\bv A)\lambda_1^2 \gap^2}}\right) = \Theta \left (\sqrt{\frac{d\nrank(\bv A)}{\nnz(\bv A)\gap^2}}\right) $ and can make constant expected progress in minimizing $f(x)$ in time:
\begin{align*}
O\left ( \frac{\nnz(\bv A)^{3/4} (d \nrank(\bv A))^{1/4}}{\sqrt{\gap}} \cdot \log \left (\frac{d}{\gap}\right)\right).
\end{align*}
\end{proof}

\subsection{Shifted-and-Inverted Power Method}
\label{sec:offline:results}

Finally, we are able to combine the solvers from Sections \ref{sec:offline:svrg} and \ref{sec:offline:acceleration} with the framework of Section \ref{framework} to obtain faster algorithms for top eigenvector computation.
\begin{theorem}[Shifted-and-Inverted Power Method With SVRG]\label{main_offline_theorem}
Let $\bv{B} = \lambda \bv{I} - \bv{A}^\top \bv{A}$ for $\left ( 1+\frac{\gap}{150}\right) \lambda_1 \le \lambda \le \left(1+ \frac{\gap}{100} \right ) \lambda_1$ and let $x_0 \sim \mathcal{N}(0,\bv I)$ be a random initial vector. Running the inverted power method on $\bv{B}$ initialized with $x_0$, using the SVRG solver from Theorem \ref{offline_solver} to approximately apply $\bv{B}^{-1}$ at each step, returns $x$ such that with probability $1-O\left (\frac{1}{d^{10}}\right)$, $x^\top \bv{\Sigma}x \ge (1-\epsilon) \lambda_1$ in total time $$O \left (\left(\nnz(\bv A) + \frac{d \nrank(\bv A)}{\gap^2} \right )\cdot \left (\log^2\left(\frac{d}{\gap}\right) + \log\left(\frac{1}{\epsilon}\right) \right ) \right ).$$
\end{theorem}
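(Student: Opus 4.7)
The plan is to combine three components developed in the paper: the random initialization bound (Lemma~\ref{lem:init-random}), the burn-in analysis (Theorem~\ref{thm:init-offline}), the warm-started approximate power iteration (Theorem~\ref{thm:powermethod-perturb} via Corollary~\ref{cor:constant_factor_corollary}), and the SVRG-based linear system solver (Theorem~\ref{offline_solver}). The overall structure is: sample $x_0 \sim \mathcal{N}(0, \bv{I})$ and normalize; run a burn-in phase that brings $G(x) \le 1/\sqrt{10}$; then run a refinement phase that geometrically decreases $G(x)$ until Lemma~\ref{lem:rayquot-potential} guarantees the Rayleigh quotient bound. In the offline setting, the Rayleigh quotient $\rayquot(x)$ can be computed \emph{exactly} in $O(\nnz(\bv{A}))$ time, so the $\rayquoth{\cdot}$ oracle required by Theorem~\ref{thm:powermethod-perturb} is trivial.

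For the burn-in, Lemma~\ref{lem:init-random} gives $G(x_0) \le 15 d^{10.5}/\sqrt{\gap}$ with probability $1 - O(d^{-10})$, and Theorem~\ref{thm:init-offline} requires $T_{\mathrm{burn}} = O(\log d + \log \kappa(\bv{B}^{-1})) = O(\log(d/\gap))$ outer iterations, each requiring a solve with relative expected $\bv{B}$-norm error at most $1/(3000 \kappa(\bv{B}^{-1}) d^{21})$. By Theorem~\ref{offline_solver}, a single SVRG epoch halves the expected $\bv{B}$-norm-squared error in time $O(\nnz(\bv A) + d\nrank(\bv{A})/\gap^2)$, so $O(\log(d/\gap))$ chained SVRG epochs per outer iteration suffice. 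Thus the burn-in phase costs $O(\log^2(d/\gap)) \cdot O(\nnz(\bv A) + d\nrank(\bv{A})/\gap^2)$.

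For the refinement phase, Corollary~\ref{cor:constant_factor_corollary} shows that if each solve achieves constant relative $\bv{B}$-norm error (which needs only $O(1)$ SVRG epochs per step), then $\E[G(\tilde x)] \le (4/25)\, G(x)$, and the guarantee $G(\tilde x) \le 1/\sqrt{10}$ is preserved deterministically. Starting from $G \le 1/\sqrt{10}$, Markov's inequality plus restarting if the Rayleigh quotient test fails (together with a standard boosting-by-repetition argument, or by viewing $G$ as a supermartingale up to the additive term) drives $G(x)$ down geometrically. To conclude via Lemma~\ref{lem:rayquot-potential}, I need $G(x)^2 \lesssim \epsilon \lambda_1 /(\lambda - \lambda_1) = \Theta(\epsilon/\gap)$, giving $x^\top \bv{\Sigma} x \ge (1-\epsilon)\lambda_1$. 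Starting from $G \le 1/\sqrt{10}$ this requires $O(\log(\gap/\epsilon)) = O(\log(1/\epsilon))$ iterations (absorbing $\log(1/\gap)$ into the additive $\log^2(d/\gap)$ term), each costing $O(\nnz(\bv A) + d\nrank(\bv{A})/\gap^2)$. Summing the burn-in and refinement costs yields the claimed runtime.

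The main obstacle is tying the geometric decrease in expectation from Corollary~\ref{cor:constant_factor_corollary} into a high-probability guarantee without losing log factors: naively applying Markov at every one of the $O(\log(1/\epsilon))$ refinement steps would require shrinking the per-step failure probability, inflating the solver cost. The cleanest route is to note that the acceptance test built into Theorem~\ref{thm:powermethod-perturb} already caps $G$ at $1/\sqrt{10}$ whenever a step would be harmful, so running the refinement for a constant factor more iterations than the expectation analysis demands and then outputting the final $x$ (whose Rayleigh quotient we can verify exactly) gives the conclusion with the desired $1 - O(d^{-10})$ probability after a single union bound over the burn-in's initialization event. Combining all costs yields the stated bound.
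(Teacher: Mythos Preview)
Your proposal follows essentially the same two-phase structure as the paper's proof: burn-in via Theorem~\ref{thm:init-offline} using $O(\log(d/\gap))$ outer iterations, each with $O(\log(d/\gap))$ SVRG epochs, followed by refinement via Corollary~\ref{cor:constant_factor_corollary} using $O(1)$ SVRG epochs per step. The burn-in cost analysis matches exactly.

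The only place where your write-up is looser than the paper is the high-probability conversion in the refinement phase. You list several possible mechanisms (restarting, boosting-by-repetition, supermartingale) and settle on ``a constant factor more iterations than the expectation analysis demands'' plus exact Rayleigh-quotient verification. A constant factor alone does not boost from constant success probability to $1-O(d^{-10})$. The paper's route is simpler and cleaner: since Corollary~\ref{cor:constant_factor_corollary} gives $\E[G(x_{t})\mid x_{t-1}]\le \tfrac{4}{25}G(x_{t-1})$ and Theorem~\ref{thm:powermethod-perturb} deterministically preserves $G\le 1/\sqrt{10}$, iterating the tower property yields $\E[G(x_t)]\le (4/25)^{t}/\sqrt{10}$. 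Running $t=O(\log(d/\epsilon))$ iterations makes $\E[G(x_t)]=O(\sqrt{\epsilon}/d^{10})$, and a \emph{single} application of Markov gives $G(x_t)^2=O(\epsilon)$ with probability $1-O(d^{-10})$. The extra $O(\log d)$ iterations beyond the $O(\log(1/\epsilon))$ you counted are absorbed into the $\log^2(d/\gap)$ term, exactly as you anticipated. With that adjustment your argument is the paper's argument.
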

Note that by instantiating the above theorem with $\epsilon' = \epsilon\cdot \gap$, and applying Lemma \ref{lem:ray_to_evec} we can find a unit vector $x$ such that $| v_1^\top x | \ge 1-\epsilon$ in the same asymptotic running time (an extra $\log(1/\gap)$ term is absorbed into the $\log^2(d/\gap)$ term).
\begin{proof}
By Theorem \ref{thm:init-offline}, if we start with $x_0 \sim \mathcal{N}(0,\bv I)$ we can run $O\left (\log \left (\frac{d}{\gap} \right ) \right )$ iterations of the inverted power method, to obtain $x_1$ with $G(x_1) \le \frac{1}{\sqrt{10}}$ with probability $1-O\left (\frac{1}{d^{10}}\right)$. Each iteration requires applying an linear solver that decreases initial error in expectation by a factor of $\frac{1}{\poly(d,1/\gap)}$. Such a solver is given by applying the solver in Theorem \ref{offline_solver} $O\left (\log \left (\frac{d}{\gap} \right ) \right )$ times, decreasing error by a constant factor in expectation each time. So overall in order to find $x_1$ with $G(x_1) \le \frac{1}{\sqrt{10}}$, we require time $O \left (\left(\nnz(\bv A) + \frac{d \nrank(\bv A)}{\gap^2} \right )\cdot\log^2\left(\frac{d}{\gap}\right)\right )$.

After this initial `burn-in' period we can apply Corollary \ref{cor:constant_factor_corollary} of Theorem \ref{thm:powermethod-perturb}, which shows that running a single iteration of the inverted power method will decrease $G(x)$ by a constant factor in expectation. In such an iteration, we only need to use a solver that decreases initial error by a constant factor in expectation. So we can perform each inverted power iteration in this stage in time $O \left (\nnz(\bv A) + \frac{d \nrank(\bv A)}{\gap^2} \right ).$

With $O\left (\log \left (\frac{d}{\epsilon}\right)\right)$ iterations, we can obtain $x$ with $\E \left [G(x)^2 \right ] = O\left (\frac{\epsilon}{d^{10}}\right)$ So by Markov's inequality, we have $G(x)^2 = O(\epsilon)$, giving us $x^T \bv{\Sigma} x \ge (1-O(\epsilon))\lambda_1$ by Lemma \ref{lem:rayquot-potential}. Union bounding over both stages gives us failure probability $O\left ( \frac{1}{d^{10}}\right )$, and adding the runtimes from the two stages gives us the final result. Note that the second stage requires $O\left (\log \left (\frac{d}{\epsilon}\right)\right) = O(\log d + \log (1/\epsilon))$ iterations to achieve the high probability bound. However, the $O(\log d)$ term is smaller than the $O\left (\log^2 \left (\frac{d}{\gap} \right ) \right )$ term, so is absorbed into the asymptotic notation.

\end{proof}

We can apply an identical analysis using the accelerated solver from Theorem \ref{accelerated_offline_solver}, obtaining the following runtime which beats Theorem \ref{main_offline_theorem} whenever $\nnz(\bv A) \le \frac{d\nrank(\bv A)}{\gap^2}$:
\begin{theorem}[Shifted-and-Inverted Power Method Using Accelerated SVRG]\label{accelerated_offline_theorem}
Let $\bv{B} = \lambda \bv{I} - \bv{A}^\top \bv{A}$ for $\left ( 1+\frac{\gap}{150}\right) \lambda_1 \le \lambda \le \left(1+ \frac{\gap}{100} \right ) \lambda_1$ and let $x_0 \sim \mathcal{N}(0,\bv I)$ be a random initial vector. Assume that $\nnz(\bv A) \le \frac{d\nrank(\bv A)}{\gap^2}$. Running the inverted power method on $\bv{B}$ initialized with $x_0$, using the accelerated SVRG solver from Theorem \ref{accelerated_offline_solver} to approximately apply $\bv{B}^{-1}$ at each step, returns $x$ such that with probability $1-O\left (\frac{1}{d^{10}}\right)$, $| v_1^\top x | \ge 1-\epsilon$ in total time $$O \left (\left(\frac{\nnz(\bv A)^{3/4}(d \nrank(\bv A))^{1/4}}{\sqrt{\gap}} \right )\cdot \left (\log^3\left(\frac{d}{\gap}\right) + \log\left(\frac{d}{\gap}\right)\log\left(\frac{1}{\epsilon}\right) \right ) \right ).$$
\end{theorem}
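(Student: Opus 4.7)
The plan is to imitate the two-phase proof of Theorem~\ref{main_offline_theorem} verbatim, swapping in the accelerated solver of Theorem~\ref{accelerated_offline_solver} wherever the basic SVRG solver of Theorem~\ref{offline_solver} is invoked, and then repackaging the $\log$ factors. Concretely, I would split the analysis into (i) a burn-in phase that lifts the random initialization to a vector with $G(x) \le 1/\sqrt{10}$, (ii) a refinement phase that drives $G(x)$ down geometrically, and (iii) a conversion at the end from Rayleigh quotient error to $|v_1^\top x|$.

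For the burn-in, start with $x_0 \sim \mathcal{N}(0,\bv I)$. Lemma~\ref{lem:init-random} says $G(x_0) = \tilde O(1/\sqrt{\gap})\cdot d^{10.5}$ with probability $1-O(1/d^{10})$, and Theorem~\ref{thm:init-offline} shows that $O(\log d + \log \kappa(\bv{B}^{-1})) = O(\log(d/\gap))$ shifted-and-inverted iterations suffice, provided each solver call contracts the $\bv B$-norm error by a factor $1/\poly(d,1/\gap)$ in expectation. Since one call of the accelerated solver of Theorem~\ref{accelerated_offline_solver} contracts expected squared error by a factor of $1/2$, by the tower property $O(\log(d/\gap))$ chained calls give the required $\poly(d,1/\gap)$ contraction. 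Thus the burn-in consists of $O(\log^2(d/\gap))$ accelerated-solver invocations, costing
\[
O\!\left(\frac{\nnz(\bv A)^{3/4}(d\,\nrank(\bv A))^{1/4}}{\sqrt{\gap}}\cdot \log^3\!\left(\frac{d}{\gap}\right)\right).
\]

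For the refinement phase, once $G(x_1)\le 1/\sqrt{10}$ I would invoke Corollary~\ref{cor:constant_factor_corollary}: each outer iteration needs only a \emph{constant-factor} relative-error reduction in the $\bv B$-norm, which is exactly what one call of Theorem~\ref{accelerated_offline_solver} provides. Together with Theorem~\ref{thm:powermethod-perturb}, each such iteration shrinks $\E[G(\cdot)]$ by a constant factor while never blowing up (due to the rejection step in the framework). Running $O(\log(d/\epsilon))$ outer iterations drives $\E[G(x)^2]$ below $\epsilon/d^{10}$; a Markov bound then yields $G(x)^2 = O(\epsilon)$ with probability $1-O(1/d^{10})$. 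By Lemma~\ref{lem:rayquot-potential} this corresponds to Rayleigh quotient error $O(\epsilon(\lambda-\lambda_1)) = O(\epsilon\cdot \gap\cdot \lambda_1)$, and by Lemma~\ref{lem:ray_to_evec} to $|v_1^\top x|\ge 1-O(\epsilon)$; rescaling $\epsilon$ by a constant yields the desired bound. The phase costs $O(\log(d/\gap)\log(d/\epsilon))$ times a single accelerated solve, which after absorbing $\log d \cdot \log(d/\gap)$ into the $\log^3(d/\gap)$ burn-in term becomes $O\!\bigl(\tfrac{\nnz(\bv A)^{3/4}(d\nrank(\bv A))^{1/4}}{\sqrt{\gap}}\log(d/\gap)\log(1/\epsilon)\bigr)$. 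Summing the two phases and union-bounding the $1/d^{10}$ failure probabilities gives the stated runtime.

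The only place where care is needed --- and really the one non-routine step --- is checking that the accelerated solver's guarantee from Theorem~\ref{accelerated_offline_solver}, stated as a factor-$\tfrac12$ contraction of $\E\norm{x-\opt x}_\mb^2$, plugs correctly into both (a) Theorem~\ref{thm:init-offline}, which demands a small \emph{relative} error against $\norm{\tfrac{1}{\lambda-x^\top\bv\Sigma x}x - \bv B^{-1}x}_\mb$, and (b) Corollary~\ref{cor:constant_factor_corollary}, which demands a small relative error against the same baseline. For (a) this is handled by the tower-property chaining described above; for (b) it is immediate because Corollary~\ref{cor:constant_factor_corollary} already formulates its precondition as a relative $\bv B$-norm error bound. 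With this compatibility verified, the rest is bookkeeping of $\log$ factors.
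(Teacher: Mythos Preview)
Your proposal is correct and follows exactly the paper's approach: the paper's proof of Theorem~\ref{accelerated_offline_theorem} is literally the one-line remark ``We can apply an identical analysis using the accelerated solver from Theorem~\ref{accelerated_offline_solver},'' and you have simply spelled out that identical analysis in detail, including the conversion from Rayleigh-quotient error to $|v_1^\top x|\ge 1-\epsilon$ via Lemma~\ref{lem:ray_to_evec}. One small clarification: the phrase ``$O(\log(d/\gap)\log(d/\epsilon))$ times a single accelerated solve'' is slightly ambiguous since each accelerated solve already carries a $\log(d/\gap)$ factor from Theorem~\ref{accelerated_offline_solver}; your final bookkeeping is correct, but it would read more cleanly as ``$O(\log(d/\epsilon))$ outer iterations, each a single accelerated solve of cost $O\bigl(\tfrac{\nnz(\bv A)^{3/4}(d\nrank(\bv A))^{1/4}}{\sqrt{\gap}}\log(d/\gap)\bigr)$.''
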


%Note that, ignoring logarithmic factors, the runtime in Theorem \ref{main_offline_theorem} beats the $\tilde O\left(\frac{\nnz(\bv A)}{\gap}\right)$ runtime of the traditional power method whenever $\frac{d\nrank(\bv{A})}{\gap} \le \nnz(\bv{A})$. Similarly, the runtime of Theorem \ref{accelerated_offline_theorem} beats the $\tilde O\left(\frac{\nnz(\bv A)}{\sqrt{\gap}}\right)$ runtime of accelerated power iteration (using the Lanczos method or Chebyshev iteration) whenever $\frac{d\nrank(\bv{A})}{\gap} \le \nnz(\bv{A})$. In this sense, while achieved using accelerated stochastic optimization algorithms, the acceleration in our runtimes matches the  acceleration of classical methods.
\section{Online Eigenvector Computation}\label{sec:online}

Here we show how to apply the shifted-and-inverted power method framework of Section \ref{framework} to the online setting. This setting is more difficult than the offline case. As there is no canonical matrix $\bv{A}$, and we only have access to the distribution $\dist$ through samples, in order to apply Theorem \ref{thm:powermethod-perturb}
we must show
 how to both estimate the Rayleigh quotient (Section~\ref{sec:online:rayleigh_quotient}) as well as solve the requisite linear systems in expectation (Section~\ref{sec:online:system-solver}).

After laying this ground work, our main result is  given in Section \ref{sec:online:result}. 
Ultimately, the results in this section allow us to achieve more efficient algorithms for computing the top eigenvector in the statistical setting as well as improve upon the previous best known sample complexity for top eigenvector computation. As we show in Section~\ref{sec:lower} the bounds we provide in this section are in fact tight for general distributions.

\subsection{Estimating the Rayleigh Quotient}
\label{sec:online:rayleigh_quotient}

Here we show how to estimate the Rayleigh quotient of a vector with respect to $\mSigma$. Our analysis is standard -- we first approximate the Rayleigh quotient by its empirical value on a batch of $k$ samples and prove using Chebyshev's inequality that the error on this sample is small with constant probability. We then repeat this procedure $O(\log (1/p))$ times and output the median. By a Chernoff bound this yields a good estimate with probability $1 - p$. The formal statement of this result and its proof comprise the remainder of this subsection. 

\begin{theorem}[Online Rayleigh Quotient Estimation]\label{online_rayleigh_estimation}
Given $\epsilon \in (0,1]$, $p \in [0, 1]$, and unit vector $x$ set $k = \lceil 4\nvar(\dist) \epsilon^{-2} \rceil$ and $m = O(\log(1/p))$.
For all $i \in [k]$ and $j \in [m]$ let $a_{i}^{(j)}$ be drawn independently from $\dist$ and set $R_{i,j} = x^\top a_i^{(j)} (a_i^{(j)})^\top x$ and $R_{j} = \frac{1}{k} \sum_{i \in [k]} R_{i,j}$. If we let $z$ be median value of the $R_j$ then with probability $1 - p$ we have 
$
\left|z - x^\top \mSigma x\right| \leq \epsilon \lambda_1 
$.
\end{theorem}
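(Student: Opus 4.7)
The plan is to follow the standard median-of-means recipe: argue each $R_j$ is a sufficiently accurate estimate with constant probability via Chebyshev, and then boost to confidence $1-p$ by taking the median of $m$ independent repetitions via a Chernoff bound.

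First I would compute the mean and variance of a single sample $R_{i,j} = x^\top a_i^{(j)} (a_i^{(j)})^\top x = (x^\top a_i^{(j)})^2$. Linearity of expectation gives $\E[R_{i,j}] = x^\top \E_{a\sim\dist}[aa^\top] x = x^\top \mSigma x$, so $R_j$ is an unbiased estimator. For the variance, the key step is bounding the second moment:
\begin{align*}
\E[R_{i,j}^2]
= \E[x^\top aa^\top x \cdot x^\top aa^\top x]
= x^\top \E[aa^\top\, xx^\top\, aa^\top] x
\le \norm{\E[(aa^\top)^2]}_2,
\end{align*}
where the last inequality uses $xx^\top \preceq \mI$ (since $x$ is a unit vector) and the operator monotonicity of the conjugation $M \mapsto \E[aa^\top M aa^\top]$ on PSD matrices, combined with $\norm{x}_2 = 1$. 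By the definition of $\nvar(\dist)$, this gives $\variance(R_{i,j}) \le \E[R_{i,j}^2] \le \nvar(\dist)\, \lambda_1^2$.

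Next, by independence of the $a_i^{(j)}$ across $i$, $\variance(R_j) \le \nvar(\dist)\lambda_1^2 / k \le \epsilon^2 \lambda_1^2 / 4$ by the chosen value of $k$. Chebyshev's inequality then yields
\begin{align*}
\Pr\!\left[\,\abs{R_j - x^\top \mSigma x} > \epsilon \lambda_1\,\right]
\le \frac{\variance(R_j)}{\epsilon^2 \lambda_1^2}
\le \frac{1}{4},
\end{align*}
so each $R_j$ is within $\epsilon\lambda_1$ of the true Rayleigh quotient with probability at least $3/4$.

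Finally, I would boost to confidence $1-p$ by the standard observation that the median of $m$ independent estimates is bad only if at least $m/2$ of them are individually bad. Letting $X_j$ be the indicator that $R_j$ fails the above deviation bound, $\E[\sum_j X_j] \le m/4$, and a Chernoff bound gives $\Pr[\sum_j X_j \ge m/2] \le \exp(-\Omega(m))$. Choosing $m = O(\log(1/p))$ makes this at most $p$, completing the proof. There is no real obstacle here; the only substantive step is the operator-inequality manipulation used to bound $\E[R_{i,j}^2]$ by $\nvar(\dist)\lambda_1^2$, and the rest is routine Chebyshev plus median boosting.
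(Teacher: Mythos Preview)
Your proposal is correct and matches the paper's argument essentially step for step: bound the second moment of a single sample by $\nvar(\dist)\lambda_1^2$, apply Chebyshev with the chosen $k$ to get failure probability at most $1/4$ for each $R_j$, then use the median-of-means Chernoff argument. The only cosmetic difference is that the paper writes $(x^\top aa^\top x)^2 \le \norm{a}_2^2\, x^\top aa^\top x$ and then bounds the expectation by $\norm{\E[\norm{a}_2^2 aa^\top]}_2$, whereas you equivalently sandwich $xx^\top \preceq \mI$ inside $\E[aa^\top(\cdot)aa^\top]$; since $(aa^\top)^2 = \norm{a}_2^2 aa^\top$, these are the same bound.
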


\begin{proof}
\begin{align*}
\variance_{a \sim \dist} (x^\top a a^\top x)
&=
\E_{a \sim \dist} (x^\top a a^\top x)^2 - (\E_{a \sim \dist} x^\top a a^\top x)^2
\\
&\leq 
\E_{a \sim \dist} \norm{a}_2^2 x^\top a a^\top x - (x^\top \mSigma x)^2
\\
&\leq \normFull{\E_{a \sim \dist} \norm{a}_2^2 a a^\top}_2 =\nvar(\dist) \lambda_1^2
\end{align*}
Consequently, $\variance(R_{i,j}) \leq \nvar(\dist) \lambda_1^2$, and since each of the $a_{i}^{(j)}$ were drawn independently this implies that we have that $\variance(R_j) \leq \nvar(\dist) \lambda_1^2 / k$. Therefore, by Chebyshev's inequality 
\[
\Pr\left[\left|R_j - \E[R_j]\right| \geq 2 \sqrt{\frac{\nvar(\dist) \lambda_1^2}{k}}\right]
\leq \frac{1}{4}.
\]
Since $\E[R_j] = x^\top \mSigma x$ and since we defined $k$ appropriately this implies that
\begin{equation}
\label{eq:rayleighlemma:1}
\Pr\left[\left|R_j - x^\top \mSigma x \right| \geq \epsilon \lambda_1 \right]
\leq \frac{1}{4}.
\end{equation}
The median $z$ satisfies $|z - x^\top \mSigma x| \leq \epsilon$ as more than half of the $R_j$ satisfy $|R_j - x^\top \mSigma x| \leq \epsilon$. This happens with probability $1 - p$ by Chernoff bound, our choice of $m$ and \eqref{eq:rayleighlemma:1}.
\end{proof}

\subsection{Solving the Linear system}
\label{sec:online:system-solver}

Here we show how to solve linear systems in $\mb = \lambda \mI - \mSigma$ in the streaming setting. We follow the general strategy of the offline algorithms in Section~\ref{sec:offline}, replacing traditional SVRG with the streaming SVRG algorithm of \cite{frostig2014competing}. Similarly to the offline case we minimize $f(x) = \frac{1}{2} x^\top \bv B x - b^\top x$ and define for all $a \in \supp(\dist)$,
%[STREAMING SVRG CITE] provided a linear-time asymptotically-optimal streaming algorithm for minimizing a the expected value of a distribution of functions given a sampling oracle and regularity conditions on the distribution.
\begin{align}\label{distribution_obj_function}
\psi_{a}(x)
	\defeq
\frac{1}{2} x^{\top} (\lambda \mI - aa^{\top})x - b^{\top}x.
\end{align}
insuring that 
$
f(x) = \E_{a \sim \dist} \psi_{a}(x).
$.

The performance of streaming SVRG \cite{frostig2014competing} is governed by three regularity parameters. As in the offline case, we use the fact that $f(\cdot)$ is $\mu$-strongly convexity for $\mu = \lambda - \lambda_1$ and we require a smoothness parameter, denoted $\avgsmooth$, that satisfies:
\begin{equation}
\label{eq:smoothness}
\forall x \in \R^d \enspace : \enspace
\E_{a \sim \dist}\norm{\grad\psi_{a}(x)-\grad\psi_{a}(\opt{x})}_2^2
\leq 2 \avgsmooth \left [f(x) - f(\opt{x})\right] ~.
\end{equation}
Furthermore, we require an upper bound the variance, denoted $\sigma^2$, that satisfies:
\begin{equation}
\label{eq:variance}
\E_{a \sim \dist}    \frac{1}{2}
\normFull{\grad \psi_{a} (\opt{x})}_{
	\left(\hess f (\opt{x})\right)^{-1}}^2
\leq \sigma^2
~.
\end{equation}
With the following two lemmas we bound these parameters. 

\begin{lemma} [Streaming Smoothness]
\label{lem:online:smooth}
The smoothness parameter
$
\avgsmooth \defeq \lambda + \frac{\nvar(\dist)\lambda_1^2}{\lambda - \lambda_1}
$
satisfies \eqref{eq:smoothness}.
\end{lemma}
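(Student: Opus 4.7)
The plan is to directly compute $\nabla \psi_a(x) - \nabla \psi_a(\opt{x})$ for a single $a$, take an expectation, and then convert from the $\ell_2$ norm to the $\bv{B}$ norm using strong convexity together with the identity $2[f(x) - f(\opt{x})] = \norm{x - \opt{x}}_\bv{B}^2$ already noted in \eqref{norm_to_function_error_conversion}.

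First I would differentiate $\psi_a$ from \eqref{distribution_obj_function} to get $\nabla \psi_a(x) = (\lambda \bv{I} - aa^\top) x - b$, so that if we let $y \defeq x - \opt{x}$ then
\[
\nabla \psi_a(x) - \nabla \psi_a(\opt{x}) = (\lambda \bv{I} - aa^\top) y.
\]
Expanding the squared norm and using the identity $(aa^\top)^2 = \norm{a}_2^2 \, aa^\top$, we have $\norm{(\lambda \bv{I} - aa^\top) y}_2^2 = \lambda^2 \norm{y}_2^2 - 2\lambda \, y^\top aa^\top y + y^\top (aa^\top)^2 y$. Taking expectation over $a \sim \dist$ gives
\[
\E_{a \sim \dist} \norm{\nabla \psi_a(x) - \nabla \psi_a(\opt{x})}_2^2 = \lambda^2 \norm{y}_2^2 - 2\lambda \, y^\top \bv{\Sigma} y + y^\top \E_{a\sim\dist}[(aa^\top)^2] y.
\]

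Next I would rewrite $\lambda^2 \norm{y}_2^2 - 2\lambda \, y^\top \bv{\Sigma} y = \lambda \norm{y}_\bv{B}^2 - \lambda \, y^\top \bv{\Sigma} y$ using that $\norm{y}_\bv{B}^2 = \lambda \norm{y}_2^2 - y^\top \bv{\Sigma} y$. Substituting this and dropping the nonpositive term $-\lambda \, y^\top \bv{\Sigma} y$ yields the bound
\[
\E_{a \sim \dist} \norm{\nabla \psi_a(x) - \nabla \psi_a(\opt{x})}_2^2 \leq \lambda \norm{y}_\bv{B}^2 + y^\top \E_{a\sim\dist}[(aa^\top)^2] y.
\]
By the definition of $\nvar(\dist)$ we have $\E_{a\sim\dist}[(aa^\top)^2] \preceq \nvar(\dist) \lambda_1^2 \, \bv{I}$, and since $\bv{B} \succeq (\lambda - \lambda_1) \bv{I}$ we obtain $\norm{y}_2^2 \leq \frac{1}{\lambda - \lambda_1} \norm{y}_\bv{B}^2$. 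Combining these gives $y^\top \E_{a\sim\dist}[(aa^\top)^2] y \leq \frac{\nvar(\dist) \lambda_1^2}{\lambda - \lambda_1} \norm{y}_\bv{B}^2$.

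Putting it together, the right-hand side is at most $\left(\lambda + \frac{\nvar(\dist) \lambda_1^2}{\lambda - \lambda_1}\right) \norm{y}_\bv{B}^2 = 2 \avgsmooth [f(x) - f(\opt{x})]$, which is exactly \eqref{eq:smoothness}. There is no real obstacle here — the argument is a straightforward calculation — but the one subtle point is the algebraic regrouping that turns $\lambda^2 \norm{y}_2^2 - 2\lambda \, y^\top \bv{\Sigma} y$ into a $\bv{B}$-norm expression so the strong-convexity conversion can absorb the remaining $\ell_2$ term at the cost of a $1/(\lambda - \lambda_1)$ factor, exactly producing the stated $\avgsmooth$.
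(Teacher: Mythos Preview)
Your proof is correct and essentially identical to the paper's. The paper expands $\E_{a\sim\dist}\norm{(\lambda\bv{I}-aa^\top)y}_2^2$ in the same way, performs the same regrouping $\lambda^2\norm{y}_2^2 - 2\lambda\,y^\top\mSigma y \leq \lambda\norm{y}_\mb^2$ (though without writing the intermediate identity explicitly), bounds the $(aa^\top)^2$ term via $\nvar(\dist)\lambda_1^2$, and then converts $\norm{y}_2^2$ to $\norm{y}_\mb^2$ via $\mb\succeq(\lambda-\lambda_1)\bv{I}$ before invoking $\norm{y}_\mb^2 = 2[f(x)-f(\opt{x})]$.
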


\begin{proof} Our proof is similar to the one for Lemma~\ref{simple_variance_bound}.
\begin{align*}
\E_{a \sim \dist}
\normFull{\grad\psi_{a}(x) - \grad \psi_{a} (\opt{x})}_2^2 &= \E_{a \sim \dist} \normFull{(\lambda \mI - aa^\top) (x - \opt{x})}_2^2\\
	&=
	\lambda^2 \norm{x - \opt{x}}_2^2 - 2\lambda \E_{a \sim \dist} \norm{x - \opt{x}}_{aa^\top}^2 + \E_{a \sim \dist} \norm{aa^\top (x - \opt{x})}_2^2\\
	&\le
	\lambda^2 \norm{x - \opt{x}}_2^2 - 2\lambda \norm{x - \opt{x}}_{\bv \Sigma}^2 + \normFull{\E_{a \sim \dist} \norm{a}_2^2 a a^\top}_2 \cdot \norm{x - \opt{x}}_2^2\\
	&\leq  \lambda \normFull{x - \opt{x}}_\mb^2 + \nvar(\dist)\lambda_1^2\norm{x - \opt{x}}_2^2.
\end{align*}
Since $f$ is $\lambda - \lambda_1$-strongly convex, $\norm{x - \opt{x}}_2^2 \leq \frac{2}{\lambda - \lambda_1} [f(x) - f(\opt{x})]$. Furthermore, since direct calculation reveals, $2[f(x) - f(\opt{x})] = \norm{x - \opt{x}}_{\mb}^2$, the result follows.
\end{proof}

\begin{lemma}[Streaming Variance] 
\label{lem:online:streamvar}
The variance parameter $\sigma^2 \defeq
\frac{\nvar(\dist)\lambda_1^2}{\lambda - \lambda_1} 
\normFull{\opt{x}}_2^2$ 
satisfies \eqref{eq:variance}.
\end{lemma}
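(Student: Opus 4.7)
The plan is to compute $\grad \psi_a(\opt{x})$ explicitly, use the fact that $\hess f(\opt{x}) = \mb$ to control the $(\hess f(\opt{x}))^{-1}$ norm by the Euclidean norm, and then bound the resulting expectation by the assumed variance quantity $\nvar(\dist)$.

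First, I would observe that since $f(x) = \frac{1}{2} x^\top \mb x - b^\top x$ and $\opt{x}$ is the minimizer, optimality gives $b = \mb \opt{x} = (\lambda \mI - \mSigma)\opt{x}$. Substituting this into the gradient of $\psi_a$, which is $\grad \psi_a(x) = (\lambda \mI - a a^\top) x - b$, the $\lambda \opt{x}$ terms cancel and we obtain the clean identity
\[
\grad \psi_a(\opt{x}) = (\mSigma - a a^\top) \opt{x}.
\]
This is the one simplification that makes the whole computation go through; the rest is routine.

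Second, since $\hess f(\opt{x}) = \mb$ and $\lambda_d(\mb) = \lambda - \lambda_1$, for any vector $y$ we have $\norm{y}_{\mb^{-1}}^2 \leq \frac{1}{\lambda - \lambda_1} \norm{y}_2^2$. Applying this with $y = \grad \psi_a(\opt{x})$ and taking expectations,
\[
\E_{a \sim \dist} \tfrac{1}{2} \norm{\grad \psi_a(\opt{x})}_{\mb^{-1}}^2 \leq \frac{1}{2(\lambda - \lambda_1)} \, \E_{a \sim \dist} \norm{(\mSigma - a a^\top) \opt{x}}_2^2.
\]

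Third, I expand the Euclidean expectation: $\E \norm{(\mSigma - a a^\top)\opt{x}}_2^2 = \opt{x}^\top \E[(\mSigma - a a^\top)^2] \opt{x}$. Expanding the square and using $\E[a a^\top] = \mSigma$, the cross terms contribute $-2\mSigma^2$, so $\E[(\mSigma - a a^\top)^2] = \E[(a a^\top)^2] - \mSigma^2 \preceq \E[(a a^\top)^2]$. Hence the expectation is at most $\norm{\E[(aa^\top)^2]}_2 \cdot \norm{\opt{x}}_2^2 = \nvar(\dist) \lambda_1^2 \norm{\opt{x}}_2^2$ by the definition of $\nvar(\dist)$. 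Combining the two bounds yields
\[
\E_{a \sim \dist} \tfrac{1}{2} \norm{\grad \psi_a(\opt{x})}_{\mb^{-1}}^2 \leq \frac{\nvar(\dist) \lambda_1^2}{2(\lambda - \lambda_1)} \norm{\opt{x}}_2^2 \leq \sigma^2,
\]
as claimed. There is no real obstacle here — the lemma is a direct variance calculation, and the only thing worth noting is the optimality-based simplification of the gradient at $\opt{x}$, which eliminates the $\lambda \opt{x}$ contribution and leaves an expression that depends only on the centered sample $\mSigma - a a^\top$.
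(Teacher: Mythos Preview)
Your proof is correct and follows essentially the same approach as the paper: both use the optimality condition $b = \mb \opt{x}$ to simplify $\grad\psi_a(\opt{x}) = (\mSigma - aa^\top)\opt{x}$, drop the nonnegative $\mSigma$-centered term, and bound via $\mb^{-1} \preceq \frac{1}{\lambda-\lambda_1}\mI$ together with $\E[(aa^\top)^2] \preceq \nvar(\dist)\lambda_1^2\,\mI$. The only cosmetic difference is that you pass to the Euclidean norm first and then expand, whereas the paper expands in the $\mb^{-1}$ norm and applies the eigenvalue bound afterward; both routes yield the same $\tfrac{1}{2}\sigma^2 \le \sigma^2$ conclusion.
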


\begin{proof}
%Since $\opt{x} = \mb^{-1} b$ and $\mb = \lambda \mI - \mSigma$ we 
We have
\begin{align*}
\E_{a \sim \dist}  \frac{1}{2}
\normFull{\grad \psi_{a} (\opt{x})}_{
	\left(\hess f (\opt{x})\right)^{-1}}^2
	 &=
\E_{a \sim \dist} \frac{1}{2} \normFull{\left(\lambda \mI 
	- aa^{\top}\right) \opt{x} - b}_{\mb^{-1}}^{2}
	\\
	&=
\E_{a \sim \dist} \frac{1}{2} \normFull{\left(\lambda \mI 
	- aa^{\top}\right) \opt{x} - \mb \opt{x}}_{\mb^{-1}}^{2}
	\\
	&=
\E_{a \sim \dist} \frac{1}{2} \normFull{\left(\mSigma
	- aa^{\top}\right) \opt{x}}_{\mb^{-1}}^{2}.
\end{align*}
Applying $\E \norm{a -\E a}_{2}^{2} = \E\norm a_{2}^{2}-\norm{\E a}_{2}^{2}$ gives:
\[
\E_{a \sim \dist} \normFull{\left(\mSigma
	- aa^{\top}\right) \opt{x}}_{\mb^{-1}}^{2}
 = \E_{a \sim \dist} \normFull{\opt{x}}_{a a^\top \mb^{-1} a a^\top}^2
-  \normFull{\opt{x}}_{\mSigma \mb^{-1} \mSigma}^2 \le \E_{a \sim \dist}  \normFull{\opt{x}}_{a a^\top \mb^{-1} a a^\top}^2.
\]
Furthermore, since $\mb^{-1} \preceq \frac{1}{\lambda - \lambda_1} \mI$ we have 
\[
\E_{a \sim \dist}  a a^{\top} \mb^{-1} a a^\top
\preceq
\frac{1}{\lambda - \lambda_1} \E_{a \sim \dist}  (a a^\top)^2
\preceq
\left(
\frac{\normFull{\E_{a \sim \dist} (a a ^\top)^2}_2}{\lambda - \lambda_1}\right) \mI
=
\left(\frac{\nvar(\dist)\lambda_1^2}{\lambda - \lambda_1}\right) \mI ~.
\]
Combining these three equations yields the result.
\end{proof}

With the regularity parameters bounded  we can apply the streaming SVRG algorithm of \cite{frostig2014competing} to solve systems in $\bv{B}$. We encapsulate the core iterative step of Algorithm $1$ of \cite{frostig2014competing} as follows:

%Our main algorithmic tool is a result from [CITE STREAMING SVRG] providing a linear-time streaming algorithm for decreasing function error in expectation. This step, we call a streaming SVRG step, is defined in our language as follows.

\begin{definition}[Streaming SVRG Step]\label{svrg_step_def} 
Given $x_0 \in \R^d$ and $\eta, k, m > 0$ we define a \emph{streaming SVRG step}, $x = \ssvrgstep(x_0, \eta, k, m)$ as follows. First we take $k$ samples $a_1, ..., a_k$ from $\dist$ and set $g = \frac{1}{k} \sum_{i \in [k]} \psi_{a_i}$ where $\psi_{a_i}$ is as defined in \eqref{distribution_obj_function}. Then for $\wt{m}$ chosen uniformly at random from $\{1, ..., m\}$ we draw $\tilde m$ additional samples $\wt{a}_1, ..., \wt{a}_{\wt{m}}$ from $\dist$. For $t = 0, ... , \wt{m} - 1$ we let
\[
x_{t+1}
	:=
x_t -
	\frac{\eta}{L}
\left(
	\grad \psi_{\wt{a}_{t}}(x_t)
	- \grad \psi_{\wt{a}_{t}}(x_0)
	+ \grad g(x_0)
\right)
	\]
and return $x_{\wt{m}}$ as the output.
\end{definition}

The accuracy of the above iterative step is proven in Theorem 4.1 of \cite{frostig2014competing}, which we include, using our notation below:

\begin{theorem} [Theorem 4.1 of \cite{frostig2014competing} \footnote{Note that 
		Theorem~4.1 in \cite{frostig2014competing} has an additional parameter of $\alpha$, which bounds the Hessian of $f(\opt{x})$ in comparison to the Hessian everywhere else. In our setting this parameter is $1$ as $\hess f(y) = \hess f(z)$ for all $y$ and $z$.}]\label{streaming_svrg_perf_bound}
Let $f(x) = \E_{a\sim \dist} \psi_a(x)$ and let $\mu$, $\avgsmooth$, $\sigma^2$ be the strong convexity, smoothness, and variance bounds for $f(x)$. Then for any distribution over $x_0$ we have that 
$x := \ssvrgstep(x_0, \eta, k, m)$ has
$\E[f(x) - f(\opt{x})]$ upper bounded by
\[
	\frac{1}{1 - 4\eta}
\left[
	\left(\frac{\avgsmooth}{\mu m\eta} + 4 \eta \right)
	\left [\E f(x_0) - f(\opt{x}) \right ]
+ \frac{1 + 2\eta}{k}
	\left(
	\sqrt{\frac{\avgsmooth}{\mu} \cdot \left [\E f(x_0) - f(\opt{x})\right ]}
	+ \sigma
	\right)^2
	\right].
\]
\end{theorem}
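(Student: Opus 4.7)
The plan is to mirror the classical SVRG proof while tracking the extra bias and variance introduced by replacing the exact full gradient $\grad f(x_0)$ with the $k$-sample batch gradient $\grad g(x_0)$, and to exploit the uniform randomization of $\wt m$ to turn a telescoping sum into a clean expectation bound. Throughout, I would treat the outer randomness (the batch $g$) and the inner randomness (the samples $\wt a_t$ and the index $\wt m$) separately.

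First I would analyze a single inner step. Setting $v_t = \grad\psi_{\wt a_t}(x_t) - \grad\psi_{\wt a_t}(x_0) + \grad g(x_0)$, conditioning on $x_t,x_0,g$ gives $\E[v_t\mid x_t,g] = \grad f(x_t) - \grad f(x_0) + \grad g(x_0)$. Expanding $\|x_{t+1}-\opt x\|_2^2$ yields the one-step recursion
\[
\E\|x_{t+1}-\opt x\|_2^2 \le \|x_t-\opt x\|_2^2 - \tfrac{2\eta}{\avgsmooth}\inprod{\grad f(x_t)-\grad f(x_0)+\grad g(x_0),\,x_t-\opt x} + \tfrac{\eta^2}{\avgsmooth^2}\E\|v_t\|_2^2,
\]
where the expectation is only over $\wt a_t$.

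Next I would bound $\E\|v_t\|_2^2$. Using $\|a+b\|_2^2\le 2\|a\|_2^2+2\|b\|_2^2$ with $a=\grad\psi_{\wt a_t}(x_t)-\grad\psi_{\wt a_t}(\opt x)$ and $b=\grad\psi_{\wt a_t}(\opt x)-\grad\psi_{\wt a_t}(x_0)+\grad g(x_0)$, the smoothness hypothesis \eqref{eq:smoothness} gives $\E\|a\|_2^2\le 2\avgsmooth[f(x_t)-f(\opt x)]$. The main obstacle is the term involving $b$, which contains the bias $\grad g(x_0)-\grad f(x_0)$: this is where batch size $k$ enters. Taking a further expectation over $g$, this bias has variance $\tfrac{1}{k}\variance_{a\sim\dist}(\grad\psi_a(x_0))$, which I would split as $\grad\psi_a(x_0)=\grad\psi_a(\opt x) + [\grad\psi_a(x_0)-\grad\psi_a(\opt x)]$, bounding the first piece by $\sigma^2$ via \eqref{eq:variance} (after expressing the appropriate norm inequality) and the second by $\avgsmooth[f(x_0)-f(\opt x)]$ via \eqref{eq:smoothness}. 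The inequality $\sqrt{x+y}\le\sqrt x+\sqrt y$ then assembles these into a single term of the form $\tfrac{1}{k}\bigl(\sigma + \sqrt{(\avgsmooth/\mu)[f(x_0)-f(\opt x)]}\bigr)^2$, which matches the additive term in the theorem.

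Finally I would telescope. Summing the per-step recursion for $t=0,\dots,\wt m-1$, the left side collapses to $\E\|x_{\wt m}-\opt x\|_2^2\ge 0$, and the inner-product term, after taking expectation over $g$ (which kills the bias) and applying $\mu$-strong convexity via $\inprod{\grad f(x_t),x_t-\opt x}\ge f(x_t)-f(\opt x)$, produces $\sum_t[f(x_t)-f(\opt x)]$. Averaging over $\wt m$ uniform on $\{1,\dots,m\}$ converts $\E\bigl[\tfrac{1}{m}\sum_{t=0}^{\wt m-1}[f(x_t)-f(\opt x)]\bigr]$ into a quantity comparable (up to factors of $m$) to $\E[f(x)-f(\opt x)]$; combined with the $\eta^2\E\|v_t\|_2^2$ term that contributes both an $f(x_t)-f(\opt x)$ piece (absorbed on the left via the $1/(1-4\eta)$ prefactor) and an $f(x_0)-f(\opt x)$ piece (contributing the $\avgsmooth/(\mu m\eta)+4\eta$ coefficient), one obtains the stated bound after rearrangement.
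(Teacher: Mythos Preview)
The paper does not prove this statement at all: it is quoted as Theorem~4.1 of \cite{frostig2014competing} and used as a black box (the text immediately preceding the theorem says ``The accuracy of the above iterative step is proven in Theorem~4.1 of \cite{frostig2014competing}, which we include, using our notation below''). So there is no proof in the paper to compare your sketch against; your proposal is an attempt to reconstruct the argument from the cited reference.

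As a reconstruction, your outline is in the right spirit (SVRG-style one-step recursion, variance decomposition, telescoping, averaging over $\wt m$), but there is a genuine gap in the final paragraph. You write that ``taking expectation over $g$ kills the bias'' in the cross term $\inprod{\grad g(x_0)-\grad f(x_0),\,x_t-\opt{x}}$. This is not correct: every iterate $x_t$ for $t\ge 1$ is a function of $\grad g(x_0)$ through the update rule, so $x_t-\opt{x}$ and $\grad g(x_0)-\grad f(x_0)$ are correlated, and the expectation over $g$ does not vanish. This bias term must instead be bounded directly (e.g.\ via Cauchy--Schwarz or Young's inequality together with strong convexity), and it is precisely this contribution that produces part of the $\tfrac{1+2\eta}{k}(\cdot)^2$ term in the statement. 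A related issue is that the variance hypothesis \eqref{eq:variance} is stated in the $(\hess f(\opt{x}))^{-1}$-norm rather than in $\ell_2$; your parenthetical ``after expressing the appropriate norm inequality'' hides a nontrivial step, and in fact the argument in \cite{frostig2014competing} tracks distances in a Hessian-weighted norm throughout so that this variance definition is the natural one. Without addressing these two points your sketch does not close.
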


Using Theorem~\ref{streaming_svrg_perf_bound} we can immediately obtain the following guarantee for solve system in $\bv{B}$:

\begin{corollary}[Streaming SVRG Solver - With Initial Point]\label{streaming_solverOLD}
Let $\mu =\lambda-\lambda_1$, $\avgsmooth =\lambda + \frac{\nvar(\dist)\lambda_1^2}{\lambda-\lambda_1}$, and $\sigma^2 = \frac{\nvar(\dist)\lambda_1^2}{\lambda-\lambda_1} \norm{\opt{x}}_2^2$. Let $c_2,c_3 \in (0,1)$ be any constants and set $\eta = \frac{c_2}{8}$, $m = \left [\frac{\avgsmooth}{\mu c_2^2}\right]$, and $k = \max \left \{ \left [\frac{\avgsmooth}{\mu c_2}\right], \left [\frac{\nvar(\dist)\lambda_1^2}{(\lambda-\lambda_1)^2c_3} \right] \right \}$. If to solve $\bv{B}x = b$ for unit vector $b$ with initial point $x_0$, we use the iterative procedure described in Definition \ref{svrg_step_def} to compute $x = \ssvrgstep(x_0, \eta, k, m)$ then:
\[
\E \norm{x - \opt{x}}_{\mb}^2
\leq 22c_2 \cdot \norm{x_0-\opt{x}}_\mb^2 + 10 c_3\lambda_1(\bv{B}^{-1}).
%+\frac{2}{k} \cdot \frac{\nvar \norm{\opt{x}}_2^2}{(\lambda-\lambda_1)}
\]
Further, the procedure requires $O \left ( \frac{\nvar(\dist)}{\gap^2} \left [\frac{1}{c_2^2} + \frac{1}{c_3} \right ] \right )$ samples from $\dist$.
\end{corollary}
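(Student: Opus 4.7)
The plan is to apply Theorem~\ref{streaming_svrg_perf_bound} directly with the regularity parameters $\mu = \lambda - \lambda_1$ (strong convexity of $f$, as used in Section~\ref{sec:offline:svrg}), $\avgsmooth = \lambda + \nvar(\dist)\lambda_1^2/(\lambda-\lambda_1)$ from Lemma~\ref{lem:online:smooth}, and $\sigma^2 = \nvar(\dist)\lambda_1^2/(\lambda-\lambda_1)\cdot \norm{\opt{x}}_2^2$ from Lemma~\ref{lem:online:streamvar}, and then substitute the prescribed choices of $\eta, m, k$ to reduce the resulting expression to the stated bound. Throughout, I will use the identity $\norm{x - \opt{x}}_\mb^2 = 2[f(x)-f(\opt{x})]$ (cf.\ \eqref{norm_to_function_error_conversion}) to pass between the function-value gap and the $\norm{\cdot}_\mb$-norm gap; both the hypothesis on $x_0$ and the conclusion will be stated in the latter.

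First I would control $\sigma^2$ using that $b$ is a unit vector: since $\opt{x} = \bv{B}^{-1} b$ and the smallest eigenvalue of $\bv{B}$ is $\lambda - \lambda_1$, we have $\norm{\opt{x}}_2 \leq 1/(\lambda-\lambda_1)$, hence
\[
\sigma^2 \;\leq\; \frac{\nvar(\dist)\lambda_1^2}{(\lambda-\lambda_1)^3} \;=\; \frac{\nvar(\dist)\lambda_1^2}{(\lambda-\lambda_1)^2}\cdot \lambda_1(\bv{B}^{-1}).
\]
Next I would plug the chosen parameters into the bound of Theorem~\ref{streaming_svrg_perf_bound}. With $\eta = c_2/8 \le 1/8$, the prefactor $1/(1-4\eta)\le 2$. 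From $m \geq \avgsmooth/(\mu c_2^2)$, the term $\avgsmooth/(\mu m \eta) \leq 8c_2$, so the ``$\text{err}_0$'' coefficient inside the brackets is at most $8c_2 + 4\eta = (17/2)c_2$, where $\text{err}_0 \defeq \E f(x_0) - f(\opt{x})$. For the $\sigma$-term, I would use $(a+b)^2 \leq 2a^2 + 2b^2$ and $(1+2\eta)/k \leq (5/4)/k$, giving a bound of $(5/2k)\bigl(\avgsmooth \,\text{err}_0/\mu + \sigma^2\bigr)$.

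Then I would apply the two defining conditions on $k$: the condition $k \geq \avgsmooth/(\mu c_2)$ yields $(5/2k)\cdot(\avgsmooth/\mu)\,\text{err}_0 \leq (5/2)c_2\,\text{err}_0$, and the condition $k \geq \nvar(\dist)\lambda_1^2/((\lambda-\lambda_1)^2 c_3)$ together with the $\sigma^2$ bound above yields $(5/2k)\sigma^2 \leq (5/2)c_3\,\lambda_1(\bv{B}^{-1})$. Summing inside the brackets and multiplying by the prefactor $1/(1-4\eta) \le 2$ gives
\[
\E[f(x)-f(\opt{x})] \;\leq\; 2\!\left[\tfrac{17}{2}c_2 + \tfrac{5}{2}c_2\right]\!\text{err}_0 + 2\cdot\tfrac{5}{2}c_3\,\lambda_1(\bv{B}^{-1}) \;=\; 22 c_2\,\text{err}_0 + 5c_3\,\lambda_1(\bv{B}^{-1}).
\]
Multiplying by $2$ and using $\norm{\cdot}_\mb^2 = 2[f(\cdot)-f(\opt{x})]$ on both sides (so $2\,\text{err}_0 = \norm{x_0-\opt{x}}_\mb^2$) yields the claimed inequality. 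Finally, the sample count is $k + \tilde m \leq k + m$; substituting $\avgsmooth/\mu = O(1/\gap + \nvar(\dist)/\gap^2) = O(\nvar(\dist)/\gap^2)$ (using $\nvar(\dist)\ge 1$) gives $m = O(\nvar(\dist)/(\gap^2 c_2^2))$ and $k = O(\nvar(\dist)/(\gap^2 c_2) + \nvar(\dist)/(\gap^2 c_3))$, which combines to the stated $O\!\bigl(\nvar(\dist)\gap^{-2}(c_2^{-2} + c_3^{-1})\bigr)$.

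The main obstacle is simply careful constant tracking in the decomposition of the $(\sqrt{\cdot}+\sigma)^2$ term and verifying that $\|\opt{x}\|_2 \le 1/(\lambda-\lambda_1)$ is the right surrogate for $\sigma^2$ when passing to the $\lambda_1(\bv{B}^{-1})$-scale that appears in the conclusion; everything else is routine substitution into Theorem~\ref{streaming_svrg_perf_bound}.
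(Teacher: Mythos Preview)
Your proposal is correct and follows essentially the same approach as the paper's proof: apply Theorem~\ref{streaming_svrg_perf_bound} with the parameters from Lemmas~\ref{lem:online:smooth} and~\ref{lem:online:streamvar}, split the $(\sqrt{\cdot}+\sigma)^2$ term via $(a+b)^2\le 2a^2+2b^2$, bound $\sigma^2$ using $\norm{\opt{x}}_2\le 1/(\lambda-\lambda_1)$ for unit $b$, and then substitute the prescribed $\eta,m,k$ and convert between function error and $\norm{\cdot}_\mb$ via \eqref{norm_to_function_error_conversion}. The only cosmetic difference is that you upper-bound the prefactor $1/(1-4\eta)\le 2$ up front whereas the paper keeps $1/(1-c_2/2)$ symbolic a bit longer; the resulting constants match.
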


\begin{proof}
Using the inequality $(x + y)^2 \leq 2x^2 + 2y^2$ we have that
\[
\left(\sqrt{\frac{\avgsmooth}{\mu} \cdot \E[f(x_0) - f(\opt{x})]} + \sigma \right)^2
\leq \frac{2 \avgsmooth}{\mu} \cdot \E[f(x_0) - f(\opt{x})]
+ 2\sigma^2
\]
Additionally, since $b$ is a unit vector, we know that $\norm{\opt{x}}_2^2 = \norm{\mb^{-1} b}_2^2 \le \frac{1}{(\lambda-\lambda_1)^2}$.
Using equation \eqref{norm_to_function_error_conversion}, i.e. that $\norm{x-\opt{x}}_\bv{B}^2 = 2 [f(x)-f(\opt{x})]$ for all $x$,  we have by Theorem \ref{streaming_svrg_perf_bound}:
\begin{align*}
\E \norm{x-\opt{x}}_\bv{B}^2 
&\leq
\frac{1}{1 - c_2/2} \left[
\left(8c_2 + \frac{c_2}{2} + \frac{4+c_2}{2}\cdot c_2 \right)\cdot \norm{x_0-\opt{x}}_\mb^2
+ \frac{4+c_2}{4k}\cdot \frac{\nvar(\dist)\lambda_1^2}{(\lambda-\lambda_1)^3}
\right]
\\
&\le 22c_2 \cdot \norm{x_0-\opt{x}}_\mb^2 + \frac{10 c_3}{\lambda-\lambda_1}
= 22c_2 \cdot \norm{x_0-\opt{x}}_\mb^2 + 10 c_3 \lambda_1(\bv{B}^{-1}).
\end{align*}
Since $1/(\lambda - \lambda_1) = \lambda_1(\bv{B}^{-1})$ we see that $\E \norm{x-\opt{x}}_\bv{B}^2$ is as desired. All that remains is to bound the number of samples we used.

Now the number of samples used to compute $x$ is clearly at most $m + k$ Now
\[
m = \frac{\avgsmooth}{\mu c_2^2} = O  \left(\frac{\lambda}{c_2^2(\lambda-\lambda_1)} + \frac{\nvar(\dist)\lambda_1^2}{c_2^2(\lambda-\lambda_1)^2}\right ) = O\left (\frac{1}{c_2^2\gap} + \frac{\nvar(\dist)}{c_2^2\gap^2} \right ) ~.
\]. However since $\gap < 1$ and $\nvar(\dist) \ge 1$ this simplifies to $m = O\left ( \frac{\nvar}{c_2^2\gap^2}\right )$. Next to bound $k$ we can ignore the $\left [\frac{\avgsmooth}{\mu c_2}\right]$ term since this was already included in our bound of $m$ and just bound $\frac{\nvar(\dist)\lambda_1^2}{c_3(\lambda-\lambda_1)^2} = O\left ( \frac{\nvar(\dist)}{\gap^2c_3}\right )$ yielding our desired sample complexity. 
\end{proof}

Whereas in the offline case, we could ensure that our initial error $\norm{x_0 - \opt{x}}_\mb^2$ is small by simply scaling by the Rayleigh quotient (Corollary~\ref{cor:constant_factor_corollary}) in the online case estimating the Rayleigh quotient to sufficient accuracy would require too many samples. Instead, here simply show how to simply apply Corollary~\ref{streaming_solverOLD} iteratively to solve the desired linear systems to absolute accuracy without an initial point. Ultimately, due to the different error dependences in the online case this guarantee suffices and the lack of an initial point is not a bottleneck.

\begin{corollary}[Streaming SVRG Solver] \label{streaming_solver}
There is a streaming algorithm that iteratively applies the solver of Corollary~\ref{streaming_solverOLD} to solve $\bv{B}x = b$ for unit vector $b$ and returns a vector $x$ that satisfies
$
\E \norm{x - \opt{x}}_{\mb}^2
\leq 10 c\lambda_1(\bv{B}^{-1})
%+\frac{2}{k} \cdot \frac{\nvar \norm{\opt{x}}_2^2}{(\lambda-\lambda_1)}
$
using $O \left ( \frac{\nvar(\dist)}{\gap^2 \cdot c}\right )$ samples from $\dist$.
\end{corollary}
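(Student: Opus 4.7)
The plan is to initialize with the zero vector and then invoke Corollary~\ref{streaming_solverOLD} repeatedly, gradually tightening the accuracy parameters. Since $b$ is a unit vector and $\opt{x} = \mb^{-1} b$, we have the initial error bound $\norm{0 - \opt{x}}_\mb^2 = b^\top \mb^{-1} b \le \lambda_1(\mb^{-1})$. So setting $x_0 = 0$ gives us a free starting point whose $\mb$-error is already at most $\lambda_1(\mb^{-1})$.

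For each round $t = 1, 2, \dots, T$, I would apply Corollary~\ref{streaming_solverOLD} to $x_{t-1}$ using a fixed small constant $c_2 = 1/88$ (so that the contraction factor $22 c_2$ on the input error is at most $1/4$) together with a round-dependent $c_3^{(t)}$. Letting $E_t \defeq \E \norm{x_t - \opt{x}}_\mb^2$, the tower property applied to the guarantee of Corollary~\ref{streaming_solverOLD} yields
\begin{align*}
E_t \le 22 c_2 \, E_{t-1} + 10\, c_3^{(t)} \lambda_1(\mb^{-1}).
\end{align*}
Choosing $c_3^{(t)} \le E_{t-1} / \bigl(40 \lambda_1(\mb^{-1})\bigr)$ forces $E_t \le E_{t-1}/2$, so with $E_0 \le \lambda_1(\mb^{-1})$ we get $E_t \le 2^{-t} \lambda_1(\mb^{-1})$. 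Taking $T = \lceil \log_2(1/c) \rceil$ then gives $E_T \le c \lambda_1(\mb^{-1}) \le 10 c \lambda_1(\mb^{-1})$, which is the claimed accuracy.

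For the sample complexity, Corollary~\ref{streaming_solverOLD} says round $t$ uses $O\bigl(\nvar(\dist)/\gap^2 \cdot (1/c_2^2 + 1/c_3^{(t)})\bigr) = O\bigl(\nvar(\dist) \lambda_1(\mb^{-1})/(\gap^2 E_{t-1})\bigr)$ samples. Since the $E_{t-1}$ decay geometrically as $2^{-(t-1)} \lambda_1(\mb^{-1})$, the per-round sample counts grow geometrically as $2^{t-1}$, so the total is a geometric series dominated by the last term:
\begin{align*}
\sum_{t=1}^T O\!\left(\frac{\nvar(\dist)}{\gap^2} \cdot 2^{t-1}\right) = O\!\left(\frac{\nvar(\dist) \cdot 2^T}{\gap^2}\right) = O\!\left(\frac{\nvar(\dist)}{\gap^2 c}\right),
\end{align*}
matching the stated bound.

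The key point, and the only real subtlety, is that a uniform choice of $c_3$ across all rounds would produce an unwanted $\log(1/c)$ factor; it is precisely the geometric sharpening of $c_3^{(t)}$ in sync with the geometric decay of $E_{t-1}$ that ensures the final-round sample count absorbs all previous rounds. Everything else — the contraction in expectation, the $\mb$-norm initial-error bound from $x_0 = 0$, and composing expectations by conditioning — is immediate from Corollary~\ref{streaming_solverOLD} and the definition of $\opt{x}$.
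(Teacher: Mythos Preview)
Your proposal is correct and follows essentially the same approach as the paper: initialize at $x_0 = 0$ to get the free bound $\norm{x_0 - \opt{x}}_\mb^2 \le \lambda_1(\mb^{-1})$, keep $c_2$ a fixed small constant so that $22c_2$ acts as a contraction factor, and geometrically sharpen $c_3$ across $O(\log(1/c))$ rounds so that the sample cost is a geometric series dominated by its last term. The paper uses $c_2 = 1/44$ and starts $c_3$ at $1/20$, halving it each round; your choice $c_2 = 1/88$ with $c_3^{(t)} \asymp 2^{-(t-1)}$ is the same scheme with slightly different constants.
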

\begin{proof}
Let $x_0 = 0$. Then $\norm{x_0-\opt{x}}_\mb^2 = \norm{\bv{B}^{-1} b}_\mb^2 \le \lambda_1(\bv{B}^{-1})$ since $b$ is a unit vector. If we apply Corollary \ref{streaming_solverOLD} with $c_2 = \frac{1}{44}$ and $c_3 = \frac{1}{20}$, then we will obtain $x_1$ with $\E \norm{x_1-\opt{x}}_\mb^2 \le \frac{1}{2}\lambda_1(\mb^{-1})$. If we then double $c_3$ and apply the solver again we obtain $x_2$ with  $\E \norm{x_1-\opt{x}}_\mb^2 \le \frac{1}{4}\lambda_1(\mb^{-1})$. Iterating in this way, after $\log(1/c)$ iterations we will have the desired guarantee: $\E \norm{x - \opt{x}}_{\mb}^2 \leq 10 c\lambda_1(\bv{B}^{-1}).$ Our total sample cost in each iteration is, by Corollary \ref{streaming_solverOLD}, $O \left ( \frac{\nvar(\dist)}{\gap^2} \left [\frac{1}{44^2} + \frac{1}{c_3} \right ] \right ).$ Since we double $c_3$ each time, the cost corresponding to the $\frac{1}{c_3}$ terms is dominated by the last iteration when we have $c_3 = O(c)$. So our overall sample cost is just:
\begin{align*}
O \left ( \frac{\nvar(\dist)}{\gap^2}\left [\frac{1}{c} + \log(1/c) \right ] \right ) = O \left ( \frac{\nvar(\dist)}{\gap^2 \cdot c}\right ).
\end{align*}
\end{proof}

\subsection{Online Shifted-and-Inverted Power Method}
\label{sec:online:result}

We now apply the results in Section~\ref{sec:online:rayleigh_quotient} and Section~\ref{sec:online:system-solver} to the shifted-and-inverted power method framework of Section~\ref{framework} to give our main result in the online setting, an algorithm that quickly refines a coarse approximation to $v_1$ into a finer approximation.

\begin{theorem}[Online Shifted-and-Inverted Power Method -- Warm Start]\label{warmstart_online_theorem}
	Let $\bv{B} = \lambda \bv{I} - \bv{A}^\top \bv{A}$ for $\left ( 1+\frac{\gap}{150}\right) \lambda_1 \le \lambda \le \left(1+ \frac{\gap}{100} \right ) \lambda_1$ and let $x_0$ be some vector with  $G(x_0)\leq \frac{1}{\sqrt{10}}$. Running the shifted-and-inverted power method on $\bv{B}$ initialized with $x_0$, using the streaming SVRG solver of Corollary \ref{streaming_solver} to approximately apply $\bv{B}^{-1}$ at each step, returns $x$ such that   $x^\top \bv{\Sigma} x \ge (1-\epsilon) \lambda_1$ with constant probability for any target $\epsilon < \gap$. The algorithm uses 
	$
	O (\frac{\nvar(\dist)}{\gap \cdot \epsilon})
	$ samples and amortized $O(d)$ time per sample.
\end{theorem}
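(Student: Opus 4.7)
The strategy is to plug three ingredients into the robust framework of Theorem \ref{thm:powermethod-perturb}: the streaming system solver of Corollary \ref{streaming_solver} to approximately apply $\mb^{-1}$, the Rayleigh quotient estimator of Theorem \ref{online_rayleigh_estimation} with target accuracy $(\lambda-\lambda_1)/30$, and Lemma \ref{lem:rayquot-potential} to translate a final bound on $G(x)$ into the desired Rayleigh quotient guarantee. Concretely, to produce $x$ with $x^\top\mSigma x \ge (1-\epsilon)\lambda_1$ it suffices by Lemma \ref{lem:rayquot-potential} to drive $G(x)$ below $O(\sqrt{\epsilon/\gap})$. Each iteration of Theorem \ref{thm:powermethod-perturb} contracts $G$ by $3/25$ in expectation, introduces an additive error $c_1/500$ depending on the solver accuracy, and (crucially) cannot push $G$ above $1/\sqrt{10}$, so progress can never be undone catastrophically. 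Each Rayleigh-quotient estimate costs $O(\nvar(\dist)\log(1/p)/\gap^2)$ samples by Theorem \ref{online_rayleigh_estimation} with relative accuracy $\Theta(\gap)$, and each solver call with parameter $c_1$ costs $O(\nvar(\dist)/(\gap^2 c_1^2))$ samples by Corollary \ref{streaming_solver} (applied with $c = \Theta(c_1^2)$, where we pass from the expected squared-error guarantee of the corollary to the expected error guarantee required by Theorem \ref{thm:powermethod-perturb} via Jensen's inequality).

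\textbf{Geometric schedule.} The key idea for avoiding a spurious $\log(\gap/\epsilon)$ factor in the sample count is to let the solver accuracy $c_1^{(t)}$ shrink geometrically with the iteration count. I would run $T = O(\log(\gap/\epsilon))$ iterations with $c_1^{(t)} = \Theta(2^{-t})$, which by recursion gives
\begin{align*}
\E G(x_T) \le \left(\tfrac{3}{25}\right)^T G(x_0) + \sum_{t=1}^T \left(\tfrac{3}{25}\right)^{T-t}\frac{c_1^{(t)}}{500} = O\!\left(2^{-T}\right) = O\!\left(\sqrt{\epsilon/\gap}\right),
\end{align*}
where the sum is a geometric series dominated by its last term. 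The sample cost of iteration $t$ is $O(\nvar(\dist)\cdot 4^t/\gap^2)$ and the total cost is therefore also geometric, dominated by the final iteration which contributes $O(\nvar(\dist)\cdot 4^T/\gap^2) = O(\nvar(\dist)/(\gap\epsilon))$. The Rayleigh-estimator contribution of $O(\nvar(\dist)\log(1/p)/\gap^2)$ per iteration, summed over $T$ iterations with $p = O(1/T)$, is $O(\nvar(\dist)\,\mathrm{polylog}(\gap/\epsilon)/\gap^2)$, which is absorbed into $O(\nvar(\dist)/(\gap\epsilon))$ whenever $\epsilon < \gap$ (since then $1/(\gap\epsilon)$ dominates $\mathrm{polylog}(\gap/\epsilon)/\gap^2$).

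\textbf{Probability and time per sample.} To move from the expectation bound to the required constant-probability guarantee I would apply Markov's inequality to $G(x_T)$, which yields $G(x_T) = O(\sqrt{\epsilon/\gap})$ with probability at least $1/2$, and union-bound this against the $O(T)$ Rayleigh-estimator failure events (each of probability $O(1/T)$). Lemma \ref{lem:rayquot-potential} then converts $G(x_T) = O(\sqrt{\epsilon/\gap})$ into $x_T^\top\mSigma x_T \ge (1 - O(\epsilon))\lambda_1$, and rescaling $\epsilon$ by a constant absorbs the hidden factor. For the amortized per-sample running time, each streaming SVRG inner update and each Rayleigh-quotient sample contributes only $O(d)$ work (a few inner products involving the single freshly drawn vector), so the total runtime is $O(d\cdot\nvar(\dist)/(\gap\epsilon))$, i.e.\ $O(d)$ per sample. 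The main obstacle I anticipate is balancing the constants in the geometric schedule so that the accumulated additive error never dominates the contracted initial term across all $t$ simultaneously; it is precisely the safety clause $G(\tilde x)\le 1/\sqrt{10}$ in Theorem \ref{thm:powermethod-perturb} that makes this balancing robust to any single bad iteration, and without it a single excessively loose solver call could unboundedly damage the potential.
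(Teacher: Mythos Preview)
Your proposal is correct and follows essentially the same approach as the paper: a geometric schedule of solver accuracies $c_1^{(t)}$ plugged into Theorem~\ref{thm:powermethod-perturb}, with Corollary~\ref{streaming_solver} (via Jensen, $c=\Theta(c_1^2)$) supplying the solver, Theorem~\ref{online_rayleigh_estimation} supplying Rayleigh estimates at accuracy $\Theta(\gap)$, and Lemma~\ref{lem:rayquot-potential} plus Markov at the end. The only cosmetic difference is that the paper uses base $1/5$ rather than $1/2$ for the schedule and phrases the recursion as a one-line induction; your explicit unrolled sum and the observation that the solver cost is a geometric series dominated by its last term are exactly the same computation.
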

We note that by instantiating Theorem \ref{warmstart_online_theorem}, with $\epsilon' = \epsilon\cdot \gap$, and applying Lemma \ref{lem:ray_to_evec} we can find $x$ such that $| v_1^\top x | \ge 1-\epsilon$ with constant probability in time $O \left (\frac{\nvar(\dist)}{\gap^2 \cdot \epsilon } \right ).
$
\begin{proof}
By Lemma \ref{lem:rayquot-potential} it suffices to have $G^2(x) = O(\frac{\epsilon}{\gap})$ or equivalently $G(x) = O(\sqrt{\epsilon/\gap})$. In order to succed with constant probability it suffices to have $\expec{ G(x)} = O( \sqrt{\epsilon/\gap})$ with constant probability. Since we start with $G(x_0)\leq \frac{1}{\sqrt{10}}$, we can achieve this
using $\log(\gap/\epsilon)$ iterations of the approximate shifted-and-inverted power method of Theorem \ref{thm:powermethod-perturb}. In each iteration $i$ we choose the error parameter for Theorem \ref{thm:powermethod-perturb} to be $c_1(i) = \frac{1}{\sqrt{10}}\cdot \left( \frac{1}{5}\right)^i$. Consequently, 
\begin{align*}
\expec{G(x_i)} \le \frac{3}{25} G(x_{i-1}) +  \frac{4}{1000}\frac{1}{\sqrt{10}}\cdot \left( \frac{1}{5}\right)^i
\end{align*}
and by induction $\expec{G(x_i)} \le \frac{1}{5^i} \frac{1}{\sqrt{10}}$. We halt when $(\frac{1}{5})^i = O(\sqrt{\epsilon/\gap})$ and hence $c_1(i) = O(\sqrt{\epsilon/\gap})$.

In order to apply Theorem~\ref{thm:powermethod-perturb} we need a subroutine $\rayquoth{x}$ that lets us approximate $\rayquot(x)$ to within an additive error $\frac{1}{30} (\lambda-\lambda_1) = O(\gap \cdot\lambda_1)$. Theorem~\ref{online_rayleigh_estimation} gives us such a routine, requiring $O \left (\frac{\nvar(\dist)\log \log(\gap/\epsilon)}{\gap^2} \right) = O(\frac{\nvar(\dist)}{\gap \cdot \epsilon})$ samples to succeed with probability $1- O\left(\frac{1}{\log(\gap/\epsilon)}\right)$ (since $\epsilon < \gap$). Union bounding, the estimation succeeds in all rounds with constant probability.

By Corollary \ref{streaming_solver} with $c = \Theta(c_1(i)^2)$ the cost for solving each linear system solve is
$
O \left ( \frac{\nvar(\dist)}{\gap^2c_1(i)^2} \right )
$. Since $c_1(i)$ multiplies by a constant factor with each iteration the cost over all $O(\log(\gap/d\epsilon)$ iterations is just a truncated geometric series and is proportional to cost in the last iteration, when $c = \Theta\left(\frac{ \epsilon}{\gap}\right)$. So the total cost for solving the linear systems is 
$
O \left ( \frac{\nvar(\dist)}{\gap \cdot \epsilon} \right )
$. Adding this to the number of samples for the Rayleigh quotient estimation yields the result.
\end{proof}

\section{Parameter Estimation for Offline Eigenvector Computation}\label{parameter_free}

% \dancomment{this section is missing some, even if short, discussion on the overall running time of computing the estimate for $\lambda_1$. Its important since it goes into our main theorems. Can we easily connect the eigEstimate procedure with the robust PM analysis and SVRG to derive the runtime bound? Alternatively we can give a crude estimation for the running time and state the main theorems assuming a warm-start for $\lambda_1$ (which is less preferred in my opinion)}

In Section \ref{sec:offline}, in order to invoke Theorems \ref{thm:powermethod-perturb} and \ref{thm:init-offline} we assumed knowledge of some $\lambda$ with $(1 + c_1 \cdot \gap)\lambda_1 \le \lambda \le (1 + c_2 \cdot \gap)\lambda_1$ for some small constant $c_1$ and $c_2$. Here we show how to estimate this parameter using Algorithm \ref{algo:eigestimate}, incurring a modest additional runtime cost.

In this section, for simplicity we initially assume that we have oracle access to compute $\mb_{\lambda}^{-1}x$ for any given $x$, and any $\lambda > \lambda_1$. We will then show how to achieve the same results when we can only compute $\mb_{\lambda}^{-1}x$ approximately. 
We use a result of~\cite{Musco2015} that gives gap free bounds for computing eigenvalues using the power method. The following is a specialization of Theorem 1 from~\cite{Musco2015}:
\begin{theorem}\label{thm:musco}
	For any $\epsilon > 0$, any matrix $\mM \in \R^{d\times d}$ with eigenvalues $\lambda_1,...,\lambda_d$, and $k \leq d$, let $\bv{W} \in \mathbb{R}^{d \times k}$ be a matrix with entries drawn independently from $\mathcal{N}(0,1)$. Let $eigEstimate (\bv{Y})$ be a function returning for each $i$, $\tilde \lambda_i = \tilde v_i^\top \bv{M} \tilde v_i$ where $\tilde v_i$ is the $i^{th}$ largest left singular vector of  $\bv{Y}$. Then setting $[\tilde \lambda_1,...,\tilde \lambda_k ] = eigEstimate\left (\bv{M}^t \bv{W} \right )$, for some fixed constant $c$ and $t = c\alpha \log d$ for any $\alpha > 1$, with probability $1-\frac{1}{d^{10}}$, we have for all $i$:
	\begin{align*}
	|\tilde \lambda_i - \lambda_i|  \le \frac{1}{\alpha}\lambda_{k+1}
	\end{align*}
\end{theorem}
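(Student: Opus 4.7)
The plan is to view $\bv{M}^t \bv{W}$ as the output of $t$ steps of block power iteration on $\bv{M}$ starting from the random Gaussian block $\bv{W}$, and then specialize the gap-free subspace iteration analysis of \cite{Musco2015}. Since $\bv{M}$ must be symmetric (as we speak of its eigenvalues and identify Rayleigh quotients of left singular vectors with eigenvalue estimates), the left singular vectors $\tilde v_1,\ldots,\tilde v_k$ form an orthonormal basis for an approximation to the top-$k$ invariant subspace of $\bv{M}$.

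First I would establish the upper bound $\tilde\lambda_i \leq \lambda_i + \lambda_{k+1}/\alpha$ using the Courant--Fischer min-max characterization: since $\tilde v_1,\ldots,\tilde v_i$ are orthonormal, $\tilde\lambda_i \leq \max_{u \in \text{span}(\tilde v_1,\ldots,\tilde v_i),\, \|u\|_2 = 1} u^\top \bv{M} u$, and I would bound this by controlling how much of $\text{span}(\tilde v_1,\ldots,\tilde v_i)$ leaks into the tail eigenspace $\text{span}(v_{k+1},\ldots,v_d)$. The lower bound $\tilde\lambda_i \geq \lambda_i - \lambda_{k+1}/\alpha$ is the more delicate direction. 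I would split eigenvalues into a ``large'' set $\{j : \lambda_j > (1 + 1/\alpha)\lambda_{k+1}\}$ and a ``small'' set. Standard subspace iteration analysis shows that after $t = \Theta(\alpha \log d)$ iterations the ratio $(\lambda_j / \lambda_{k+1})^{2t}$ dominates the $1/\poly(d)$ initial angle between $\bv{W}$ and the top eigenspace, using lower bounds on $\sigma_{\min}(\bv{V}^\top \bv{W})$ for the top eigenvector matrix $\bv{V}$ together with $\|\bv{W}\|_2 \le \poly(d)$; hence $\text{span}(\tilde v_1,\ldots,\tilde v_j)$ captures nearly all of $\text{span}(v_1,\ldots,v_j)$ for every $j$ in the large set, which forces $\tilde\lambda_j$ to be close to $\lambda_j$.

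The main obstacle is the gap-free regime where some $\lambda_i$ lies very close to $\lambda_{k+1}$: individual eigenvectors need not converge, yet the Rayleigh quotients must still be accurate to additive error $\lambda_{k+1}/\alpha$. The core idea of \cite{Musco2015}, which I would adapt, is to decouple the two regimes: for ``large'' eigenvalues rely on the exponential convergence of the corresponding invariant subspace, and for ``small'' indices $i$ observe that $\lambda_i$ itself already lies within $\lambda_{k+1}/\alpha$ of $\lambda_{k+1}$, so any unit vector $u$ whose component in the ``small'' tail is well-controlled automatically satisfies $|u^\top \bv{M} u - \lambda_i| = O(\lambda_{k+1}/\alpha)$ by the min-max principle applied to the orthogonal complement of the well-converged large subspace.

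Finally, I would union bound the $1 - 1/d^{10}$ failure probability over the Gaussian concentration events needed: a polynomial lower bound $\sigma_{\min}(\bv{V}^\top \bv{W}) \geq 1/\poly(d)$ (standard for $d \times k$ Gaussian matrices) and a polynomial upper bound on $\|\bv{W}\|_2$. Combined with choosing $t = c\alpha \log d$ with a sufficiently large constant $c$ to make $(\lambda_j/\lambda_{k+1})^{2t}$ absorb these polynomial factors, this completes the proof.
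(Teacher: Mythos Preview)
The paper does not prove this theorem. It is stated explicitly as ``a specialization of Theorem~1 from~\cite{Musco2015}'' and is used as a black box throughout Section~\ref{parameter_free}. So there is no proof in the paper to compare your proposal against.

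That said, your outline is a fair description of the strategy in~\cite{Musco2015}: split indices into a ``large'' set where $\lambda_j > (1+1/\alpha)\lambda_{k+1}$ and exploit exponential subspace convergence after $t=\Theta(\alpha\log d)$ iterations, and a ``small'' set where $\lambda_i$ is already within $\lambda_{k+1}/\alpha$ of $\lambda_{k+1}$ so no convergence is needed; then invoke Courant--Fischer and Gaussian concentration for the random start. One point to be careful about, which you gloss over, is that the theorem as written does not assume $\bv{M}$ is PSD, yet your argument relies on the ratio $(\lambda_j/\lambda_{k+1})^{2t}$ being large, which breaks if eigenvalues can be negative or if $\lambda_{k+1}$ is tiny compared to the tail in absolute value. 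In the paper the result is only ever applied to $\bv{A}^\top\bv{A}$ and to $(\lambari{i-1}\bv{I}-\bv{A}^\top\bv{A})^{-1}$, both of which are PSD, so this is not an issue for its use here; but if you were to write out a self-contained proof you would need to either add a PSD hypothesis or handle signs more carefully.
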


\begin{algorithm}[t]
	\caption{Estimating the eigenvalue and the eigengap}
	\begin{algorithmic}[1]
		\renewcommand{\algorithmicrequire}{\textbf{Input: }}
		\renewcommand{\algorithmicensure}{\textbf{Output: }}
		\REQUIRE $\bv{A}\in\mathbb{R}^{n\times d},\;\alpha$
		\STATE $w = \left[w_{1},w_{2}\right]\leftarrow\mathcal{N}\left(0,1\right)^{d\times 2}$
		\STATE $t\leftarrow O\left(\alpha\log d\right)$
		\STATE $\left[\lamtilij 01,\lamtilij 02\right]\leftarrow eigEstimate\left(\left(\bv{A}^{T}\bv{A}\right)^{t}w\right)$
		\STATE $\lambari 0\leftarrow(1+\frac{1}{2})\lamtilij 01$
		\STATE $i\leftarrow0$
		\WHILE{$\lambari i-\lamtilij i1<\frac{1}{10}\left(\lambari i-\lamtilij i2\right)$}
		\STATE $i\leftarrow i+1$
		\STATE $w = \left[w_{1},w_{2}\right]\leftarrow\mathcal{N}\left(0,1\right)^{d\times 2}$
		\STATE $\left[\lamhatij i1,\lamhatij i2\right]\leftarrow eigEstimate\left(\left(\lambari{i-1}\bv{I}-\bv{A}^{T}\bv{A}\right)^{-t}w\right)$
		\STATE $\left[\lamtilij i1,\lamtilij i2\right]\leftarrow\left[\lambari{i-1}-\frac{1}{\lamhatij i1},\lambari{i-1}-\frac{1}{\lamhatij i2}\right]$
		\STATE $\lambari i\leftarrow\frac{1}{2}\left(\lamtilij i1+\lambari{i-1}\right)$
		\ENDWHILE
		\ENSURE $\lambda$
	\end{algorithmic}
	\label{algo:eigestimate}
\end{algorithm}

Throughout the proof, we assume $\alpha$ is picked to be some large constant - e.g. $\alpha>100$. Theorem \ref{thm:musco} implies:
\begin{lemma}
	\label{lem:topeig1} Conditioning on the event that Theorem \ref{thm:musco} holds for all iterates $i$, then the iterates
	of Algorithm \ref{algo:eigestimate} satisfy:
	\begin{align*}
	0\leq\lamj 1-\lamtilij 01\leq\frac{1}{\alpha}\lamj 1\;\;\mbox{and}\;\; & \frac{1}{2}\left(1-\frac{3}{\alpha}\right)\lamj 1\leq\lambari 0-\lamj 1\leq\frac{1}{2}\lamj 1,\;\;\mbox{and,}
	\end{align*}
	\begin{align*}
	0\leq\lamj 1-\lamtilij i1\leq\frac{1}{\alpha-1}\left(\lambari{i-1}-\lamj 1\right)\;\;\mbox{and}\;\; & \frac{1}{2}\left(1-\frac{1}{\alpha-1}\right)\left(\lambari{i-1}-\lamj 1\right)\leq\lambari i-\lamj 1\leq\frac{1}{2}\left(\lambari{i-1}-\lamj 1\right).
	\end{align*}
\end{lemma}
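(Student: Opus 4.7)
The overall strategy is a straightforward induction on $i$, with the invariant $\bar\lambda^{(i)} > \lambda_1$ (so that every shifted matrix $\bar\lambda^{(i)}\bv{I} - \bv{A}^\top\bv{A}$ is positive definite and invertible). In each iteration, I would invoke Theorem~\ref{thm:musco} with $k=2$ together with two simple observations: (a)~the routine $eigEstimate$ returns Rayleigh quotients of the underlying matrix, so the estimate of the \emph{largest} eigenvalue is always a one-sided underestimate, and (b)~since $\lambda_{k+1}=\lambda_3\leq \lambda_1$, the error term $\tfrac{1}{\alpha}\lambda_{k+1}(\cdot)$ in Theorem~\ref{thm:musco} can be upper-bounded by $\tfrac{1}{\alpha}\lambda_1(\cdot)$ of the appropriate matrix. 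All the inequalities in the lemma then fall out of these facts after short algebraic manipulations.

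For the base case $i=0$, I would apply Theorem~\ref{thm:musco} to $\bv{M}=\bv{A}^\top\bv{A}$ with $k=2$ and $t=O(\alpha\log d)$. The Rayleigh-quotient property of $eigEstimate$ gives $\lamtilij{0}{1}\leq \lambda_1$, and the theorem plus $\lambda_3\leq\lambda_1$ gives $\lambda_1-\lamtilij{0}{1}\leq \tfrac{1}{\alpha}\lambda_1$. Plugging into $\lambari{0}=\tfrac{3}{2}\lamtilij{0}{1}$ immediately yields both $\lambari{0}-\lambda_1\leq \tfrac{1}{2}\lambda_1$ (using $\lamtilij{0}{1}\leq\lambda_1$) and $\lambari{0}-\lambda_1\geq \tfrac{1}{2}(1-\tfrac{3}{\alpha})\lambda_1$ (using $\lamtilij{0}{1}\geq(1-\tfrac{1}{\alpha})\lambda_1$).

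For the inductive step, assume $\bar\lambda^{(i-1)}>\lambda_1$. The matrix $\bv{M}^{(i-1)}:=(\lambari{i-1}\bv{I}-\bv{A}^\top\bv{A})^{-1}$ has eigenvalues $\tfrac{1}{\lambari{i-1}-\lambda_j}$ in decreasing order. Theorem~\ref{thm:musco} applied with $k=2$ gives
\[
\Bigl|\lamhatij{i}{1}-\tfrac{1}{\lambari{i-1}-\lambda_1}\Bigr|\leq \tfrac{1}{\alpha}\cdot\tfrac{1}{\lambari{i-1}-\lambda_3}\leq \tfrac{1}{\alpha}\cdot\tfrac{1}{\lambari{i-1}-\lambda_1},
\]
and the Rayleigh-quotient property gives $\lamhatij{i}{1}\leq \tfrac{1}{\lambari{i-1}-\lambda_1}$. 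Then I would write
\[
\lambda_1-\lamtilij{i}{1}=\tfrac{1}{\lamhatij{i}{1}}-(\lambari{i-1}-\lambda_1)=\frac{\tfrac{1}{\lambari{i-1}-\lambda_1}-\lamhatij{i}{1}}{\lamhatij{i}{1}\cdot\tfrac{1}{\lambari{i-1}-\lambda_1}},
\]
which is nonnegative by the one-sided Rayleigh bound, and the upper bound $\tfrac{1}{\alpha-1}(\lambari{i-1}-\lambda_1)$ follows by bounding the numerator by $\tfrac{1}{\alpha(\lambari{i-1}-\lambda_1)}$ and the denominator using $\lamhatij{i}{1}\geq (1-\tfrac{1}{\alpha})\tfrac{1}{\lambari{i-1}-\lambda_1}$.

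Finally, I would substitute into $\lambari{i}-\lambda_1=\tfrac{1}{2}(\lamtilij{i}{1}-\lambda_1)+\tfrac{1}{2}(\lambari{i-1}-\lambda_1)$: the upper bound $\tfrac{1}{2}(\lambari{i-1}-\lambda_1)$ follows from $\lamtilij{i}{1}\leq \lambda_1$, and the lower bound $\tfrac{1}{2}(1-\tfrac{1}{\alpha-1})(\lambari{i-1}-\lambda_1)$ follows from the upper bound on $\lambda_1-\lamtilij{i}{1}$ just derived. This in particular reconfirms the invariant $\lambari{i}>\lambda_1$ for $\alpha>2$, closing the induction. The main thing to be careful about is not the algebra per se but keeping the one-sided direction of the Rayleigh-quotient underestimate consistent after the $x\mapsto \lambari{i-1}-1/x$ transformation, which converts an underestimate of $\lambda_1(\bv{M}^{(i-1)})$ into an underestimate of $\lambda_1$; everything else is routine.
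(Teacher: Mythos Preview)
Your proposal is correct and follows essentially the same approach as the paper's proof: apply Theorem~\ref{thm:musco} first to $\bv{A}^\top\bv{A}$ and then to $(\lambari{i-1}\bv{I}-\bv{A}^\top\bv{A})^{-1}$, bound $\lambda_{k+1}$ by the top eigenvalue of the relevant matrix, and use the Rayleigh-quotient property for the one-sided inequality. The paper's proof is considerably terser---it simply asserts $(1-\tfrac{1}{\alpha})(\lambari{i-1}-\lambda_1)^{-1}\le \lamhatij{i}{1}\le (\lambari{i-1}-\lambda_1)^{-1}$ and states the conclusions without the intermediate algebra---whereas you have written out the fraction manipulation and made the induction invariant $\lambari{i}>\lambda_1$ explicit, which the paper leaves implicit but is needed for the shifted matrix to be invertible.
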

\begin{proof}
	% The basic tool is a result of \cite{Musco2015}, which tells us that
	% for $t=O\left(\alpha\log d\right)$, and for any symmetric matrix
	% $\bv{Z}\in\R^{d\times d}$, we have $\frac{\norm{\bv{Z}^{t}w}}{\norm{\bv{Z}^{t-1}w}}\geq\left(1-\frac{1}{\alpha}\right)\norm Z$
	% whp for $w$ chosen uniformly random on the unit sphere. 
	The proof can be decomposed into two parts:
	
	\textbf{Part I (Lines 3-4):} Theorem \ref{thm:musco} tells us that $\lamtilij 01\geq\left(1-\frac{1}{\alpha}\right)\lamj 1$.
	This means that we have
	\begin{align*}
	0\leq\lamj 1-\lamtilij 01\leq\frac{1}{\alpha}\lamj 1\;\;\mbox{and}\;\; & \frac{1}{2}\left(1-\frac{3}{\alpha}\right)\lamj 1\leq\lambari 0-\lamj 1\leq\frac{1}{2}\lamj 1.
	\end{align*}

	\textbf{Part II (Lines 5-6):} Consider now iteration $i$. We now
	apply Theorem \ref{thm:musco} to the matrix $\left(\lambari{i-1}\bv{I}-\bv{A}^{T}\bv{A}\right)^{-1}$.
	The top eigenvalue of this matrix is $\left(\lambari{i-1}-\lamj 1\right)^{-1}$.
	This means that we have $\left(1-\frac{1}{\alpha}\right)\left(\lambari{i-1}-\lamj 1\right)^{-1}\leq\lamhatij i1\leq\left(\lambari{i-1}-\lamj 1\right)^{-1}$,
	and hence we have, 
	\begin{align*}
	0\leq\lamj 1-\lamtilij i1\leq\frac{1}{\alpha-1}\left(\lambari{i-1}-\lamj 1\right)\;\;\mbox{and}\;\; & \frac{1}{2}\left(1-\frac{1}{\alpha-1}\right)\left(\lambari{i-1}-\lamj 1\right)\leq\lambari i-\lamj 1\leq\frac{1}{2}\left(\lambari{i-1}-\lamj 1\right).
	\end{align*}

	This proves the lemma.\end{proof}
\begin{lemma}
	\label{lem:secondeig}Recall we denote $\lamj 2\defeq\lamj 2\left(\bv{A}^{T}\bv{A}\right)$
	and $\gap\defeq\frac{\lamj 1-\lamj 2}{\lamj 1}$. Then conditioning on the event that Theorem \ref{thm:musco} holds for all iterates $i$, the iterates
	of Algorithm \ref{algo:eigestimate} satisfy $\abs{\lamj 2-\lamtilij i2}\leq\frac{1}{\alpha-1}\left(\lambari{i-1}-\lamj 2\right)$,
	and $\lambari i-\lamtilij i2\geq\frac{\gap \lamj 1}{4}$.\end{lemma}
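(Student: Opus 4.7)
The plan is to invoke Theorem~\ref{thm:musco} with $k=2$ on the shifted-and-inverted matrix $\mm \defeq (\bar\lambda^{(i-1)}\bv{I}-\bv{A}^{\top}\bv{A})^{-1}$. By Lemma~\ref{lem:topeig1} we have $\bar\lambda^{(i-1)} > \lambda_1$ on the good event, so $\mm$ is well-defined and positive definite, with eigenvalues $\mu_j \defeq (\bar\lambda^{(i-1)}-\lambda_j)^{-1}$ in decreasing order. Applying Theorem~\ref{thm:musco} with parameter $\alpha$ yields $|\lamhatij i j - \mu_j| \le \mu_3/\alpha \le \mu_2/\alpha \le \mu_1/\alpha$ for $j=1,2$, where the last two inequalities follow from $\lambda_3 \le \lambda_2 \le \lambda_1 < \bar\lambda^{(i-1)}$.

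\textbf{First claim.} Write $\lamhatij i 2 = \mu_2(1+\delta)$ with $|\delta| \le 1/\alpha$ (using the multiplicative form of the above bound). Since $\lamtilij i 2 = \bar\lambda^{(i-1)} - 1/\lamhatij i 2$ and $\lambda_2 = \bar\lambda^{(i-1)} - 1/\mu_2$, a direct computation gives
\[
\lamtilij i 2 - \lamj 2
\;=\; \frac{1}{\mu_2} - \frac{1}{\mu_2(1+\delta)}
\;=\; (\bar\lambda^{(i-1)}-\lamj 2)\cdot\frac{\delta}{1+\delta},
\]
whose magnitude is bounded by $\frac{1/\alpha}{1-1/\alpha}(\bar\lambda^{(i-1)}-\lamj 2) = \frac{1}{\alpha-1}(\bar\lambda^{(i-1)}-\lamj 2)$. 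This is exactly the first inequality of the lemma.

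\textbf{Second claim.} The cleanest approach is to cancel the $\bar\lambda^{(i-1)}$ terms directly. Substituting the definitions of $\bar\lambda^{(i)}$, $\lamtilij i 1$, and $\lamtilij i 2$ into $\bar\lambda^{(i)} - \lamtilij i 2$ yields the identity
\[
\bar\lambda^{(i)} - \lamtilij i 2 \;=\; \tfrac{1}{2}\bigl(\tilde\lambda_1^{(i)} + \bar\lambda^{(i-1)}\bigr) - \lamtilij i 2 \;=\; \frac{1}{\lamhatij i 2} - \frac{1}{2\,\lamhatij i 1}.
\]
Using $\lamhatij i 2 \le \mu_2(1+1/\alpha)$ and $\lamhatij i 1 \ge \mu_1(1-1/\alpha)$ (these use $\mu_3 \le \mu_2$ and $\mu_3 \le \mu_1$ respectively), the right-hand side is at least
\[
\frac{\bar\lambda^{(i-1)}-\lamj 2}{1+1/\alpha} - \frac{\bar\lambda^{(i-1)}-\lamj 1}{2(1-1/\alpha)}.
\]
Writing $\Delta \defeq \bar\lambda^{(i-1)}-\lamj 1 \ge 0$ and $\bar\lambda^{(i-1)}-\lamj 2 = \Delta + \gap\lamj 1$, algebraic simplification gives the lower bound $\gap\lamj 1\bigl(1 - \tfrac{1}{\alpha-1}\bigr) + \tfrac{(\alpha-4)\Delta}{2(\alpha-1)}$. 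Under the standing assumption $\alpha > 100$, both terms are nonnegative and the first is at least $\gap\lamj 1/4$, completing the proof.

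\textbf{Main obstacle.} The work is essentially bookkeeping: the only subtle point is recognizing that applying the triangle inequality $\bar\lambda^{(i)}-\lamtilij i 2 \ge (\bar\lambda^{(i)}-\lamj 2) - |\lamj 2 - \lamtilij i 2|$ together with $\bar\lambda^{(i)} \ge \lamj 1$ loses too much, because the error bound from the first claim scales with $\bar\lambda^{(i-1)}-\lamj 2$, which can be $\Theta(\lamj 1)$ in early iterations and hence dwarf $\gap \lamj 1$. The fix is to exploit the exact algebraic identity above, where the large $\bar\lambda^{(i-1)}$ factor cancels between $\bar\lambda^{(i)}$ and $\lamtilij i 2$ and only the inverse-eigenvalue estimates remain, whose relative errors are $O(1/\alpha)$ and therefore harmless once $\alpha$ is a sufficiently large constant.
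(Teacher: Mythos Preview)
Your proof is correct. For the first claim you and the paper proceed identically, converting the additive error from Theorem~\ref{thm:musco} (bounded by $\mu_3/\alpha \le \mu_2/\alpha$) into a multiplicative bound on $\lamhatij i 2$ and then inverting.

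For the second claim you take a genuinely different route. The paper \emph{does} use the triangle-inequality decomposition you warn against in your ``Main obstacle'' paragraph,
\[
\lambari i-\lamtilij i 2 \;\ge\; (\lambari i-\lamj 2) - \frac{1}{\alpha-1}(\lambari{i-1}-\lamj 2),
\]
but instead of bounding $\lambari i - \lamj 1 \ge 0$ (which, as you note, would lose too much), it invokes the lower bound $\lambari i-\lamj 1 \ge \tfrac12(1-\tfrac1{\alpha-1})(\lambari{i-1}-\lamj 1)$ from Lemma~\ref{lem:topeig1}. This makes the dangerous $\lambari{i-1}$ terms cancel up to an $O(1/\alpha)$ loss, yielding $\tfrac12(1-\tfrac{3}{\alpha-1})(\lambari{i-1}-\lamj 1) + (1-\tfrac{1}{\alpha-1})\gap\lamj 1 \ge \gap\lamj 1/4$. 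Your approach instead bypasses the triangle inequality entirely via the clean identity $\lambari i - \lamtilij i 2 = 1/\lamhatij i 2 - 1/(2\lamhatij i 1)$, which makes the cancellation of $\lambari{i-1}$ exact rather than approximate; you then need Lemma~\ref{lem:topeig1} only to ensure $\Delta = \lambari{i-1}-\lamj 1 \ge 0$. Both arguments land in the same place, but yours is a bit more self-contained and makes transparent that the only real input is the relative accuracy $O(1/\alpha)$ of the inverse-eigenvalue estimates. Your ``Main obstacle'' commentary is accurate about the naive approach failing, but slightly overstates the case: the triangle-inequality route is salvageable, just with the sharper lower bound on $\lambari i$ from Lemma~\ref{lem:topeig1}.
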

\begin{proof}
	Since $\left(\lambari{i-1}-\lamj 2\right)^{-1}$ is the second eigenvalue
	of the matrix $\left(\lambari{i-1}\bv{I}-\bv{A}^{T}\bv{A}\right)^{-1}$, Theorem \ref{thm:musco} 
	tells us that
	\[
	\left(1-\frac{1}{\alpha}\right)\left(\lambari{i-1}-\lamj 2\right)^{-1}\leq\lamhatij i2\leq\left(1+\frac{1}{\alpha}\right)\left(\lambari{i-1}-\lamj 2\right)^{-1}.
	\]

	This immediately yields the first claim. For the second claim, we
	notice that
	\begin{align*}
	\lambari i-\lamtilij i2 
	&=\lambari i-\lamj 2+\lamj 2-\lamtilij i2\\
	& \stackrel{(\zeta_{1})}{\geq}\lambari i-\lamj 2-\frac{1}{\alpha-1}\left(\lambari{i-1}-\lamj 2\right) \\
	& =\lambari i-\lamj 1-\frac{1}{\alpha-1}\left(\lambari{i-1}-\lamj 1\right)+\left(1-\frac{1}{\alpha-1}\right)\left(\lamj 1-\lamj 2\right) \\
	& \stackrel{\left(\zeta_{2}\right)}{\geq}\frac{1}{2}\left(1-\frac{3}{\alpha-1}\right)\left(\lambari{i-1}-\lamj 1\right)+\left(1-\frac{1}{\alpha-1}\right)\left(\lamj 1-\lamj 2\right)\geq\frac{\gap\lamj 1}{4},
	\end{align*}

	where $\left(\zeta_{1}\right)$ follows from the first claim of this lemma,
	and $\left(\zeta_{2}\right)$ follows from Lemma \ref{lem:topeig1}.
\end{proof}

We now state and prove the main result in this section:
\begin{theorem}\label{thm:parafree}
Suppose $\alpha>100$, and after $T$ iterations, Algorithm \ref{algo:eigestimate} exits. Then
with probability $1- \frac{\ceil{\log\frac{10}{\gap}}+1}{d^{10}}$, we have
$T \le \ceil{\log\frac{10}{\gap}}+1$, and:
\begin{equation*}
\left(1+\frac{\gap}{120}\right)\lambda_1 
\le \lambari T 
\le \left(1+\frac{\gap}{8}\right)\lambda_1
\end{equation*}
\end{theorem}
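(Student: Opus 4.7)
The plan is to track the shift error $\delta_i \defeq \lambari{i} - \lambda_1$ and use the geometric decay from Lemma~\ref{lem:topeig1} to simultaneously bound both the number of iterations $T$ and the final output $\lambari{T}$. First, I would draw all Gaussian vectors used by $eigEstimate$ in advance, one per iteration for $i = 0, 1, \ldots, \lceil \log(10/\gap)\rceil + 1$. Applying Theorem~\ref{thm:musco} with $\alpha > 100$ and union-bounding over these invocations, the conclusions of Lemmas~\ref{lem:topeig1} and~\ref{lem:secondeig} hold for every iteration with probability at least $1 - (\lceil \log(10/\gap)\rceil + 1)/d^{10}$. I would then condition on this event for the remainder of the argument.

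The next step is to translate the two quantities appearing in the loop test into estimates involving only $\delta_i$, $\delta_{i-1}$, $\gap\lambda_1$, and $1/\alpha$. Since $\lambda_1 - \lamtilij{i}{1} \in [0,\, \delta_{i-1}/(\alpha-1)]$ by Lemma~\ref{lem:topeig1} and $|\lambda_2 - \lamtilij{i}{2}| \le (\delta_{i-1} + \gap\lambda_1)/(\alpha-1)$ by Lemma~\ref{lem:secondeig}, direct substitution yields
\begin{align*}
\delta_i \;\le\; \lambari{i} - \lamtilij{i}{1} \;\le\; \delta_i + \tfrac{\delta_{i-1}}{\alpha-1}, \qquad \left|\lambari{i} - \lamtilij{i}{2} - (\delta_i + \gap\lambda_1)\right| \;\le\; \tfrac{\delta_{i-1} + \gap\lambda_1}{\alpha-1},
\end{align*}
together with the absolute lower bound $\lambari{i} - \lamtilij{i}{2} \ge \gap\lambda_1/4$ from Lemma~\ref{lem:secondeig}. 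Plugging these into the loop test and simplifying shows, for $\alpha > 100$, that the loop terminates at the first $i$ for which $\delta_i$ drops below roughly $\gap\lambda_1/9$, and that the loop continues as long as $\delta_i$ is comfortably above $\gap\lambda_1/10$.

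From this characterization and the geometric decay $\delta_i \le \lambda_1/2^{i+1}$ of Lemma~\ref{lem:topeig1}, the bound $T \le \lceil \log(10/\gap)\rceil + 1$ is immediate: after that many iterations $\delta_i$ must have dropped below the termination threshold. For the output-quality bounds, the exit condition at iteration $T$ combined with the estimates above gives $\delta_T \le \gap\lambda_1/8$, i.e.\ $\lambari{T} \le (1 + \gap/8)\lambda_1$. Conversely, the fact that the loop continued at iteration $T-1$, applied to the sharper form of the estimate for $\lambari{T-1} - \lamtilij{T-1}{2}$, implies a lower bound $\delta_{T-1} \gtrsim \gap\lambda_1/10$; then Lemma~\ref{lem:topeig1}'s one-step recursion $\delta_T \ge \tfrac{1}{2}(1-\tfrac{1}{\alpha-1})\delta_{T-1}$ produces $\delta_T \ge \gap\lambda_1/120$ for $\alpha > 100$, i.e.\ $\lambari{T} \ge (1 + \gap/120)\lambda_1$.

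The main obstacle I anticipate is purely bookkeeping. Several small constants interact: the $1/(\alpha-1)$ slack from Theorem~\ref{thm:musco}, the $1/10$ factor in the loop test, the factor of $1/2$ lost in the averaging update $\lambari{i} = \tfrac{1}{2}(\lamtilij{i}{1} + \lambari{i-1})$, and the target constants $1/8$ and $1/120$ in the theorem statement. The tightest link in the chain is the step from ``loop continued at $T-1$'' to the $\gap/120$ lower bound on the final output, since one loses essentially a factor of $2$ in moving from $\delta_{T-1}$ to $\delta_T$. Taking $\alpha > 100$ leaves enough slack for every intermediate inequality, but the constants need to be threaded carefully through the displayed bounds (in particular using the refined lower bound on $\lambari{i} - \lamtilij{i}{2}$ rather than the weaker $\gap\lambda_1/4$ whenever tight control is needed).
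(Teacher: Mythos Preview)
Your proposal is correct and follows essentially the same approach as the paper: condition on Theorem~\ref{thm:musco} holding at every iteration via a union bound, use the geometric decay of $\delta_i = \lambari{i}-\lambda_1$ from Lemma~\ref{lem:topeig1} to bound $T$, derive the upper bound on $\lambari{T}$ from the exit condition at step $T$, and derive the lower bound from the non-exit at step $T-1$ combined with the one-step recursion. The only cosmetic differences are your introduction of the $\delta_i$ notation and your use of the sharper estimate on $\lambari{i}-\lamtilij{i}{2}$ in the lower-bound step (the paper uses the cruder $\gap\lambda_1/4$ bound from Lemma~\ref{lem:secondeig} there, obtaining $\delta_{T-1}\ge\gap\lambda_1/50$ rather than your $\gtrsim\gap\lambda_1/10$, but both suffice for the stated $\gap/120$).
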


\begin{proof}
By union bound, we know with probability $1- \frac{\ceil{\log\frac{10}{\gap}}+1}{d^{10}}$, Theorem \ref{thm:musco} will hold for all iterates where $i\le\ceil{\log\frac{10}{\gap}}+1$.

Let $\overline{i}=\ceil{\log\frac{10}{\gap}}$, suppose the algorithm has not exited yet after $\overline{i}$
iterations, then since $\lambari i-\lamj 1$ decays geometrically, we have $\lambari{\overline{i}}-\lamj 1\leq\frac{\gap\lamj 1}{10}$.
Therefore, Lemmas \ref{lem:topeig1} and \ref{lem:secondeig} imply
that $\lambari{\overline{i}+1}-\lamtilij{\overline{i}+1}1\leq\left(\frac{1}{2}+\frac{1}{\alpha-1}\right)\left(\lambari{\overline{i}}-\lamj 1\right)\leq\frac{\gap\lamj 1}{15}$,
and 
\begin{align*}
\lambari{\overline{i}+1}-\lamtilij{\overline{i}+1}2
&\geq\lambari{\overline{i}+1}-\lamj 2-\abs{\lamj 2-\lamtilij{\overline{i}+1}2}\geq\lamj 1-\lamj 2-\frac{1}{\alpha-1}\left(\lambari{\overline{i}}-\lamj 2\right)\\
&=\gap\lamj 1-\frac{1}{\alpha-1}\left(\lambari{\overline{i}}-\lamj 1+\lamj 1-\lamj 2\right)\geq\frac{3}{4}\gap\lamj 1
\end{align*}
This means that the exit condition on Line $6$ must be triggered in $\overline{i}+1$ iteration, proving
the first part of the lemma.

For upper bound, by Lemmas \ref{lem:topeig1}, \ref{lem:secondeig} and exit condition we know:
\begin{align*}
\lambari T- \lambda_1 &\le \lambari T-\lamtilij T1
\le \frac{1}{10} (\lambari T-\lamtilij T2)
\le \frac{1}{10} \left(\lambari T - \lambda_2 + \abs{\lamj 2-\lamtilij T2}\right)\\
&\leq \frac{1}{10} \left(\lambari T - \lambda_2 + \frac{1}{\alpha-1}(\lambari{T-1}-\lamj 2)\right)\\
&= \frac{1}{10} \left( \frac{\alpha}{\alpha-1} \gap \lambda_1 + (\lambari T  - \lambda_1) +  \frac{1}{\alpha-1}\left(\lambari{T-1}-\lamj 1\right) \right)\\
&\le \frac{1}{10} \left( \frac{\alpha}{\alpha-1} \gap \lambda_1 + \frac{\alpha}{\alpha-2}\left(\lambari{T}-\lamj 1\right) \right)
\end{align*}
Since $\alpha > 100$, this directly implies $\lambari{T}-\lamj 1 \le \frac{\gap}{8}\lambda_1$.

For lower bound, since as long as the Algorithm \ref{algo:eigestimate} does not exists, by Lemmas \ref{lem:secondeig}, we have
$\lambari {T-1}-\lamtilij {T-1}1\geq\frac{1}{10}\left(\lambari {T-1}-\lamtilij {T-1}2\right)\geq\frac{\gap\lamj 1}{40}$, and thus:
\begin{align*}
\lambari {T-1}- \lambda_1 &= \lambari {T-1}-\lamtilij {T-1}1 - (\lambda_1 - \lamtilij {T-1}1)
\ge \frac{\gap\lamj 1}{40} - \frac{1}{\alpha-1}\left(\lambari{T-1}-\lamj 1\right) \\
& \ge \frac{\gap\lamj 1}{40} - \frac{2}{\alpha-2}\left(\lambari{T}-\lamj 1\right)
\ge \frac{\gap\lamj 1}{50}
\end{align*}
By Lemma \ref{lem:topeig1}, we know $\lambari {T}- \lambda_1\ge \frac{1}{2}(1-\frac{1}{\alpha-1}(\lambari {T-1}- \lambda_1)) >\frac{\gap}{120}\lamj 1$
\end{proof}

Note that, although we proved the upper bound and lower bound in Theorem \ref{thm:parafree} with specific constants coefficient $\frac{1}{8}$ and $\frac{1}{120}$, this analysis can easily be extended  to any smaller constants by modifying the constant in the exit condition, and choosing $\alpha$ larger.
Also in the failure probability 
$$1- \frac{\ceil{\log\frac{10}{\gap}}+1}{d^{10}},$$ the term $d^{10}$ can be replaced by any $\poly(d)$ by adjusting the constant in setting $t \leftarrow O(\alpha \log d)$ in Algorithm \ref{algo:eigestimate}. Assuming $\log \frac{1}{\gap} < \poly(d)$, thus gives that Theorem \ref{thm:parafree} returns a correct result with high probability.

Finally, we can also bound the runtime of algorithm \ref{algo:eigestimate}, when we use SVRG based approximate linear system solvers for $\bv{B}_\lambda$.
\begin{theorem}\label{thm:parafree_runtime}
With probability $1- O(\frac{1}{d^{10}}\log\frac{1}{\gap})$, Algorithm \ref{algo:eigestimate} runs in time $$O\left (\left [\nnz(\bv A) + \frac{d\nrank(\bv A)}{\gap^2} \right ] \cdot \log^3\left (\frac{d}{\gap} \right ) \right)$$.
\end{theorem}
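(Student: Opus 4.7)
The plan is to combine Theorem~\ref{thm:parafree} (bounding the number of outer iterations) with the per-iteration cost of the SVRG solver from Theorem~\ref{offline_solver}, using the geometric decay of $\bar\lambda^{(i-1)} - \lambda_1$ guaranteed by Lemma~\ref{lem:topeig1} to sum these costs into the claimed $\log^3(d/\gap)$ factor.

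Theorem~\ref{thm:parafree} tells us that, with probability at least $1 - O(\log(1/\gap)/d^{10})$, Algorithm~\ref{algo:eigestimate} halts after $T = O(\log(1/\gap))$ outer iterations, provided every call to eigEstimate returns an output meeting the conclusion of Theorem~\ref{thm:musco}. Outer iteration $i$ reduces to approximately computing $(\bar\lambda^{(i-1)}\mI - \bv{A}^\top \bv{A})^{-t} w$ for $t = O(\alpha \log d)$, then extracting two top singular vectors and their Rayleigh quotients against $(\bar\lambda^{(i-1)}\mI - \bv{A}^\top \bv{A})^{-1}$. I would execute each application of this inverse via the SVRG solver of Theorem~\ref{offline_solver}. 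Reading that theorem's proof, the solver achieves constant expected progress in $\|\cdot\|_\mb$-norm for any shift $\mu > \lambda_1$ in time $C(\mu) = O(\nnz(\bv{A}) + d \lambda_1 \|\bv{A}\|_F^2 / (\mu - \lambda_1)^2)$; the specific normalization $\mu \approx (1 + \Theta(\gap))\lambda_1$ stated there is only used to simplify to its clean form. A standard perturbation argument for simultaneous iteration shows that computing each inverse application to error $1/\poly(d,1/\gap)$ in $\|\cdot\|_\mb$ suffices to preserve the gap-free guarantee of Theorem~\ref{thm:musco} together with its $d^{-10}$ failure probability; achieving such accuracy requires $O(\log(d/\gap))$ restarts of the constant-factor SVRG solver by Markov's inequality and standard amplification. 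Thus outer iteration $i$ costs $O(C(\bar\lambda^{(i-1)}) \cdot \log d \cdot \log(d/\gap))$.

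Summing, by Lemma~\ref{lem:topeig1} we have $\bar\lambda^{(i-1)} - \lambda_1 = \Omega(\max\{\lambda_1 \cdot 2^{-i},\ \gap\lambda_1\})$, so $C(\bar\lambda^{(i-1)}) = O(\nnz(\bv{A}) + d\,\nrank(\bv{A}) \cdot \min\{4^i,\ 1/\gap^2\})$. The $\nnz(\bv{A})$ contributions sum to $O(T\cdot \nnz(\bv{A}) \cdot \log d \cdot \log(d/\gap)) = O(\nnz(\bv{A})\log^3(d/\gap))$, and the $d\,\nrank(\bv{A}) \cdot 4^i$ contributions form a geometric series dominated by the last iteration, giving $O((d\,\nrank(\bv{A})/\gap^2) \cdot \log d \cdot \log(d/\gap))$; adding these matches the claimed bound. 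A final union bound over the amplified SVRG failure events and the Musco--Musco concentration events across all $T$ outer iterations preserves the stated high-probability guarantee.

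The main obstacle is the block-iteration perturbation argument used above: arguing that, after $t$ noisy applications of $(\bar\lambda^{(i-1)}\mI - \bv{A}^\top \bv{A})^{-1}$ interleaved with orthonormalization, the computed singular vectors and Rayleigh quotients still satisfy Theorem~\ref{thm:musco}. If writing that argument cleanly becomes cumbersome, a clean workaround is to abandon the block form entirely: apply Theorem~\ref{thm:init-offline} to $(\bar\lambda^{(i-1)}\mI - \bv{A}^\top \bv{A})^{-1}$ to find $v_1$ with $G(v_1) \le 1/\sqrt{10}$, estimate $\hat\lambda_1$ with one additional approximate Rayleigh-quotient solve, and then deflate by $\mI - v_1 v_1^\top$ and rerun the same single-vector routine to recover $v_2$ and $\hat\lambda_2$. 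Both single-vector routines fit in the same asymptotic budget, and their correctness reduces to the already-established analysis of Section~\ref{framework}.
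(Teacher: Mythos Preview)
Your proposal is correct and follows essentially the same strategy as the paper: invoke Theorem~\ref{thm:parafree} to bound the number of outer iterations by $O(\log(1/\gap))$, then charge each inverse application to the SVRG solver of Theorem~\ref{offline_solver} at accuracy $\poly(\gap/d)$, incurring an extra $O(\log(d/\gap))$ factor, and multiply by $t = O(\log d)$ inner applications.

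There are two minor differences worth noting. First, the paper does not use your geometric-series summation over $i$; it simply observes (via monotonicity in Lemma~\ref{lem:topeig1} and the lower bound in Theorem~\ref{thm:parafree}) that $\bar\lambda^{(i-1)} - \lambda_1 \ge \bar\lambda^{(T)} - \lambda_1 \ge \gap\lambda_1/120$ uniformly for every iteration that actually runs, so every solve costs at most $O\bigl([\nnz(\bv A) + d\,\nrank(\bv A)/\gap^2]\log(d/\gap)\bigr)$, and then multiplies by $t \cdot T = O(\log d \cdot \log(1/\gap))$. This is coarser than your argument (your geometric sum would in fact shave a log factor off the $d\,\nrank(\bv A)/\gap^2$ term), but it is simpler and already matches the stated $\log^3(d/\gap)$ bound. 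Second, the perturbation issue you flag as the main obstacle---that $\poly(\gap/d)$-accurate solves suffice for the block power iteration to retain the guarantee of Theorem~\ref{thm:musco}---is exactly the step the paper asserts as ``easy to verify'' without further argument; your deflation-based workaround is a reasonable way to make this rigorous if needed.
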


\begin{proof}
By Theorem \ref{thm:parafree}, we know only $O(\log 1/\gap)$ iterations of the algorithm are needed.
In each iteration, the runtime is dominated by running $eigEstimate\left(\left(\lambari{i-1}\bv{I}-\bv{A}^{T}\bv{A}\right)^{-t}w\right)$, which is dominated by computing $\left(\lambari{i-1}\bv{I}-\bv{A}^{T}\bv{A}\right)^{-t}w$. 
Since $t=O(\log d)$, it's easy to verify that: to make Theorem \ref{thm:musco} hold, we only need to approximate $\left(\lambari{i-1}\bv{I}-\bv{A}^{T}\bv{A}\right)^{-1}w$ up to accuracy $\poly(\gap/d)$. By Theorem \ref{offline_solver}, we know this approximation can be calculated in time $$O\left (\left [ \nnz(\bv A) + \frac{d\nrank(\bv A)\lambda_1^2}{(\lambari{i-1} - \lambda_1)^2} \right ] \cdot \log\left (\frac{d}{\gap}\right ) \right)$$. Combining Theorem \ref{thm:parafree} with Lemma \ref{lem:topeig1}, we know $\lambari{i-1} - \lambda_1 \ge 
\lambari{T} - \lambda_1 \ge \frac{\gap}{120}$, thus approximately solving $\left(\lambari{i-1}\bv{I}-\bv{A}^{T}\bv{A}\right)^{-1}w$ can be done in time $\tilde{O}\left (\nnz(\bv A) + \frac{d\nrank(\bv A)}{\gap^2} \right)$. Finally, since the runtime of Algorithm \ref{algo:eigestimate} is dominated by repeating this subroutine $t\times T = O(\log d \cdot \log (1/\gap))$ times, we finish the proof.
\end{proof}

Note that we can accelerate the runtime of Algorithm \ref{algo:eigestimate} to $\tilde O \left (\frac{\nnz(\bv A)^{3/4}(d\nrank(\bv A))^{1/4}}{\sqrt{\gap}} \right )$, by simply
replacing the base solver for $\left(\lambari{i-1}\bv{I}-\bv{A}^{T}\bv{A}\right)^{-1}w$ with the accelerated solver in Theorem \ref{accelerated_offline_solver}.

\section{Lower Bounds}\label{sec:lower}

Here we show that our online eigenvector estimation algorithm (Theorem \ref{warmstart_online_theorem})  is asymptotically optimal - as sample size grows large it achieves optimal accuracy as a function of sample size. 
We rely on the following lower bound for eigenvector estimation in the Gaussian spike model:

\begin{lemma}[Lower bound for Gaussian Spike Model \cite{birnbaum2013minimax}] \label{lowerbound_spike_lemma}
Suppose data is generated as 
\begin{equation}\label{spike_model}
a_i = \sqrt{\lambda}\iota_i{v^\star} + Z_i
\end{equation}
where $\iota_i \sim \mathcal{N}(0, 1)$, and $Z_i \sim \mathcal{N}(0, I_d)$. 
Let $\hat{v}$ be some estimator of the top eigenvector $v^\star$. 
Then, there is some universal constant $c_0$, so that for $n$ sufficiently large, we have:
\begin{equation*}
\inf_{\hat{v}} \max_{v^\star \in \mathbb{S}^{d-1}} \E \norm{\hat{v} - v^\star}_2
\ge c_0 \frac{(1+\lambda)d}{\lambda^2 n}
\end{equation*}
\end{lemma}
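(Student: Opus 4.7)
My plan is to establish this minimax lower bound via the standard Fano-based information-theoretic method. The target is to exhibit a family of hard instances in the Gaussian spike model (one per candidate value of $v^\star$) that any estimator must confuse with substantial probability. First I would construct a $\delta$-packing $\{v_1, \dots, v_M\} \subset \mathbb{S}^{d-1}$ with pairwise distance $\ge \delta$ (up to the sign ambiguity $v \sim -v$ inherent in eigenvector estimation, which I would handle by restricting to a hemisphere or by working with the $\sin\theta$ pseudo-metric). A standard volumetric covering argument for the unit sphere yields $M \ge e^{c_1 d}$ for a universal constant $c_1 > 0$ and for sufficiently small $\delta$.

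Next I would upper bound the KL divergence between any two hypotheses. Under the spike model \eqref{spike_model}, the per-sample law is $a \sim \mathcal{N}(0, \Sigma_j)$ with $\Sigma_j \defeq I_d + \lambda v_j v_j^\top$, so $\det(\Sigma_j)$ is independent of $j$ and $\Sigma_j^{-1} = I_d - \frac{\lambda}{1+\lambda} v_j v_j^\top$. A short calculation using the closed-form KL between centered Gaussians gives
\begin{align*}
\mathrm{KL}\!\left(\mathcal{N}(0,\Sigma_j)^{\otimes n} \,\Big\|\, \mathcal{N}(0,\Sigma_k)^{\otimes n}\right)
\;=\;\frac{n}{2}\cdot\frac{\lambda^2 \bigl(1 - (v_j^\top v_k)^2\bigr)}{1+\lambda}
\;\le\;\frac{n \lambda^2}{1+\lambda}\cdot \|v_j - v_k\|_2^2,
\end{align*}
where the inequality uses $1 - (v_j^\top v_k)^2 \le 2 \|v_j - v_k\|_2^2$ on the unit sphere.

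Now I would invoke Fano's inequality. If $V$ is drawn uniformly from $\{v_1,\dots,v_M\}$ and then we observe $n$ i.i.d.\ samples from $\mathcal{N}(0,\Sigma_V)$, any estimator $\hat{V}$ satisfies
\begin{align*}
\Pr[\hat{V} \ne V] \;\ge\; 1 \;-\; \frac{\max_{j,k}\mathrm{KL}\!\left(P_{v_j}^n \| P_{v_k}^n\right) + \log 2}{\log M}.
\end{align*}
Choosing the packing scale so that $\max_{j,k}\|v_j - v_k\|_2 \le C\delta$ and combining with the KL bound above, the right-hand side is $\ge 1/2$ provided
\begin{align*}
\frac{n \lambda^2 \delta^2}{1+\lambda} \;\lesssim\; d.
\end{align*}
Solving for the largest such $\delta$ and translating the testing error into an expected estimation error via Markov's inequality (together with the separation of the packing) yields a lower bound of the stated order. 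The main technical care will be in matching the scaling in the lemma statement as written, bookkeeping the sign ambiguity so that the packing pseudo-distance is bounded below for every pair, and verifying that the packing size $M = e^{\Omega(d)}$ dominates the KL term in the Fano denominator -- this is the step where the multiplicative $d$ factor enters the final bound, and it is the principal obstacle since it requires a sphere-packing estimate that is tight up to constants in the exponent. Once these are in place, optimizing over $\delta$ produces the desired $(1+\lambda)d/(\lambda^2 n)$ scaling.
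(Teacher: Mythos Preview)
The paper does not prove this lemma: it is simply imported from \cite{birnbaum2013minimax} and then invoked in the proof of Theorem~\ref{lowerbound_online_theorem}. So there is no ``paper's own proof'' to compare against.

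That said, your Fano-based sketch is the standard route to such minimax lower bounds for spiked covariance models and is correct in outline. Two remarks. First, the statement as printed has $\E\norm{\hat v - v^\star}_2$ (unsquared) on the left, while the paper uses it in Theorem~\ref{lowerbound_online_theorem} as if it were $\E\norm{\hat v - v^\star}_2^2$; your Fano argument naturally yields both, since from the testing lower bound you get $\|\hat v - v^\star\|_2 \ge \delta/2$ with probability at least $1/2$, hence both $\E\|\hat v - v^\star\|_2 \gtrsim \delta$ and $\E\|\hat v - v^\star\|_2^2 \gtrsim \delta^2$ with $\delta^2 \asymp (1+\lambda)d/(\lambda^2 n)$. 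For $n$ large the squared-loss version is what is actually needed, and the unsquared statement as written is weaker and follows as well. Second, the packing you need is a \emph{local} one (all pairwise distances in $[\delta, C\delta]$) of cardinality $e^{\Omega(d)}$; the cleanest construction is to take $v_j = \sqrt{1-\delta^2}\, e_1 + \delta\, u_j$ with $\{u_j\}$ a constant-separation packing of $\mathbb{S}^{d-2}$, rather than a direct volumetric argument on the whole sphere. With that in place, your KL computation and the Fano balance $\frac{n\lambda^2 \delta^2}{1+\lambda} \lesssim d$ give exactly the claimed rate.
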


\begin{theorem}\label{lowerbound_online_theorem}
Consider the problem of estimating the top eigenvector $v_1$ of $\E_{a\sim \mathcal{D}} aa^\top$, where we observe $n$ i.i.d samples from unknown distribution $\mathcal{D}$. If $\gap < 0.9$, then there exists some universal constant c, such that for any estimator $\hat{v}$ of top eigenvector, there always exists some hard distribution $\mathcal{D}$ so that for $n$ sufficiently large:
\begin{align*}
\E\norm{\hat{v}-v_1}^2_2 \ge c  \frac{\nvar(\dist)}{\gap^2n} 
\end{align*}
\end{theorem}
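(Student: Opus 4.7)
My strategy is to exhibit a single hard distribution $\dist$ -- the Gaussian spike model of Lemma~\ref{lowerbound_spike_lemma} with an appropriately chosen $\lambda$ -- and then translate the lower bound of Lemma~\ref{lowerbound_spike_lemma} into the parameters $\gap$ and $\nvar(\dist)$ that appear in the theorem. The result will then follow essentially by a direct substitution.

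First, I would fix $\lambda > 0$ to be any constant (say $\lambda < 9$), and consider $a$ drawn from the spike model \eqref{spike_model}. Then $\mSigma = \E aa^\top = \lambda v^\star (v^\star)^\top + \bv I$, so $\lambda_1 = 1+\lambda$, $\lambda_2 = 1$, and
\begin{align*}
\gap \;=\; \frac{\lambda_1 - \lambda_2}{\lambda_1} \;=\; \frac{\lambda}{1+\lambda}.
\end{align*}
With $\lambda < 9$ we get $\gap < 0.9$ as required. The top eigenvector of $\mSigma$ is exactly $v^\star$, so the minimax risk for estimating $v_1$ is precisely the object bounded by Lemma~\ref{lowerbound_spike_lemma}.

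Next, I need to compute $\nvar(\dist) = \norm{\E (aa^\top)^2}_2 / \lambda_1^2$. Using Isserlis' theorem for a centered Gaussian $a \sim \mathcal{N}(0,\mSigma)$ one has the identity
\begin{align*}
\E[(aa^\top)^2] \;=\; \tr(\mSigma)\,\mSigma + 2\,\mSigma^2.
\end{align*}
Plugging in $\tr(\mSigma) = d + \lambda$ and $\mSigma^2 = (\lambda^2 + 2\lambda) v^\star (v^\star)^\top + \bv I$, a short calculation shows
\begin{align*}
\E[(aa^\top)^2] \;=\; \bigl[3\lambda^2 + (d+4)\lambda\bigr] v^\star (v^\star)^\top + (d+\lambda+2)\,\bv I,
\end{align*}
whose spectral norm is $3\lambda^2 + (d+5)\lambda + d + 2 = \Theta((1+\lambda) d)$ for large $d$ (and $\lambda$ bounded). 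Hence
\begin{align*}
\nvar(\dist) \;=\; \Theta\!\left(\frac{(1+\lambda) d}{(1+\lambda)^2}\right) \;=\; \Theta\!\left(\frac{d}{1+\lambda}\right).
\end{align*}

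Finally, combining these two calculations with Lemma~\ref{lowerbound_spike_lemma} (interpreting the stated $\norm{\hat v - v^\star}_2$ in the squared sense appropriate to the cited minimax rate), we obtain
\begin{align*}
\inf_{\hat v} \max_{v^\star} \E\norm{\hat v - v_1}_2^2
\;\ge\; c_0\,\frac{(1+\lambda)\,d}{\lambda^2\, n}
\;=\; c_0\,\frac{(1+\lambda)^2}{\lambda^2}\cdot\frac{d}{(1+\lambda)\,n}
\;=\; \Theta\!\left(\frac{1}{\gap^2}\cdot\frac{\nvar(\dist)}{n}\right),
\end{align*}
using $1/\gap = (1+\lambda)/\lambda$. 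This yields the claimed bound for some universal constant $c$. The main potential obstacle is merely bookkeeping -- verifying the Gaussian fourth-moment identity and the asymptotic $\norm{\E(aa^\top)^2}_2 = \Theta((1+\lambda)d)$ carefully -- together with a minor sanity check on whether Lemma~\ref{lowerbound_spike_lemma} is stated for $\ell_2$ or $\ell_2^2$ error, so that constants match up in the final substitution.
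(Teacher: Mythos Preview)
Your proposal is correct and follows essentially the same approach as the paper: use the Gaussian spike model as the hard distribution, compute $\gap=\lambda/(1+\lambda)$ and $\nvar(\dist)$ explicitly, and translate the minimax bound of Lemma~\ref{lowerbound_spike_lemma} into the $\nvar(\dist)/(\gap^2 n)$ form. The paper phrases it as a contradiction and simply asserts $\nvar(\dist)=(d+2+3\lambda)/(1+\lambda)$, which is exactly your expression $(3\lambda^2+(d+5)\lambda+d+2)/(1+\lambda)^2$ after cancelling a factor of $(1+\lambda)$; your Isserlis derivation is a welcome detail the paper omits. Your caveat about $\ell_2$ versus $\ell_2^2$ in Lemma~\ref{lowerbound_spike_lemma} is well placed: the paper's own proof silently treats that lemma as bounding the squared error.
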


\begin{proof}
Suppose the claim of theorem is not true, then there exist some estimator $\hat{v}$ so that 
\begin{equation*}
\E\norm{\hat{v}-v_1}^2_2 < c'  \frac{\nvar(\dist)}{\gap^2n} 
\end{equation*}
holds for all distribution $\mathcal{D}$, and for any fixed constant $c'$ when $n$ is sufficiently large.

Let distribution $\mathcal{D}$ be the Gaussian Spike Model specified by Eq.(\ref{spike_model}), then
by calculation, it's not hard to verify that:
\begin{equation*}
\nvar(\dist) = \frac{\norm{\E_{a \sim \dist}\left [ \left (a a^\top \right )^2 \right ]}_2}{\norm{\E_{a \sim \dist} (aa^\top)}_2^2} =\frac{d+2+3\lambda}{1+\lambda}
\end{equation*}Since we know $\gap = \frac{\lambda}{1+\lambda} <0.9$, this implies $\lambda <9$, which gives
$\nvar(\dist) < \frac{d+29}{1+\lambda} <\frac{30d}{1+\lambda}$.
Therefore, we have that:
\begin{equation*}
\E\norm{\hat{v}-v^\star}^2_2 < c'  \frac{\nvar(\dist)}{\gap^2n} <30c' \frac{(1+\lambda)d}{\lambda^2 n}
\end{equation*}
holds for all $v^\star \in \mathbb{S}^{d-1}$. Choose $c' = \frac{c_0}{30}$ in Lemma \ref{lowerbound_spike_lemma} we have a contradiction.
\end{proof}

$\norm{\hat{v}-v_1}^2_2  = 2 - 2  \hat{v}^\top v_1$, so this bound implies that- to obtain $|\hat{v}^\top v_1| \ge 1- \epsilon$,  we need $\frac{\nvar(\dist)}{\gap^2 n} = O(\epsilon)$ so $n = \Theta \left (\frac{\nvar(\dist)}{\gap^2 \epsilon} \right )$. This exactly matches the sample complexity given by Theorem \ref{warmstart_online_theorem}.
\section{Gap-Free Bounds}\label{sec:gapfree}

In this section we demonstrate that our techniques can easily be extended to obtain gap-free runtime bounds, for the regime when $\epsilon \ge \gap$. In many ways these bounds are actually much easier to achieve than the gap dependent bounds since they require less careful error analysis.

Let $\epsilon$ be our error parameter and $m$ be the number of eigenvalues of $\Sigma$ that are $\ge (1-\epsilon/2)\lambda_1$.
Choose $\lambda = \lambda_1 + \epsilon/100$. We have $\lambda_1(\mb^{-1}) = \frac{100}{\epsilon\lambda_1}$. For $i > m$ we have $\lambda_i(\mb^{-1}) < \frac{2}{\epsilon\lambda_1}$. $\kappa(\bv{B}^{-1}) \le \frac{100}{\epsilon}$. 

Let $\bv{V}_b$ have columns equal to all \emph{bottom} eigenvectors with eigenvalues $\lambda_i < (1-\epsilon/2) \lambda_1$. Let $\bv{V}_t$ have columns equal to  the $m$ remaining \emph{top} eigenvectors.
We define a simple modified potential:
\begin{align*}
\bar G(x) \eqdef \frac{\norm{\bv{P}_{\bv{V}_b}x}_\mb}{\norm{\bv{P}_{v_1}x}_\mb} = \frac{\sqrt{\sum_{i > m} \frac{\alpha_i^2}{\lambda_i(\mb^{-1})}}}{\sqrt{\frac{\alpha_1^2}{\lambda_1(\mb^{-1})}}}
\end{align*}

We have the following Lemma connecting this potential function to eigenvalue error:
\begin{lemma}
For unit $x$, if $\bar G(x) \le c\sqrt{\epsilon}$ for sufficiently small constant $c$ then $\lambda_1 - x^\top \bv{\Sigma}x \le \epsilon \lambda_1$.
\end{lemma}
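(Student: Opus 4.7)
The plan is to write $x$ in the eigenbasis of $\bv{\Sigma}$ and decompose the Rayleigh error according to the top/bottom split built into the definition of $\bar G$. Specifically, writing $x = \sum_i \alpha_i v_i$ with $\sum_i \alpha_i^2 = 1$, I would compute
\[
\lambda_1 - x^\top \bv{\Sigma} x
= \sum_{i \ge 2} \alpha_i^2 (\lambda_1 - \lambda_i)
= \underbrace{\sum_{2 \le i \le m} \alpha_i^2 (\lambda_1 - \lambda_i)}_{\text{top part}}
+ \underbrace{\sum_{i > m} \alpha_i^2 (\lambda_1 - \lambda_i)}_{\text{bottom part}}.
\]
For the top part, the definition of $m$ gives $\lambda_1 - \lambda_i \le (\epsilon/2)\lambda_1$ for every $i \le m$, so this part is at most $(\epsilon/2)\lambda_1 \sum_{i=2}^m \alpha_i^2 \le (\epsilon/2)\lambda_1$. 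The bottom part is trivially bounded by $\lambda_1 \sum_{i > m}\alpha_i^2$, and the whole task reduces to showing $\sum_{i > m}\alpha_i^2 \lesssim \epsilon$.

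For this second step, I would read off the numerator and denominator of $\bar G(x)^2$ separately. Using $1/\lambda_i(\mb^{-1}) = \lambda - \lambda_i$, we have for $i > m$ the bound
\[
\lambda - \lambda_i \;\ge\; \lambda - (1-\epsilon/2)\lambda_1 \;=\; (\lambda - \lambda_1) + (\epsilon/2)\lambda_1 \;\ge\; (\epsilon/2)\lambda_1,
\]
so the numerator is at least $(\epsilon \lambda_1/2)\sum_{i > m}\alpha_i^2$. For the denominator, the choice of the shift gives $1/\lambda_1(\mb^{-1}) = \lambda - \lambda_1 = \epsilon \lambda_1/100$, and $\alpha_1^2 \le 1$, so the denominator is at most $\epsilon \lambda_1 /100$. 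Dividing,
\[
\bar G(x)^2 \;\ge\; \frac{(\epsilon \lambda_1/2)\sum_{i > m}\alpha_i^2}{\epsilon \lambda_1/100} \;=\; 50 \sum_{i > m}\alpha_i^2.
\]

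Combining the two pieces, the hypothesis $\bar G(x) \le c\sqrt{\epsilon}$ yields $\sum_{i > m}\alpha_i^2 \le c^2 \epsilon/50$, and therefore
\[
\lambda_1 - x^\top \bv{\Sigma}x \;\le\; (\epsilon/2)\lambda_1 + (c^2/50)\,\epsilon \lambda_1 \;\le\; \epsilon \lambda_1
\]
whenever $c \le 5$, proving the lemma. There is no real obstacle here: the calculation is short, and the only subtlety is making sure the constants in the shift ($\epsilon/100$) and in the top/bottom split ($\epsilon/2$) combine to leave enough slack so that a universal constant $c$ works — which they do comfortably.
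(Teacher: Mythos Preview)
Your proof is correct and follows essentially the same approach as the paper: both arguments bound the mass $\sum_{i>m}\alpha_i^2$ in the bottom eigenspace via $\bar G(x)$ and then exploit that the top $m$ eigenvalues are at least $(1-\epsilon/2)\lambda_1$. The paper does this via the norm inequality $\bar G(x)\ge \norm{\bv{P}_{\bv{V}_b}x}_2/\norm{\bv{P}_{\bv{V}_t}x}_2$ and then lower bounds $x^\top\bv{\Sigma}x$ directly, whereas you decompose $\lambda_1 - x^\top\bv{\Sigma}x$ in the eigenbasis and track constants explicitly; your version is a bit more quantitative (yielding the concrete threshold $c\le 5$) but the underlying idea is identical.
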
 
\begin{proof}
\begin{align*}
\bar G(x) \ge \frac{\norm{\bv{P}_{\bv{V}_b}x}_2}{\norm{\bv{P}_{v_1}x}_2}  \ge \frac{\norm{\bv{P}_{\bv{V}_b}x}_2}{\norm{\bv{P}_{\bv{V}_t}x}_2}  
\end{align*}
So if $\bar G(x) \le c\sqrt{\epsilon}$ then $\norm{\bv{P}_{\bv{V}_t}x}^2_2 c^2\epsilon \ge \norm{\bv{P}_{\bv{V}_b}x}^2_2$ and since $\norm{\bv{P}_{\bv{V}_t}x}^2_2 + \norm{\bv{P}_{\bv{V}_b}x}^2_2 = 1$, this gives $\norm{\bv{P}_{\bv{V}_t}x}^2_2 \ge \frac{1}{1+c^2\epsilon}$. So we have $x^T \bv \Sigma x \ge \bv{P}_{\bv{V}_t}x^T \bv \Sigma x\bv{P}_{\bv{V}_t} \ge \frac{(1-\epsilon/2)\lambda_1}{1+c^2\epsilon} \ge 1-\epsilon$ for small enough $c$, giving the lemma.
\end{proof}

We now follow the proof of Lemma \ref{thm:init-offline}, which is actually much simpler in the gap-free case.

\begin{theorem}[Approximate Shifted-and-Inverted Power Method -- Gap-Free]\label{burnInGapFree}
	Suppose we randomly initialize $x_0$ as in Lemma \ref{lem:init-random} and suppose we have access to a subroutine $\solve{\cdot}$ such that
	\begin{align*}
		\expec{\norm{\solve{x}-\mb^{-1}x}_{\mb}} \leq \frac{\epsilon^3}{3000 d^{21}} \sqrt{\lambda_d(\mb^{-1})}
	\end{align*}
	Then the following procedure,
	\begin{align*}
		x_{t} = \solve{x_{t-1}}/\norm{\solve{x_{t-1}}}
	\end{align*}
	after $T = O\left(\log d/\epsilon \right)$ iterations  satisfies:
	\begin{align*}
		\bar G(x_T) \leq c\sqrt{\epsilon},
	\end{align*}
	with probability greater than $1- O(\frac{1}{d^{10}})$.
	%with probability greater than $\frac{98}{100}$.
\end{theorem}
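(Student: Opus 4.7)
The argument parallels that of Theorem~\ref{thm:init-offline}, substituting the gap-free potential $\bar G$ for $G$ and exploiting the structural bound $\lambda_{m+1}(\bv{B}^{-1})/\lambda_1(\bv{B}^{-1}) \le 1/50$, which follows from $\lambda = \lambda_1 + \epsilon/100$ together with the definition of $m$. Since $\bar G(x) \le G(x)$ (the numerator of $\bar G$ sums over a subset of indices), the first step is to run the proof of Lemma~\ref{lem:init-random} almost verbatim to obtain $\bar G(x_0) \le \sqrt{\kappa(\bv{B}^{-1})}\, d^{10.5} = O(d^{10.5}/\sqrt{\epsilon})$ with probability at least $1-O(1/d^{10})$. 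The intermediate estimates in that proof ($|v_1^\top x| \ge 1/d^{10}$ and $\|x\|_2 = O(\sqrt{d})$) additionally yield $|\alpha_1(x_0)| = |v_1^\top x_0|/\|x_0\|_2 \ge 1/d^{10.5}$ on the same high-probability event, which I will carry through the induction to control the denominator $\|\bv{P}_{v_1} x\|_{\bv{B}} = |\alpha_1|\sqrt{\lambda_1(\bv{B}^{-1})}$ of $\bar G$.

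Next I would prove by induction on $t$ that, with high probability over the solver randomness, $\bar G(x_t) \le \bar G(x_{t-1})/2$ and $|\alpha_1(x_t)| \ge 1/(2 d^{10.5})$. Writing $\hat x = \solve{x_{t-1}} = \bv{B}^{-1} x_{t-1} + \xi$ and expanding in the $\bv{\Sigma}$-eigenbasis, a computation analogous to Theorem~\ref{thm:powermethod} but adapted to $\bar G$ gives
\begin{align*}
\frac{\|\bv{P}_{\bv{V}_b} \bv{B}^{-1} x_{t-1}\|_{\bv{B}}}{\|\bv{P}_{v_1} \bv{B}^{-1} x_{t-1}\|_{\bv{B}}}
\;\le\; \frac{\lambda_{m+1}(\bv{B}^{-1})}{\lambda_1(\bv{B}^{-1})}\,\bar G(x_{t-1})
\;\le\; \frac{\bar G(x_{t-1})}{50}.
\end{align*}
Combining this with the triangle inequalities $\|\bv{P}_{\bv{V}_b}\hat x\|_{\bv{B}} \le \|\bv{P}_{\bv{V}_b}\bv{B}^{-1}x_{t-1}\|_{\bv{B}} + \|\xi\|_{\bv{B}}$ and $\|\bv{P}_{v_1}\hat x\|_{\bv{B}} \ge \|\bv{P}_{v_1}\bv{B}^{-1}x_{t-1}\|_{\bv{B}} - \|\xi\|_{\bv{B}}$ reduces $\bar G(\hat x) \le \bar G(x_{t-1})/2$ to the requirement that $\|\xi\|_{\bv{B}}$ be a small fraction of $|\alpha_1(x_{t-1})|\sqrt{\lambda_1(\bv{B}^{-1})} \ge \sqrt{\lambda_1(\bv{B}^{-1})}/(2 d^{10.5})$. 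The hypothesis on $\solve{\cdot}$, together with $\sqrt{\lambda_d(\bv{B}^{-1})} \ge (\sqrt{\epsilon}/10)\sqrt{\lambda_1(\bv{B}^{-1})}$ (from $\kappa(\bv{B}^{-1}) \le 100/\epsilon$) and a Markov bound, supplies this condition with failure probability $O(1/d^{12})$ per step. The preservation of $|\alpha_1|$ after renormalization follows from $\|\hat x\|_2 \le \lambda_1(\bv{B}^{-1}) + \|\xi\|_2$ together with $\|\xi\|_2 \le \|\xi\|_{\bv{B}}\sqrt{\lambda_1(\bv{B}^{-1})}$, which give $|\alpha_1(x_t)| \ge |\alpha_1(x_{t-1})| - O(1/d^{20})$.

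Since $\bar G$ halves at each iteration starting from $O(d^{10.5}/\sqrt{\epsilon})$, choosing $T = O(\log(d/\epsilon))$ ensures $\bar G(x_T) \le c\sqrt{\epsilon}$, and a union bound over the $T$ iterations and the initialization gives total failure probability $O(1/d^{10})$, completing the theorem.

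\textbf{Main obstacle.} Unlike Theorem~\ref{thm:init-offline}, which uses Corollary~\ref{cor:constant_factor_corollary} to obtain a \emph{relative} system-solver guarantee, here the hypothesis on $\solve{\cdot}$ is an \emph{absolute} error bound of the form $\|\xi\|_{\bv{B}} \le O(\epsilon^3/d^{21})\sqrt{\lambda_d(\bv{B}^{-1})}$. The main difficulty is therefore to lower-bound the denominator $\|\bv{P}_{v_1} \bv{B}^{-1} x_{t-1}\|_{\bv{B}}$ at every step, which forces us to track $|\alpha_1(x_t)|$ through the induction. The extremely tight factor $\epsilon^3/d^{21}$ in the hypothesis is exactly calibrated so that, after passing to $\sqrt{\lambda_1(\bv{B}^{-1})}$ and applying Markov, the per-step perturbation to $|\alpha_1|$ is $O(1/d^{20})$, and the initial bound $|\alpha_1| \ge 1/d^{10.5}$ therefore survives all $O(\log(d/\epsilon))$ iterations under the union bound.
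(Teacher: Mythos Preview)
Your proposal is correct and follows essentially the same route as the paper: bound $\bar G(x_0)$ via Lemma~\ref{lem:init-random}, use triangle inequalities on the numerator and denominator of $\bar G(\hat x)$ together with the structural bound $\lambda_{m+1}(\bv B^{-1})/\lambda_1(\bv B^{-1})\le 1/50$, track $|\alpha_1|$ through the iterations to keep the denominator bounded below, and finish with a union bound over $T=O(\log(d/\epsilon))$ steps.

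The one notable difference is how you control $|\alpha_1|$. The paper uses the crude inequality $\|\hat x\|_2^2 \le 2\|\bv B^{-1}x\|_2^2 + 2\|\xi\|_2^2$ and only obtains $\hat\alpha_1^2 \ge \alpha_1^2/3$, so $|\alpha_1|$ decays geometrically and after $T$ iterations is merely $\ge d^{-10}\cdot(\epsilon/d)^{O(1)}$; the extremely small $\epsilon^3/d^{21}$ factor in the hypothesis is then needed to absorb this polynomial loss. Your approach instead uses $\|\hat x\|_2 \le \|\bv B^{-1}x\|_2 + \|\xi\|_2 \le \lambda_1(\bv B^{-1}) + \|\xi\|_2$, which (since exact power iteration can only increase $|\alpha_1|$) gives the sharper additive preservation $|\alpha_1(x_t)| \ge |\alpha_1(x_{t-1})| - O(\|\xi\|_2/\lambda_1(\bv B^{-1}))$. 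This is cleaner and in fact shows that a weaker solver hypothesis would suffice. One minor quantitative slip: with Markov at failure $O(1/d^{12})$ the per-step bound on $\|\xi\|_{\bv B}$ is $O(\epsilon^3/d^{9})\sqrt{\lambda_1(\bv B^{-1})}$, not the $O(1/d^{20})$ you wrote, but the cumulative loss over $O(\log(d/\epsilon))$ iterations is still negligible compared to $1/d^{10.5}$, so your induction goes through. You should also state the contraction as $\bar G(x_t)\le \max\{\bar G(x_{t-1})/2,\,c\sqrt{\epsilon}\}$ rather than pure halving, since once $\bar G$ reaches the error floor the additive $\|\xi\|_{\bv B}$ term in the numerator prevents further halving; this is exactly how the paper phrases its conclusion, $\bar G(\hat x)\le \tfrac{4}{50}\max\{\bar G(x),O(\sqrt{\epsilon})\}$.
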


\begin{proof}

By Lemma \ref{lem:init-random}, we know with at least probability $1- O(\frac{1}{d^{10}})$, 
we have $\bar G(x_0) \le G(x_0) \le \sqrt{\kappa(\mb^{-1})} d^{10.5} = \frac{100d^{10.5}}{\epsilon}$.
We want to show by induction that at iteration $i$ we have $\bar G(x_i) \le \frac{1}{2^i} \cdot \frac{100d^{10.5}}{\epsilon}$, which will give us the lemma if we set $T = \log_2 \left ( \frac{100d^{10.5}}{c\epsilon^{1.5}} \right) = O(\log(d/\epsilon))$.

Let $\xhat = \solve{x}$ and $\xi = \xhat - \mb^{-1}x$. Following Lemma \ref{thm:init-offline} we have:
\begin{align*}
\norm{\bv{P}_{\bv{V}_b}\left(\xhat \right)}_{\mb} &\leq \norm{\bv{P}_{\bv{V}_b}\left(\mb^{-1} x \right)}_{\mb} + \norm{\bv{P}_{\bv{V}_b}\left(\xi\right)}_\mb \leq \norm{\bv{P}_{\bv{V}_b}\left(\mb^{-1} x \right)}_{\mb} + \norm{\xi}_{\mb} \\
&=\sqrt{\sum_{i>m}\alpha_i^{2}\lambda_{i}(\mb^{-1})}+\norm{\xi}_{\mb}\\
&\le \lambda_{m+1}(\mb^{-1}) \left (\sqrt{\sum_{i>m} \frac{\alpha_i^2}{\lambda_i(\mb^{-1})}} + \frac{\epsilon^3}{3000d^{21}\sqrt{\lambda_{m+1}(\mb^{-1})}}\right )\\
&\le 2\lambda_{m+1}(\mb^{-1}) \max \left \{ \sqrt{\sum_{i>m} \frac{\alpha_i^2}{\lambda_i(\mb^{-1})}},\frac{\epsilon^3}{3000d^{21}\sqrt{\lambda_{m+1}(\mb^{-1})}}\right \}
\end{align*}
and 
\begin{align*}
\norm{\bv{P}_{v_1}\left(\xhat \right)}_{\mb}
	&\geq  \norm{\bv{P}_{v_1}\left(\mb^{-1} x \right)}_{\mb} - \norm{\bv{P}_{v_1}\left(\xi\right)}_\mb
	\geq \norm{\bv{P}_{v_1}\left(\mb^{-1} x \right)}_{\mb} - \norm{\xi}_{\mb} \\
	&=\sqrt{\alpha_1^2 \lambda_1(\bv{B}^{-1})} - \norm{\xi}_{\mb}\\
	& \ge \lambda_1(\bv{B}^{-1}) \sqrt{\frac{\alpha_1^2 - \frac{\epsilon^6}{(3000d^{21})^2}}{ \lambda_1(\bv{B}^{-1})}}.
\end{align*}

Initially, we have with high probability, by the argument in Lemma \ref{lem:init-random}, $\alpha_1 \ge \frac{1}{d^{10}}$ so we have $\norm{\bv{P}_{v_1}\left(\xhat \right)}_{\mb} \ge \frac{\lambda_1(\bv{B}^{-1})}{2} \sqrt{\frac{\alpha_1^2}{ \lambda_1(\bv{B}^{-1})}}$. This also holds by induction in each iteration.

Let $\hat \alpha_1=|v_1^\top \xhat|/\norm{\xhat}_2$. $\norm{\bv{P}_{v_1}\left(\xhat \right)}_{\mb}^2 = \frac{\hat \alpha_1^2\norm{\hat x}_2^2}{\lambda_1(\bv{B}^{-1})}$ so we have 
\begin{align*}
\hat \alpha_1^2 &\ge \frac{\lambda_1(\bv{B}^{-1})^2}{\norm{\hat x}_2^2} \left (\alpha_1^2 - \frac{\epsilon^6}{(3000d^{21})^2}\right )
\end{align*}
and since $\norm{\hat x}_2^2 \le 2 \left (\norm{\bv{B}^{-1}x}_2^2 + 2\norm{\xi}_2^2 \right ) \le \lambda_1(\bv{B}^{-1})^2 + 2\frac{\epsilon^6}{(3000d^{21})^2} \le \lambda_1(\bv{B}^{-1})^2\left (2 + 2\frac{\epsilon^6}{(3000d^{21})^2} \right )$ we have:
\begin{align*}
\hat \alpha_1^2 &\ge \frac{1}{2.1} \left (\alpha_1^2 - \frac{\epsilon^6}{(3000d^{21})^2}\right ) \ge \frac{1}{3} \alpha_1^2.
\end{align*}

So over all $\log_2 \left ( \frac{100d^{10.5}}{c\epsilon^{1.5}} \right)$ iterations, we always have $\hat \alpha_1^2 \ge \frac{1}{d^{10}}\cdot  \left ( \frac{c\epsilon^{1.5}}{100d^{10.5}}\right )^{\log_2 3}$ and so $\frac{\epsilon^6}{(3000d^{21})^2}  << 1/2\alpha_1^2$.
Combining the above bounds:

%By the argument in Lemma \ref{lem:init-random}, we know with at least probability $1- O(\frac{1}{d^{10}})$, $\alpha_1 \ge \frac{1}{d^{10}}$. So by the bound above we always at least have $\norm{\bv{P}_{v_1}\left(\xhat \right)}_{\mb} \ge \frac{1}{2} \norm{\bv{P}_{v_1}\left(x \right)}_{\mb}$. Similarly, we have $\norm{\bv{P}_{\bv{V}_b}\left(\xhat \right)}_{\mb} \le $
\begin{align*}
	\bar G(\xhat)  &\le \frac{2\lamiBinv{m+1}}{\lamiBinv{1}/2} \cdot \frac{\max \left \{ \sqrt{\sum_{i>m} \frac{\alpha_i^2}{\lambda_i(\mb^{-1})}},\frac{\epsilon^3}{3000d^{21}\sqrt{\lambda_{m+1}(\mb^{-1})}}\right \}}{\sqrt{\frac{\alpha_1^2}{ \lambda_1(\bv{B}^{-1})}}}\\
	&\le \frac{4}{50} \max \left \{\bar G(x), O(\sqrt{\epsilon}) \right \}.
\end{align*}
This is enough to give the Theorem.
\end{proof}

Finally, we combine Theorem \ref{burnInGapFree} with the SVRG based solvers of  Theorem \ref{offline_solver} and \ref{accelerated_offline_solver} to obtain:

\begin{theorem}[Gap-Free Shifted-and-Inverted Power Method With SVRG]\label{main_gapfree_theorem}
Let $\bv{B} = \lambda \bv{I} - \bv{A}^\top \bv{A}$ for $\lambda = \left ( 1+\frac{\epsilon}{100}\right)$ and let $x_0 \sim \mathcal{N}(0,\bv I)$ be a random initial vector. Running the inverted power method on $\bv{B}$ initialized with $x_0$, using the SVRG solver from Theorem \ref{offline_solver} to approximately apply $\bv{B}^{-1}$ at each step, returns $x$ such that with probability $1-O\left (\frac{1}{d^{10}}\right)$, $x^\top \bv{\Sigma}x \ge (1-\epsilon) \lambda_1$ in time $$O \left (\left(\nnz(\bv A) + \frac{d \nrank(\bv A)}{\epsilon^2} \right )\cdot \log^2\left(\frac{d}{\epsilon}\right) \right ).$$
\end{theorem}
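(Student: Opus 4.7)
The plan is to follow the proof template of Theorem~\ref{main_offline_theorem} but with only a single stage, using the gap-free burn-in analysis of Theorem~\ref{burnInGapFree}: in the gap-free regime the burn-in already drives $\bar G(x)$ below $c\sqrt{\epsilon}$, which by the lemma immediately preceding Theorem~\ref{burnInGapFree} gives $x^\top \bv{\Sigma} x \ge (1-\epsilon)\lambda_1$. There is no separate local-refinement stage to worry about.

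First I would invoke Theorem~\ref{burnInGapFree} with a random Gaussian initialization $x_0$. With probability $1-O(1/d^{10})$ over $x_0$, after $T = O(\log(d/\epsilon))$ outer iterations we obtain $x_T$ with $\bar G(x_T)\le c\sqrt{\epsilon}$, provided each outer step supplies an oracle $\solve{\cdot}$ satisfying $\E\,\norm{\solve{x}-\mb^{-1}x}_{\mb} \le \tfrac{\epsilon^3}{3000\, d^{21}}\sqrt{\lambda_d(\mb^{-1})}$.

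Second, I would implement $\solve{\cdot}$ by chaining calls to the SVRG solver of Theorem~\ref{offline_solver}. Although that theorem is stated for $\lambda$ in a gap-dependent range, its proof only uses $\lambda/\lambda_1 = O(1)$ and tracks the ratio $\avgsmooth/\mu = O(\lambda_1 \normFro{\bv A}^2/(\lambda-\lambda_1)^2)$; with our shift $\lambda = (1+\epsilon/100)\lambda_1$ we have $\lambda-\lambda_1 = \Theta(\epsilon\lambda_1)$, so each cycle of $O(\nrank(\bv A)/\epsilon^2)$ inner SVRG steps halves the expected squared $\mb$-norm error in time $O(\nnz(\bv A) + d\,\nrank(\bv A)/\epsilon^2)$. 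Starting from the zero vector gives initial error at most $\sqrt{\lambda_1(\mb^{-1})}$ (since the input is a unit vector), and because $\kappa(\mb^{-1}) = O(1/\epsilon)$, a total of $O(\log(d/\epsilon))$ halvings suffices to reach the target precision, using Jensen's inequality to convert expected squared-error contraction into contraction of $\E[\norm{\cdot}_{\mb}]$.

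Third, multiplying the three factors gives $O(\log(d/\epsilon))$ outer iterations times $O(\log(d/\epsilon))$ inner halvings per $\solve$ call times $O(\nnz(\bv A) + d\,\nrank(\bv A)/\epsilon^2)$ per halving, yielding the claimed runtime. The failure probability is dominated by the $O(1/d^{10})$ event from the random initialization; the inner solver calls are only required to succeed in expectation, which is already exactly what Theorem~\ref{burnInGapFree} assumes. The main obstacle is the bookkeeping in the second step: one must verify that the shift-range hypothesis of Theorem~\ref{offline_solver} can be dropped (its proof carries through unchanged once $\lambda-\lambda_1 = \Theta(\epsilon\lambda_1)$ replaces $\lambda-\lambda_1 = \Theta(\gap\,\lambda_1)$), and one must check that the required $O(\epsilon^3/d^{21})$-accuracy, together with the $\sqrt{\kappa(\mb^{-1})} = O(1/\sqrt{\epsilon})$ factor hidden in the conversion from relative to absolute error, contributes only an $O(\log(d/\epsilon))$ factor, which is absorbed into $\log^2(d/\epsilon)$.
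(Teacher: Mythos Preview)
Your proposal is correct and follows exactly the approach the paper intends: the paper itself offers no detailed proof here, merely stating that one ``combine[s] Theorem~\ref{burnInGapFree} with the SVRG based solvers of Theorem~\ref{offline_solver},'' and your three-step argument (invoke the gap-free burn-in, implement $\solve{\cdot}$ by chaining $O(\log(d/\epsilon))$ SVRG halvings with $\lambda-\lambda_1=\Theta(\epsilon\lambda_1)$ replacing $\lambda-\lambda_1=\Theta(\gap\,\lambda_1)$, then multiply the factors) is precisely the intended filling-in. Your observations that the shift-range hypothesis in Theorem~\ref{offline_solver} is only used to express $\avgsmooth/\mu$ in terms of $\gap$, and that the $\sqrt{\kappa(\mb^{-1})}=O(1/\sqrt{\epsilon})$ conversion factor and the $\epsilon^3/d^{21}$ target together cost only $O(\log(d/\epsilon))$ inner halvings, are both correct and exactly the bookkeeping the paper leaves implicit.
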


\begin{theorem}[Accelerated Gap-Free Shifted-and-Inverted Power Method With SVRG]\label{acell_gapfree_theorem}
Let $\bv{B} = \lambda \bv{I} - \bv{A}^\top \bv{A}$ for $\lambda = \left ( 1+\frac{\epsilon}{100}\right)$ and let $x_0 \sim \mathcal{N}(0,\bv I)$ be a random initial vector. Running the inverted power method on $\bv{B}$ initialized with $x_0$, using the SVRG solver from Theorem \ref{accelerated_offline_solver} to approximately apply $\bv{B}^{-1}$ at each step, returns $x$ such that with probability $1-O\left (\frac{1}{d^{10}}\right)$, $x^\top \bv{\Sigma}x \ge (1-\epsilon) \lambda_1$ in total time $$O \left (\frac{\nnz(\bv A)^{3/4} (d \nrank(\bv A))^{1/4}}{\sqrt{\epsilon}} \cdot \log^3\left(\frac{d}{\epsilon}\right) \right ).$$
\end{theorem}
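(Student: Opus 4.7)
The plan is to mirror the proof of Theorem~\ref{main_gapfree_theorem}, replacing the unaccelerated SVRG solver from Theorem~\ref{offline_solver} with the accelerated solver from Theorem~\ref{accelerated_offline_solver}. The key observation is that in the gap-free regime we choose $\lambda=(1+\epsilon/100)\lambda_1$, so that $\lambda-\lambda_1=\Theta(\epsilon\lambda_1)$. Every place in the analysis of Theorem~\ref{accelerated_offline_solver} where the quantity $\gap$ appears, it really only enters through $\lambda-\lambda_1$ (and through $\lambda_1(\bv B^{-1})$, which becomes $\Theta(1/(\epsilon\lambda_1))$). Consequently the accelerated solver, invoked in this setting, achieves constant-factor expected reduction in $\|\cdot\|_{\mb}$ error in time $O\bigl(\tfrac{\nnz(\bv A)^{3/4}(d\nrank(\bv A))^{1/4}}{\sqrt{\epsilon}}\cdot\log(d/\epsilon)\bigr)$, under the assumption $\nnz(\bv A)\le d\nrank(\bv A)/\epsilon^2$ (in the complementary regime the bound of Theorem~\ref{main_gapfree_theorem} is already at least as strong, so we may assume this without loss).

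Given this per-call cost, the overall runtime follows by counting calls. First I would apply Theorem~\ref{burnInGapFree}: starting from $x_0\sim\mathcal{N}(0,\bv I)$, after $T=O(\log(d/\epsilon))$ outer iterations of the approximate shifted-and-inverted power method we have $\bar G(x_T)\le c\sqrt{\epsilon}$ with probability $1-O(1/d^{10})$, provided each outer iteration solves its linear system to expected $\mb$-norm error at most $\frac{\epsilon^3}{3000\,d^{21}}\sqrt{\lambda_d(\mb^{-1})}$. To meet this absolute accuracy target, I invoke the accelerated solver $O(\log(d/\epsilon))$ times in succession: each call halves the expected $\mb$-norm error (in particular the initial point $x=0$ has $\|x-\opt x\|_\mb^2\le\lambda_1(\mb^{-1})$, exactly as in Corollary~\ref{streaming_solver}), so after a logarithmic number of calls the $\poly(\epsilon/d)$ accuracy needed by Theorem~\ref{burnInGapFree} is achieved.

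Multiplying these factors together gives $T\cdot O(\log(d/\epsilon))=O(\log^2(d/\epsilon))$ applications of the accelerated solver, each of cost $O\bigl(\tfrac{\nnz(\bv A)^{3/4}(d\nrank(\bv A))^{1/4}}{\sqrt{\epsilon}}\cdot\log(d/\epsilon)\bigr)$, for a total runtime of $O\bigl(\tfrac{\nnz(\bv A)^{3/4}(d\nrank(\bv A))^{1/4}}{\sqrt{\epsilon}}\cdot\log^3(d/\epsilon)\bigr)$, as claimed. Finally, the lemma preceding Theorem~\ref{burnInGapFree} translates $\bar G(x_T)\le c\sqrt{\epsilon}$ into the Rayleigh-quotient guarantee $x_T^\top\bv\Sigma x_T\ge(1-\epsilon)\lambda_1$, completing the proof. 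A union bound over the $O(\log^2(d/\epsilon))$ solver calls and the random initialization preserves the $1-O(1/d^{10})$ success probability (the logarithmic factor is absorbed into the polynomial, or one inflates the $d^{21}$ in the accuracy requirement by constants).

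The main obstacle will be justifying cleanly that Theorem~\ref{accelerated_offline_solver} does in fact carry over to the gap-free setting: the statement there is phrased in terms of $\gap$ and assumes $\lambda\le(1+\gap/100)\lambda_1$, whereas here we are choosing $\lambda$ directly. I would verify this by tracing through the proof of Theorem~\ref{accelerated_offline_solver}: the variance bound of Lemma~\ref{regularized_variance_lemma} is independent of any gap assumption, the regularization parameter $\gamma$ is chosen purely in terms of $\nnz(\bv A)$, $d$, $\lambda_1$, and $\|\bv A\|_F^2$, and the acceleration factor $\sqrt{\gamma/(\lambda-\lambda_1)}$ depends only on $\lambda-\lambda_1$. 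Substituting $\lambda-\lambda_1=\Theta(\epsilon\lambda_1)$ throughout yields the required per-call runtime, with no genuine use of a spectral gap in $\bv A^\top\bv A$.
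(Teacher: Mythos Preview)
Your proposal is correct and follows exactly the approach the paper intends: the paper itself gives no explicit proof of this theorem, simply stating that it follows by combining Theorem~\ref{burnInGapFree} with the accelerated solver of Theorem~\ref{accelerated_offline_solver}. Your write-up fills in precisely the details the paper omits, including the observation that the $\gap$ dependence in Theorem~\ref{accelerated_offline_solver} enters only through $\lambda-\lambda_1$ and thus becomes $\epsilon$, the counting of $O(\log(d/\epsilon))$ outer iterations times $O(\log(d/\epsilon))$ inner solver calls times the per-call cost, and the handling of the regime assumption $\nnz(\bv A)\le d\nrank(\bv A)/\epsilon^2$.
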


\section{Acknowledgements}
Sham Kakade acknowledges funding from the Washington Research
Foundation for innovation in Data-intensive Discovery.

\bibliographystyle{alpha}
\bibliography{refs}

\newcommand{\etalchar}[1]{$^{#1}$}
\begin{thebibliography}{PBMW99}

\bibitem[AC09]{ailon2009fast}
Nir Ailon and Bernard Chazelle.
\newblock The fast {J}ohnson-{L}indenstrauss transform and approximate nearest
  neighbors.
\newblock {\em SIAM Journal on Computing}, 39(1):302--322, 2009.

\bibitem[BDF13]{balsubramani2013fast}
Akshay Balsubramani, Sanjoy Dasgupta, and Yoav Freund.
\newblock The fast convergence of incremental {PCA}.
\newblock In {\em \NIPS{2013}}, pages 3174--3182, 2013.

\bibitem[BJNP13]{birnbaum2013minimax}
Aharon Birnbaum, Iain~M Johnstone, Boaz Nadler, and Debashis Paul.
\newblock Minimax bounds for sparse {PCA} with noisy high-dimensional data.
\newblock {\em Annals of Statistics}, 41(3):1055, 2013.

\bibitem[Bot10]{bottou2010large}
L{\'e}on Bottou.
\newblock Large-scale machine learning with stochastic gradient descent.
\newblock In {\em Proceedings of COMPSTAT}, pages 177--186. Springer, 2010.

\bibitem[CLM{\etalchar{+}}15]{cohen2015uniform}
Michael~B Cohen, Yin~Tat Lee, Cameron Musco, Christopher Musco, Richard Peng,
  and Aaron Sidford.
\newblock Uniform sampling for matrix approximation.
\newblock In {\em \ITCS{2015}}, pages 181--190, 2015.

\bibitem[CR15]{csiba2015primal}
Dominik Csiba and Peter Richt{\'a}rik.
\newblock Primal method for {ERM} with flexible mini-batching schemes and
  non-convex losses.
\newblock {\em \arXiv{1506.02227}}, 2015.

\bibitem[CW13]{clarkson2013low}
Kenneth~L Clarkson and David~P Woodruff.
\newblock Low rank approximation and regression in input sparsity time.
\newblock In {\em \STOC{2013}}, pages 81--90, 2013.

\bibitem[FGKS15a]{frostig2014competing}
Roy Frostig, Rong Ge, Sham~M Kakade, and Aaron Sidford.
\newblock Competing with the empirical risk minimizer in a single pass.
\newblock In {\em \COLT{2015}}, pages 728--763, 2015.

\bibitem[FGKS15b]{frostig2015regularizing}
Roy Frostig, Rong Ge, Sham~M Kakade, and Aaron Sidford.
\newblock Un-regularizing: approximate proximal point and faster stochastic
  algorithms for empirical risk minimization.
\newblock In {\em \ICML{2015}}, 2015.

\bibitem[GH15]{garber2015fast}
Dan Garber and Elad Hazan.
\newblock Fast and simple {PCA} via convex optimization.
\newblock {\em \arXiv{1509.05647}}, 2015.

\bibitem[HP14]{hardt2014noisy}
Moritz Hardt and Eric Price.
\newblock The noisy power method: A meta algorithm with applications.
\newblock In {\em \NIPS{2014}}, pages 2861--2869, 2014.

\bibitem[JKM{\etalchar{+}}15]{jin2015robust}
Chi Jin, Sham~M Kakade, Cameron Musco, Praneeth Netrapalli, and Aaron Sidford.
\newblock Robust shift-and-invert preconditioning: Faster and more sample
  efficient algorithms for eigenvector computation.
\newblock {\em \arXiv{1510.08896}}, 2015.

\bibitem[Jol02]{jolliffe2002principal}
Ian Jolliffe.
\newblock {\em Principal component analysis}.
\newblock Wiley Online Library, 2002.

\bibitem[JZ13]{johnson2013accelerating}
Rie Johnson and Tong Zhang.
\newblock Accelerating stochastic gradient descent using predictive variance
  reduction.
\newblock In {\em \NIPS{2013}}, pages 315--323, 2013.

\bibitem[Kor03]{koren2003spectral}
Yehuda Koren.
\newblock On spectral graph drawing.
\newblock In {\em Computing and Combinatorics}, pages 496--508. Springer, 2003.

\bibitem[LG14]{le2014powers}
Fran{\c{c}}ois Le~Gall.
\newblock Powers of tensors and fast matrix multiplication.
\newblock In {\em Proceedings of the 39th International Symposium on Symbolic
  and Algebraic Computation}, pages 296--303. ACM, 2014.

\bibitem[LMH15]{lin2015catalyst}
Hongzhou Lin, Julien Mairal, and Zaid Harchaoui.
\newblock A universal catalyst for first-order optimization.
\newblock {\em \arXiv{1506.02186}}, 2015.

\bibitem[MCJ13]{mitliagkas2013memory}
Ioannis Mitliagkas, Constantine Caramanis, and Prateek Jain.
\newblock Memory limited, streaming {PCA}.
\newblock In {\em \NIPS{2013}}, pages 2886--2894, 2013.

\bibitem[MM15]{Musco2015}
Cameron Musco and Christopher Musco.
\newblock Randomized block krylov methods for stronger and faster approximate
  singular value decomposition.
\newblock In {\em \NIPS{2015}}, 2015.

\bibitem[NJW02]{ng2002spectral}
Andrew~Y Ng, Michael~I Jordan, and Yair Weiss.
\newblock On spectral clustering: Analysis and an algorithm.
\newblock In {\em \NIPS{2002}}, pages 849--856, 2002.

\bibitem[PBMW99]{page1999pagerank}
Lawrence Page, Sergey Brin, Rajeev Motwani, and Terry Winograd.
\newblock The {P}age{R}ank citation ranking: bringing order to the {W}eb.
\newblock 1999.

\bibitem[Saa92]{saad1992numerical}
Yousef Saad.
\newblock {\em Numerical methods for large eigenvalue problems}.
\newblock SIAM, 1992.

\bibitem[Sha15a]{shamir2015convergence}
Ohad Shamir.
\newblock Convergence of stochastic gradient descent for {PCA}.
\newblock {\em \arXiv{1509.09002}}, 2015.

\bibitem[Sha15b]{shamir2015fast}
Ohad Shamir.
\newblock Fast stochastic algorithms for {SVD} and {PCA}: Convergence
  properties and convexity.
\newblock {\em \arXiv{1507.08788}}, 2015.

\bibitem[Sha15c]{shamir2015stochastic}
Ohad Shamir.
\newblock A stochastic {PCA} and {SVD} algorithm with an exponential
  convergence rate.
\newblock In {\em \ICML{2015}}, pages 144--152, 2015.

\bibitem[Spi07]{spielman2007spectral}
Daniel~A Spielman.
\newblock Spectral graph theory and its applications.
\newblock In {\em null}, pages 29--38. IEEE, 2007.

\bibitem[SRO15]{sa2015global}
Christopher~D Sa, Christopher Re, and Kunle Olukotun.
\newblock Global convergence of stochastic gradient descent for some non-convex
  matrix problems.
\newblock In {\em \ICML{2015}}, pages 2332--2341, 2015.

\bibitem[SS15]{shalev2015sdca}
Shai Shalev-Shwartz.
\newblock {SDCA} without duality.
\newblock {\em \arXiv{1502.06177}}, 2015.

\bibitem[Tro15]{tropp2015introduction}
Joel~A Tropp.
\newblock An introduction to matrix concentration inequalities.
\newblock {\em \arXiv{1501.01571}}, 2015.

\bibitem[VW04]{vempala2004spectral}
Santosh Vempala and Grant Wang.
\newblock A spectral algorithm for learning mixture models.
\newblock {\em Journal of Computer and System Sciences}, 68(4):841--860, 2004.

\bibitem[Wil12]{williams2012multiplying}
Virginia~Vassilevska Williams.
\newblock Multiplying matrices faster than {C}oppersmith-{W}inograd.
\newblock In {\em \STOC{2012}}, pages 887--898, 2012.

\end{thebibliography}

\appendix

\section{Appendix}

\begin{lemma}[Eigenvector Estimation via Spectral Norm Matrix Approximation]\label{spectral_error_conversion}
Let $\bv{A}^\top\bv{A}$ have top eigenvector $1$, top eigenvector $v_1$ and eigenvalue gap $\gap$. Let $\bv{B}^\top \bv{B}$ be some matrix with $\norm{\bv{A}^\top\bv{A}-\bv{B}^\top\bv{B}}_2 \le O(\sqrt{\epsilon} \cdot \gap)$. Let $x$ be the top eigenvector of $\bv{B}^\top \bv{B}$. Then:
\begin{align*}
|x^\top v_1 | \ge 1 - \epsilon ~.
\end{align*}
\end{lemma}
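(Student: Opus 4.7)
The plan is to combine a crude Rayleigh-quotient bound with Lemma~\ref{lem:ray_to_evec} as a preliminary step, and then bootstrap using the decomposition of $x$ along $v_1$ and $v_1^\perp$ to sharpen the resulting $\sqrt{\epsilon}$ bound to the target $\epsilon$ bound. Write $\Delta \defeq \bv{B}^\top \bv{B} - \bv{A}^\top \bv{A}$, so the hypothesis is $\norm{\Delta}_2 \leq C\sqrt{\epsilon}\cdot\gap$ for some sufficiently small absolute constant $C$ (which the $O(\cdot)$ hides).

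First I would prove the preliminary bound. Since $x$ is the top eigenvector of $\bv{B}^\top \bv{B}$ we have $x^\top \bv{B}^\top \bv{B} x \geq v_1^\top \bv{B}^\top \bv{B} v_1$, which after adding and subtracting $\Delta$ on both sides and using $v_1^\top \bv{A}^\top \bv{A} v_1 = 1$ yields
\[
x^\top \bv{A}^\top \bv{A} x \;\geq\; 1 + v_1^\top \Delta v_1 - x^\top \Delta x \;\geq\; 1 - 2\norm{\Delta}_2.
\]
For $C$ small enough the right-hand side is at least $1 - \gap/2$, so Lemma~\ref{lem:ray_to_evec} applies and gives $|v_1^\top x| \geq \sqrt{1 - 2\norm{\Delta}_2/\gap}$, i.e.\ a crude $O(\sqrt{\epsilon})$ bound. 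The key gain here is that from now on I know $x$ is already close to $v_1$.

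Next I would bootstrap. Decompose $x = \alpha v_1 + \beta w$ with $w\perp v_1$, $\norm{w}_2 = 1$, $\alpha^2+\beta^2 = 1$. Since $w\perp v_1$ and $v_1$ is the top eigenvector of $\bv{A}^\top \bv{A}$, $w^\top \bv{A}^\top \bv{A} w \leq \lambda_2 = 1 - \gap$, so $x^\top \bv{A}^\top \bv{A} x \leq 1 - \gap\beta^2$. Combining with the inequality $x^\top \bv{B}^\top \bv{B} x \geq v_1^\top \bv{B}^\top \bv{B} v_1$ from before gives
\[
\gap\cdot \beta^2 \;\leq\; x^\top \Delta x - v_1^\top \Delta v_1.
\]
The right-hand side expands as $-\beta^2 v_1^\top \Delta v_1 + \beta^2 w^\top \Delta w + 2\alpha\beta\, v_1^\top \Delta w$, which is bounded in absolute value by $2\beta^2 \norm{\Delta}_2 + 2|\alpha\beta|\norm{\Delta}_2 \leq 4|\beta|\norm{\Delta}_2$. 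Therefore $|\beta| \leq 4\norm{\Delta}_2/\gap \leq 4C\sqrt{\epsilon}$, so $\alpha^2 \geq 1 - 16C^2\epsilon$ and $|x^\top v_1| = |\alpha| \geq \sqrt{1 - 16C^2\epsilon} \geq 1 - 16C^2\epsilon$. Choosing $C$ so that $16C^2 \leq 1$ (and absorbing this into the hidden constant in the hypothesis) finishes the proof.

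The main obstacle is the factor-of-two slack in the naive Rayleigh-quotient bound $\lambda_1 - x^\top \bv{A}^\top\bv{A} x \leq 2\norm{\Delta}_2$: plugging this directly into Lemma~\ref{lem:ray_to_evec} only yields $|x^\top v_1| \geq 1 - O(\sqrt{\epsilon})$, which is a square-root off from the claim. The bootstrap step is essential: once we know $x$ is close to $v_1$, the perturbation quantity $x^\top \Delta x - v_1^\top \Delta v_1$ can be bounded linearly in $|\beta|$ rather than by the worst-case $2\norm{\Delta}_2$, and this linear-in-$|\beta|$ bound is exactly what converts the $\sqrt{\epsilon}$ error into the claimed $\epsilon$ error. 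Alternatively, one could invoke a Davis--Kahan $\sin\theta$ inequality to obtain $\sin\theta(x,v_1) \leq \norm{\Delta}_2/\gap$ directly, and then square to get the same conclusion; I prefer the self-contained bootstrap since it relies only on Lemma~\ref{lem:ray_to_evec} already proved in the paper.
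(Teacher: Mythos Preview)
Your proof is correct and essentially the same as the paper's. Both decompose $x = \alpha v_1 + \beta w$ with $w\perp v_1$, use the variational inequality $x^\top \bv{B}^\top\bv{B}x \ge v_1^\top \bv{B}^\top\bv{B}v_1$, and extract $|\beta| = O(\norm{\Delta}_2/\gap)$ from the resulting quadratic inequality; the paper phrases this with the $\bv{B}^\top\bv{B}$ quadratic forms directly rather than via $\Delta = \bv{B}^\top\bv{B}-\bv{A}^\top\bv{A}$, but the algebra is the same.

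One remark: your preliminary step invoking Lemma~\ref{lem:ray_to_evec} is actually unnecessary. In your bootstrap you bound $x^\top \Delta x - v_1^\top \Delta v_1$ by $2\beta^2\norm{\Delta}_2 + 2|\alpha\beta|\norm{\Delta}_2 \le 4|\beta|\norm{\Delta}_2$ using only $|\alpha|,|\beta|\le 1$, which holds trivially; you never use the crude $O(\sqrt{\epsilon})$ closeness from the first step. The paper's proof is exactly your ``bootstrap'' step on its own. So the argument you describe as essential to upgrade $\sqrt{\epsilon}$ to $\epsilon$ is really the entire proof, not a refinement of a weaker bound.
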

\begin{proof}
We can any unit vector $y$ as $y = c_1 v_1 + c_2 v_2$ where $v_2$ is the component of $x$ orthogonal to $v_1$ and $c_1^2 + c_2^2 = 1$. We know that
\begin{align*}
v_1^\top \bv{B}^\top \bv{B} v_1 &= v_1^\top \bv{A}^\top\bv{A}v_1-v_1^T(\bv{A}^\top\bv{A}-\bv{B}^\top\bv{B})v_1\\
1 -\sqrt{\epsilon} \gap &\le v_1^\top \bv{B}^\top \bv{B} v_1\le 1 +\sqrt{\epsilon} \gap
\end{align*}
Similarly we can compute:
\begin{align*}
v_2^\top \bv{B}^\top \bv{B}  v_2 &= v_2^\top \bv{A}^\top\bv{A}v_2-v_2^T(\bv{A}^\top\bv{A}-\bv{B}^\top\bv{B})v_2\\
1 -\gap -\sqrt{\epsilon} \gap &\le v_2^\top \bv{B}^\top \bv{B} v_2 \le 1 -\gap +\sqrt{\epsilon} \gap.
\end{align*}
and 
\begin{align*}
|v_1^\top \bv{B}^\top \bv{B}  v_2| &= |v_1^\top \bv{A}^\top\bv{A}v_2-v_1^T(\bv{A}^\top\bv{A}-\bv{B}^\top\bv{B})v_2|\\
&\le \sqrt{\epsilon} \gap.
\end{align*}

We have $x^\top \bv{B}\bv{B}^\top x = c_1^2 (v_1^\top \bv{B}^\top \bv{B} v_1) + c_2^2 (v_2^\top \bv{B}^\top \bv{B} v_2) + 2c_1c_2 \cdot v_2^\top \bv{B}^\top \bv{B} v_1$. 

We want to bound $c_1 \ge 1-\epsilon$ so $c_1^2 \ge 1- O(\epsilon)$. Since $x$ is the top eigenvector of $\bv{BB}^\top$ we have:
\begin{align*}
x^\top \bv{B}\bv{B}^\top x &\ge v_1^\top \bv{B}\bv{B}^\top v_1\\
c_2^2 (v_2^\top \bv{B}^\top \bv{B} v_2) + 2c_2 v_2^\top \bv{B}^\top \bv{B} v_1 &\ge (1-c_1^2)v_1^\top \bv{B}\bv{B}^\top v_1\\
2\sqrt{1-c_1^2} \sqrt{\epsilon}\gap &\ge (1-c_1^2)\left (v_1^\top \bv{B}\bv{B}^\top v_1 -v_2^\top \bv{B}\bv{B}^\top v_2 \right )\\ 
\frac{1}{\sqrt{1-c_1^2}} &\ge \frac{(1-2\sqrt{\epsilon})\gap}{2\sqrt{\epsilon}\gap}\\
\frac{1}{1-c_1^2} &\ge \frac{1-5\sqrt{\epsilon}}{4\epsilon}
\end{align*}

This means we need have $1-c_1^2 \le O(\epsilon)$  meaning $c_1^2 \ge 1-O(\epsilon)$ as desired.
\end{proof}

\begin{lemma}[Inverted Power Method progress in $\ell_2$ and $\bv{B}$ norms]\label{same_progress}
	Let $x$ be a unit vector with $\inprod{x,v_1} \neq 0$ and let $\xtilde = \mb^{-1}w$, i.e. the power method update of $\mb^{-1}$ on $x$. Then, we have both:
	\begin{align}\label{lb_progress}
\frac{\norm{\bv{P}_{v_1^{\perp}}\xtilde}_{\mb}}{\norm{\bv{P}_{v_1}\xtilde}_{\mb}} \le \frac{\lambda_2(\bv{B}^{-1})}{\lambda_1(\bv{B}^{-1})} \cdot \frac{\norm{\bv{P}_{v_1^{\perp}}x}_{\mb}}{\norm{\bv{P}_{v_1}x}_{\mb}}
	\end{align}
	and 
	\begin{align}\label{l2_progress}
\frac{\norm{\bv{P}_{v_1^{\perp}}\xtilde}_{2}}{\norm{\bv{P}_{v_1}\xtilde}_{2}} \le \frac{\lambda_2(\bv{B}^{-1})}{\lambda_1(\bv{B}^{-1})} \cdot \frac{\norm{\bv{P}_{v_1^{\perp}}x}_{2}}{\norm{\bv{P}_{v_1}x}_{2}}
	\end{align}
\end{lemma}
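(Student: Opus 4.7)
The plan is to prove both inequalities by a direct eigenbasis decomposition, since $\mb = \lambda\mI - \bv{A}^\top\bv{A}$ shares its eigenvectors with $\bv{A}^\top\bv{A}$ (and hence with $\mb^{-1}$). Write $x = \sum_{i} \alpha_i v_i$ with $\alpha_1 \neq 0$ by assumption. Because $v_i$ is an eigenvector of $\mb^{-1}$ with eigenvalue $\lambda_i(\mb^{-1})$, applying one step of inverse power iteration gives $\xtilde = \mb^{-1}x = \sum_i \alpha_i \lambda_i(\mb^{-1}) v_i$. The projections onto $v_1$ and $v_1^\perp$ just split this sum into its first term and remaining terms respectively.

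For inequality~\eqref{lb_progress}, I would use the formula for $G$ already derived in~\eqref{gequiv}: since $v_i$ is also an eigenvector of $\mb$, we have $\norm{\bv{P}_{v_1}\xtilde}_\mb^2 = \alpha_1^2 \lambda_1(\mb^{-1})^2 \cdot \lambda_1(\mb) = \alpha_1^2 \lambda_1(\mb^{-1})$ and analogously $\norm{\bv{P}_{v_1^\perp}\xtilde}_\mb^2 = \sum_{i\ge 2}\alpha_i^2 \lambda_i(\mb^{-1})$. Factoring out $\lambda_2(\mb^{-1})$ from the numerator and $\lambda_1(\mb^{-1})$ from the denominator gives a leading factor of $\lambda_2(\mb^{-1})/\lambda_1(\mb^{-1})$ times the quantity $\sqrt{\sum_{i\ge 2}\alpha_i^2/\lambda_i(\mb^{-1})}/\sqrt{\alpha_1^2/\lambda_1(\mb^{-1})}$, which is exactly $G(x) = \norm{\bv{P}_{v_1^\perp}x}_\mb/\norm{\bv{P}_{v_1}x}_\mb$. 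This is essentially the content of Theorem~\ref{thm:powermethod} restated, so no new work is required.

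For inequality~\eqref{l2_progress}, the same decomposition gives $\norm{\bv{P}_{v_1}\xtilde}_2 = |\alpha_1|\,\lambda_1(\mb^{-1})$ and $\norm{\bv{P}_{v_1^\perp}\xtilde}_2^2 = \sum_{i\ge 2}\alpha_i^2 \lambda_i(\mb^{-1})^2$. Bounding $\lambda_i(\mb^{-1}) \le \lambda_2(\mb^{-1})$ for $i\ge 2$ inside the square root and factoring it out yields the bound $(\lambda_2(\mb^{-1})/\lambda_1(\mb^{-1}))\cdot \sqrt{\sum_{i\ge 2}\alpha_i^2}/|\alpha_1|$, and the remaining ratio equals $\norm{\bv{P}_{v_1^\perp}x}_2/\norm{\bv{P}_{v_1}x}_2$ because the $v_i$ are orthonormal.

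There is no real obstacle: the lemma is just the statement that, in the eigenbasis, one step of inverse power iteration shrinks every off-top coefficient relative to the top coefficient by at least $\lambda_2(\mb^{-1})/\lambda_1(\mb^{-1})$, and both the $\ell_2$ and $\mb$ norms diagonalize in this basis (with weights $1$ and $\lambda_i(\mb)$ respectively). The only mild care needed is to note that the $\mb$-weights $\lambda_i(\mb)$ and the $\mb^{-1}$ eigenvalues $\lambda_i(\mb^{-1}) = 1/\lambda_i(\mb)$ are arranged in opposite orders, but this does not affect either calculation because in each case the eigenvalue ratio bound $\lambda_i(\mb^{-1})/\lambda_1(\mb^{-1}) \le \lambda_2(\mb^{-1})/\lambda_1(\mb^{-1})$ for $i\ge 2$ is all that is used.
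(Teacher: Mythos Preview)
Your proposal is correct and essentially identical to the paper's proof: both note that \eqref{lb_progress} is Theorem~\ref{thm:powermethod}, and both prove \eqref{l2_progress} by expanding $x=\sum_i\alpha_i v_i$ in the eigenbasis, computing the $\ell_2$ norms of the projections of $\xtilde=\sum_i\alpha_i\lambda_i(\mb^{-1})v_i$, and bounding $\lambda_i(\mb^{-1})\le\lambda_2(\mb^{-1})$ for $i\ge 2$.
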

\begin{proof}
\eqref{lb_progress} was already shown in Lemma \ref{thm:powermethod}. We show \eqref{l2_progress} similarly.

	Writing $x$ in the eigenbasis of $\bv{B}^{-1}$, we have $x=\sum_i \alpha_i v_i$ and  $\xtilde = \sum_i \alpha_i \lamiBinv{i}v_i$. Since $\inprod{x,v_1} \neq 0$, $\alpha_1 \neq 0$ and we have:
	\begin{align*}
		\frac{\norm{\bv{P}_{v_1^{\perp}}\xtilde}_{2}}{\norm{\bv{P}_{v_1}\xtilde}_{2}} = \frac{\sqrt{\sum_{i\geq 2} \alpha_i^2 \lambda^2_{i}(\bv{B}^{-1})}}{\sqrt{\alpha_1^2 \lambda^2_{1}(\bv{B}^{-1})}}
		\leq \frac{\lamiBinv{2}}{\lamiBinv{1}} \cdot \frac{\sqrt{\sum_{i\geq 2} \alpha_i^2}}{\sqrt{\alpha_1^2}}
		= \frac{\lamiBinv{2}}{\lamiBinv{1}} \cdot \frac{\norm{\bv{P}_{v_1^{\perp}}x}_{2}}{\norm{\bv{P}_{v_1}x}_{2}}.
	\end{align*}
\end{proof}

\end{document}